\documentclass[12pt,english]{article}
\usepackage[top=1in, bottom=1in, left=1in, right=1in]{geometry}
\usepackage{epsfig,fullpage,wrapfig,amsmath,amssymb,float,titlesec,multirow,enumitem,hhline,makecell}
\usepackage[table]{xcolor}
\usepackage[T1]{fontenc}
\usepackage[colorlinks=true, linkcolor=black, citecolor=black, urlcolor=black, filecolor=black]{hyperref}
\usepackage{bookmark}
\usepackage{xurl}
\usepackage{titling,color,booktabs,caption,subcaption,tcolorbox, amsthm}
\usepackage{setspace}
\usepackage{rotating}

\usepackage{titlesec}
\titlespacing*{\section}{0pt}{10pt plus 2pt minus 2pt}{6pt plus 1pt minus 1pt}
\titlespacing*{\subsection}{0pt}{8pt plus 2pt minus 2pt}{4pt plus 1pt minus 1pt}
\titlespacing*{\subsubsection}{0pt}{6pt plus 2pt minus 2pt}{3pt plus 1pt minus 1pt}
\makeatletter
\def\@listI{\leftmargin\leftmargini \parsep 2pt \topsep 4pt \itemsep 2pt}
\makeatother

\newcommand{\E}{\mathbb{E}}
\newcommand{\Prob}{\mathbb{P}}
\newcommand{\R}{\mathbb{R}}
\newcommand{\argmax}{\operatorname{argmax}}

\usepackage[
  style=authoryear,
  maxcitenames=2,
  mincitenames=1,
  maxbibnames=99,
  minbibnames=99,
  uniquelist=false,
  uniquename=false
]{biblatex}
\newtheorem{theorem}{Theorem}

\newtheorem{proposition}{Proposition}

\newtheorem{definition}{Definition}
\newtheorem{assumption}{Assumption}
\newtheorem{example}{Example}

\usepackage{algorithm}
\usepackage{algorithmic}

\renewcommand{\hat}{\widehat}
\renewcommand{\tilde}{\widetilde}

\begin{document}

\title{\textbf{\large Mechanism Design for Decentralized Risk Detection:\\ Strict Propriety, Network Coalitions, and the Backfiring Mandate}}
\author{Jian Ni, Lecheng Zheng, John R Birge\thanks{Jian Ni: Pamplin College of Business, Virginia Tech, Email: jiann@vt.edu; Lecheng Zheng: Pamplin College of Business, Virginia Tech, Email: lecheng@vt.edu; John R Birge: Booth School of Business, University of Chicago, Email: John.Birge@chicagobooth.edu}}
\date{}

\maketitle
\thispagestyle{empty}

\begin{singlespace}
\begin{abstract}
\sloppy
Competing firms that share a population of risky customers face a decentralized risk detection problem in which each firm holds fragmentary information whose aggregation would generate social value, but private incentives impede truthful sharing. We develop a dynamic mechanism design framework for this setting and identify three strategic frictions that distinguish it from classical mechanism design with decentralized information: compliance moral hazard, adversarial adaptation, and information destruction through intervention. A temporal value assignment (TVA) mechanism credits firms using a strictly proper scoring rule applied to discounted verified outcomes; under stated assumptions, TVA implements truthful posterior reporting as a Bayes--Nash equilibrium (uniquely optimal at each edge in large federations, with $O(1/m)$ shading in finite systems). A network Shapley characterization shows that under edge-additive coalition value, each firm's marginal contribution is proportional to its weighted cross-firm interaction degree, yielding a sharp prescription for coalition design that prioritizes inter-firm volume over firm size. Embedding TVA in a model of competition among firms, we establish a welfare ordering across four regulatory regimes (autarky, voluntary federation, mandated full sharing, TVA) and identify conditions under which information-sharing mandates without compatible incentive design reduce welfare below autarky: a ``backfiring mandate.'' We illustrate the framework on a 1.4M-transaction synthetic anti-money-laundering benchmark; the same machinery extends to platform fraud, cybersecurity threat intelligence, and supply chain risk detection.
\end{abstract}

{\small \noindent \textbf{Keywords:} Mechanism design, Strictly proper scoring rules, Network Shapley value, Decentralized platforms, Federated learning, Information sharing}
\end{singlespace}

\newpage


\section{Introduction}

Decentralized risk detection across competing firms is a market design problem of growing practical importance. Payment platforms detecting cross-merchant fraud, fintech lenders assessing credit risk across origination channels, banks coordinating on illicit financial flows, supply chain participants exchanging counterparty risk signals, and cybersecurity teams sharing threat intelligence all face a common structure: each firm holds fragmentary information whose aggregation would generate substantial social value, but private incentives impede truthful sharing. Information aggregation by a benevolent social planner is unavailable; the market designer must instead operate through a decentralized protocol in which firms train local models on private data, submit reports about their detected risk, and intervene on flagged accounts. The resulting system combines features of multi-sender information aggregation \parencite{myerson1983efficient}, dynamic mechanism design \parencite{bergemann2010dynamic, pavan2014dynamic}, and platform governance, with welfare consequences for all participating firms and their downstream customers.

This paper develops a dynamic mechanism design framework for this setting. Three strategic frictions distinguish the problem from classical mechanism design with decentralized information and require simultaneous treatment. The first is \emph{compliance moral hazard}: firms that share risk signals may trigger costly investigations or reveal detection capabilities to competitors, creating free-riding incentives \parencite{begley2017strategic}. When detection benefits are shared through federation but compliance costs are borne locally, firms face incentives to underreport. Competitive pressure intensifies this moral hazard by raising the opportunity cost of flagging risky customers; a related channel is documented for deposit insurance in \textcite{bao2017could}. The second is \emph{adversarial adaptation}: sophisticated bad actors observe intervention patterns and restructure their behavior in response, requiring detection policies robust to strategic manipulation. This connects to the literature on strategic classification \parencite{hardt2016strategic, dong2018strategic}, with the additional complication that the adversary's adaptation is endogenous to the entire system of firms' reporting and the regulator's intervention. The third is \emph{information destruction through intervention}: acting on detected risk permanently removes nodes and edges from the observation network, creating an exploration--exploitation tradeoff where aggressive intervention improves immediate security but degrades future learning. This echoes optimal stopping problems in the bandit literature \parencite{gittins1979bandit, whittle1988restless}.

We propose a constructive mechanism, \emph{Temporal Value Assignment} (TVA), and develop four results that together characterize when and how decentralized risk detection can be welfare-improving. We illustrate the framework throughout using anti-money laundering (AML) in cross-border banking networks, where the information fragmentation problem is acute and well-documented: \$800 billion to \$2 trillion is laundered annually through legitimate financial channels, less than 1\% is ever recovered \parencite{unodc2025}, and global compliance spending has risen to roughly \$200 billion annually \parencite{amoako2025exploring}. AML offers a setting in which the strategic frictions are simultaneously present, regulatory mandates are explicit (e.g., the EU 6th AML Directive 2018/1673 and FinCEN \S314(b) information-sharing programs), and a public benchmark dataset is available. The same machinery applies to platform fraud detection, cybersecurity threat intelligence, supply chain risk, and other settings where competing firms must collectively detect risky agents without giving up proprietary data.\footnote{The empirical implementation of the detection framework, including the federated graph neural network architecture, hierarchical reinforcement learning intervention policy, and large-scale validation, is developed in a separate companion paper, \textcite{zheng2025networked}. The two papers have non-overlapping primary contributions: the companion develops the empirical machinery, and \emph{this paper} develops the mechanism design framework, the welfare analysis, and the network coalition theory. We use the same benchmark dataset for illustration of theoretical predictions, with quantitative results reported here serving only to illustrate (not to establish) the theoretical claims.}

We develop a formal framework in which heterogeneous firms hold fragmented signals and participate in a federated learning protocol that aggregates local models without sharing raw transaction records. The strategic question is whether parameter reports faithfully reflect local risk assessments or are distorted to minimize compliance costs. We use graph neural networks to represent each firm's local transaction network (architecture details in the Online Appendix), and evaluate the framework in simulation using the IBM AML synthetic benchmark \parencite{altman2023realistic}, comprising 1.4 million transactions across seven markets. Four main results organize the analysis.

The first result is the \emph{incentive design contribution}. We introduce TVA, which credits firms using a strictly proper scoring rule \parencite{gneiting2007strictly} applied to discounted verified outcomes. Strict propriety guarantees that, edge by edge, the truthful posterior is the unique maximizer of expected score; combining this pointwise property with a delay structure that places compliance costs on aggregate signals (rather than individual reports) yields the central incentive-compatibility theorem: under stated assumptions, truthful reporting is a Bayes--Nash equilibrium that is uniquely optimal at each edge in large federations, and an $\epsilon$-BNE with $\epsilon = O(1/m)$ in finite federations of size $m$ (Theorem~\ref{thm:truthful}). The mechanism is self-financing from the rent on confirmed illicit cases, requiring no upfront budget.

The second result is the \emph{welfare analysis}. Embedding the detection game in a logit competition model among firms in the tradition of \textcite{keeley1990monopoly} and the structural industrial organization of financial markets \parencite{bao2017could, an2023index}, we establish a welfare ordering across four regulatory regimes (autarky, voluntary federation, mandated full sharing, TVA). The Backfiring Mandate Proposition (Proposition~\ref{thm:welfare_ordering}) identifies conditions under which voluntary federation without incentive design reduces welfare \emph{below} autarky: when competition is sufficiently intense, strategic underreporting produces biased global models worse than honest local models while still imposing compliance costs. Above a threshold of competitive intensity, even mandated full sharing falls below the TVA equilibrium. In calibrated simulations, mandatory sharing without TVA reaches only 56\% of first-best welfare, barely above autarky (54\%), while TVA achieves 87\%. This ``backfiring'' result is the paper's main welfare contribution and has direct implications for current AML regulation.

The third result is the \emph{network theory of information value}. Under an edge-additive coalition value, each firm's Shapley contribution to collective detection is proportional to its weighted degree in the inter-firm interaction graph (Proposition~\ref{thm:shapley}). The implication is sharp: coalition design should prioritize firms with high inter-firm interaction volume rather than firms that are large in absolute terms. A path-based extension connecting to Bonacich centrality \parencite{ballester2006s} is outlined in the Online Appendix. The result builds on \textcite{myerson1977graphs} and the network economics literature \parencite{jackson1996strategic, jackson2008social, galeotti2010network, bramoulle2014strategic, jackson2005allocation}.

The fourth result is an \emph{operational intervention extension}. Freezing accounts creates a restless bandit problem \parencite{whittle1988restless} on a network: aggressive intervention reduces immediate losses but degrades future detection capability through node removal. We propose a network-adjusted index-based heuristic (Proposition~\ref{thm:gittins}) and a tractable risk memory approximation that performs close to the oracle benchmark in simulation. The construction integrates online learning under adversarial adaptation \parencite{joulani2013online, arora2012online} with submodular optimization bounds \parencite{nemhauser1978analysis}.

\subsection{Related Literature}
\label{sec:literature}

Our mechanism connects to dynamic mechanism design with marginal-contribution transfers \parencite{bergemann2010dynamic, pavan2014dynamic}: TVA~\eqref{eq:credit} is a delayed contribution-based transfer, structurally related to dynamic pivot and VCG-type mechanisms \parencite{groves1973incentives, clarke1971multipart, vickrey1961counterspeculation} but adapted to a non-stationary environment where intervention alters the state space. The truthful elicitation of probabilistic beliefs draws on strictly proper scoring rules \parencite{gneiting2007strictly}; embedding such a rule in the dynamic transfer yields truthful reporting as a BNE uniquely optimal at each edge in large federations (Theorem~\ref{thm:truthful}). On strategic disclosure, our setting parallels the finance literature on underreporting of regulatory exposure \parencite{begley2017strategic, admati2000forcing} and information design in competitive markets \parencite{kamenica2011bayesian, gentzkow2016competition}, though institutions here jointly construct the information structure rather than a single designer choosing it. The competition channel builds on banking theory showing competitive pressure erodes prudential incentives \parencite{keeley1990monopoly, hellmann2000liberalization} and connects to the broader literature on the industrial organization of financial markets that uses structural demand models to assess welfare under competition \parencite{bao2017could, an2023index}. Empirical work on privacy regulation in financial intermediation provides direct motivation for the leakage-cost primitive in our model: \textcite{doerr2026privacy} document that the California Consumer Privacy Act (CCPA) materially altered the bank--fintech competitive landscape in mortgage lending, consistent with our framework's prediction that the design of disclosure regulation interacts with competition in determining welfare.

The welfare analysis relates to the literature on information sharing among competitors \parencite{vives1990trade, gal1985information} and on voluntary disclosure \parencite{admati2000forcing}. The competition--moral hazard interaction connects to the theoretical banking literature on how competitive pressure erodes prudential incentives \parencite{keeley1990monopoly, hellmann2000liberalization, morrison2005crises}. Our Backfiring Mandate Proposition echoes the finding of \textcite{bao2017could} that well-intentioned regulatory interventions can intensify rather than mitigate moral hazard when competitive pressure is strong: in our setting, mandating federation without incentive design encourages free-riding on others' detection effort in the same way that deposit insurance can reduce banks' incentive for prudent risk management \parencite{keeley1990monopoly, hellmann2000liberalization}. What is novel is the explicit welfare ordering across four regulatory regimes and the identification of network centrality, rather than institution size, as the key determinant of coalition efficiency.

The network Shapley value builds on the foundational work of \textcite{myerson1977graphs} on cooperative games in graph structures and the network economics literature \parencite{jackson1996strategic, jackson2008social}. A path-based extension of our Shapley characterization (Online Appendix) connects to the key-player result of \textcite{ballester2006s}, who show that the agent whose removal most reduces aggregate Nash equilibrium activity is the one with the highest intercentrality. Our core Proposition~\ref{thm:shapley} uses the weighted-degree structure that follows from the edge-additive coalition value, while the path-based extension relates to Bonacich centrality. Further foundations are in \textcite{galeotti2010network}, \textcite{bramoulle2014strategic}, and \textcite{jackson2005allocation}. On the machine learning side, the strategic reporting frictions we study connect to the strategic classification literature \parencite{hardt2016strategic, dong2018strategic} and to federated learning \parencite{mcmahan2017communication, li2020federated}. The graph neural network architecture underlying the detection model builds on foundational GNN work \parencite{kipf2016semi, velickovic2018graph}. The applied graph-based AML detection literature \parencite{weber2019anti, weber2018scalable, zheng2025cluster, zheng2024multiview} treats detection as a pure prediction problem; our contribution is providing game-theoretic foundations in which network position is a first-class strategic variable: institutions choose how much to contribute to the graph-based detection system, and their network centrality determines both their incentive constraints and their coalition value.

The remainder of this paper is organized as follows. Section~\ref{sec:federated} introduces the TVA mechanism, the incentive analysis with a quantitative illustration on the IBM AML synthetic benchmark, and adversarial robustness. Section~\ref{sec:heterogeneous} characterizes coalition formation and the network theory of information value. Section~\ref{sec:competition} embeds the detection game in banking competition and establishes the welfare ordering across regulatory regimes. Section~\ref{sec:intervention} develops the operational intervention extension under additional dynamic assumptions, including simulation evidence. Section~\ref{sec:conclusion} concludes. All proofs are in the Appendix; additional empirical and technical material is in the Online Appendix.


\section{Model and Mechanism Design}

\subsection{Federated Detection Framework}
\label{sec:federated}

The financial system consists of $m$ institutions indexed by $i \in \{1, \ldots, m\}$ operating over discrete time periods $t \in \{1, \ldots, T\}$. Each institution $i$ observes a local transaction network $G^{i,t} = (V^{i,t}, E^{i,t}, X^{i,t}_V, X^{i,t}_E)$, where $V^{i,t}$ is the set of accounts active at institution $i$ in period $t$, $E^{i,t} \subseteq V^{i,t} \times V^{i,t}$ is the set of fund transfers (edges), $X^{i,t}_V \in \R^{|V^{i,t}| \times d_V}$ is the matrix of node (account) features, and $X^{i,t}_E \in \R^{|E^{i,t}| \times d_E}$ is the matrix of edge (transaction) features. Cross-institution transactions create edges between institutions: if account $u \in V^{i,t}$ sends funds to $v \in V^{j,t}$ with $i \neq j$, the edge $(u,v)$ appears in both $E^{i,t}$ and $E^{j,t}$. Each edge $e \in E^{i,t}$ carries a binary label $y_e \in \{0,1\}$ indicating whether it is illicit. Labels are confirmed with delay: the true label for an edge observed at time $t$ may not be determined until $t + \tau$ for some $\tau \geq 0$, reflecting the time required for regulatory investigation, court proceedings, or account seizure to establish that a transaction was illicit.\footnote{In AML practice, label confirmation follows the Suspicious Activity Report (SAR) investigation process in which the institution flags the activity, FinCEN or a national financial intelligence unit reviews it, and a determination is made, typically over weeks to months. The delay $\tau$ captures this institutional timeline.} The class distribution is severely imbalanced: $\Prob(y_e = 1) < 0.01$ in typical financial data.

Each institution $i$ maintains a graph-based detection model $f^i_\theta: G^{i,t} \to [0,1]^{|E^{i,t}|}$ that maps the transaction network to risk scores $\hat{y}^{i,t}_e$ for each edge. We implement $f^i_\theta$ using graph neural networks (GNNs) \parencite{kipf2016semi, DBLP:conf/iclr/XuHLJ19}, which aggregate information from local transaction neighborhoods through learned message-passing, followed by temporal attention across time windows to capture evolving laundering patterns, and an edge-level multi-layer perceptron for illicit transaction classification. To address class imbalance, we adopt focal loss \parencite{lin2017focal}, which dynamically reweights training samples toward hard-to-classify minority cases. Full architectural details are provided in the Online Appendix.

The key feature of the protocol is privacy preservation: each institution $i$ trains its local model $f^i_\theta$ on $G^{i,t}$ for a fixed number of local epochs, then transmits model parameters $\theta^i$ (not raw data) to a central coordinator that aggregates via weighted averaging, $\theta^{\text{global}} = \sum_{i=1}^m w_i \theta^i$, and distributes the updated global model back to all institutions \parencite{mcmahan2017communication}. This process repeats for $R$ communication rounds. Training is supervised: institution $i$ uses the local graph $G^{i,t}$ together with all confirmed labels $\{y_e : e \in E^{i,t'}, t' \leq t\}$ available up to period $t$. Importantly, institutions never share raw transaction data $(G^{i,t}, y^{i,t})$: only model parameters. Yet the parameters institutions submit are themselves chosen strategically, and the strategic incentives behind those choices are the focus of the next subsection.

\subsection{Strategic Environment and Incentive Alignment}

The federated detection protocol described above defines what each institution can submit, but not what each institution will submit. Reporting is a strategic choice: institutions face private benefits from accurate detection (rents from confirmed cases) and private costs from heightened compliance and information leakage. To analyze how these forces interact, we model the interaction as a repeated game among three types of players: \textit{financial institutions} (information senders), \textit{adaptive adversaries} (money launderers), and a \textit{regulator} (decision maker). The game unfolds over discrete time periods $t = 1, \dots, T$, with decentralized information, delayed feedback, and endogenous adaptation by adversaries to past interventions.

\noindent\textbf{Institutions (Information Senders).}
Each institution $i$ observes its local transaction network $G^{i,t}$ and submits a report consisting of model parameters $\vartheta_i \in \Theta$ to a central coordinator. A reporting strategy $\rho_i : G^{i,t} \rightarrow \Theta$ maps local observations to reported parameters. The submitted parameters induce \emph{edge-level posterior probabilities} $\hat{y}^{i,t}_e \equiv f_{\vartheta_i}(e; G^{i,t}) \in [0,1]$ on each edge $e \in E^{i,t}$, which are the mechanism-relevant quantities entering the TVA credit rule~\eqref{eq:credit}. The truthful benchmark strategy is
\[
    \rho_i^*(G^{i,t}) = \theta^i,
    \quad \text{where }
    \theta^i = \arg\min_{\theta} \mathcal{L}(f_\theta, G^{i,t}),
\]
for a suitable loss function $\mathcal{L}$ (e.g., focal loss). Under truthful reporting, the induced posteriors $\hat{y}^{i,t}_e = f_{\theta^i}(e; G^{i,t})$ are the institution's best Bayesian estimates of the illicit probability $\Pr(y_e = 1 \mid G^{i,t}, \theta^i)$: that is, institutions report calibrated beliefs. Strategic deviations, such as reporting $\vartheta_i \neq \theta^i$, manifest as distorted posteriors at the edge level, and the mechanism analysis therefore operates equivalently on parameters $\vartheta_i$ or on the induced posteriors $\hat{y}^{i,t}$. Institutions may deviate by underreporting risk (scaling posteriors below their calibrated values) to free-ride on others' detection efforts, injecting parameter noise (which translates to noise in posteriors), or withholding participation.\footnote{That model parameters can expose local training data is well-established in the federated learning literature; see \textcite{zhu2019deep} and \textcite{melis2019exploiting} on gradient inversion and membership inference attacks. The mutual information term $\kappa_i \cdot I(\theta^i; G^{i,t})$ in equation~\eqref{eq:bank_utility} provides a formal measure of this leakage.}

\noindent\textbf{Adversaries (Money Launderers).}
Adversaries observe the history of regulatory interventions $\{a^1, \ldots, a^t\}$, where $a^s$ indicates which transaction edges were frozen at time $s$. Using this feedback, sophisticated launderers infer the detection policy and adapt through transaction splitting, temporal manipulation, route switching, and account replacement. Formally, the adversary chooses the next-period transaction graph $G^{t+1}$ to solve
\begin{equation}
\max_{G^{t+1}} \;
\mathbb{E}[\text{laundered value}]
    - \Prob(\text{detected} \mid \text{policy history}) \cdot C,
    \label{eq:adversary}
\end{equation}
where $C$ denotes the cost of detection, including account freezing and asset seizure.

\noindent\textbf{Regulator (Decision Maker).}
The regulator observes aggregated risk scores $\{\hat{y}_e^t\}$ produced by the global model and selects intervention actions $a_e \in \{\text{monitor}, \text{freeze}\}$ for suspicious transaction edges. Freezing prevents immediate illicit losses but removes edges from future transaction graphs, degrading subsequent learning.

Within each period $t$, the interaction proceeds as follows: (1)~institutions observe local transaction graphs and submit (possibly distorted) reports; (2)~the coordinator aggregates reports and updates the global model, and the regulator computes risk scores and selects interventions; (3)~adversaries observe interventions and adapt for period $t+1$; (4)~true labels are revealed with delay.

\noindent\textbf{Social Planner Benchmark.}
Consider a social planner with access to all local transaction graphs $\{G^{i,t}\}_{i=1}^m$. The planner jointly optimizes detection and intervention to maximize discounted social welfare:
\begin{equation}
\max_{\theta, \{a_e^t\}}
\mathbb{E}\left[
\sum_{t=1}^T \gamma^t
\left(
\sum_{e: a_e^t = \text{freeze}} C_e y_e
- \alpha_1 \sum_{e: a_e^t = \text{freeze}} (1 - y_e)
- \alpha_2 \sum_{e: a_e^t = \text{monitor}} y_e
\right)
\Bigg| \mathcal{F}_t \right],
\label{eq:social}
\end{equation}
where $\theta$ denotes the global detection model parameters that determine the risk scores $\hat{y}_e^t$ used in intervention decisions, $a_e^t \in \{\text{monitor}, \text{freeze}\}$ is the intervention action for edge $e$ at time $t$, $\mathcal{F}_t$ is the information available at time $t$ comprising all observed graphs and confirmed labels up to $t$, $C_e$ is the prevented illicit value from freezing edge $e$, and $\alpha_1$, $\alpha_2$ are Type I and Type II error costs.

In practice, however, all local transaction graphs are not shared across institutions due to privacy and competitive concerns. Each institution $i$ instead maximizes a private utility function:
\begin{equation}
u_i(\rho_i, \rho_{-i})
=
\mathbb{E}[\text{detection quality}]
- \beta_i \cdot \mathbb{E}[\text{local compliance costs}]
- \kappa_i \cdot I(\theta^i; G^{i,t}),
\label{eq:bank_utility}
\end{equation}
where $I(\theta^i; G^{i,t})$ denotes the mutual information between reported parameters and local data, capturing information leakage costs. This divergence between institutions and the social planner creates incentives for underreporting, delayed signaling, and strategic noise injection, leading to systematically suboptimal collective detection.

Truthful reporting imposes two costs on institution $i$: (i)~a \textit{direct compliance cost}, since strong suspicion signals trigger more local freezing decisions and the associated investigations; and (ii)~an \textit{information leakage cost}, since reporting $\theta^i$ reveals local transaction patterns to competitors and potentially to adversaries. Meanwhile, detection benefits are dispersed across all institutions through reduced systemic risk and regulatory penalties. This creates incentives for strategic underreporting.

\begin{example}[Strategic Underreporting]
Consider institution $A$ in a federation of $m$ banks, with three edges having true posteriors $q_e \in \{0.6, 0.7, 0.8\}$. Under the Brier score, truthful reporting yields expected credit $2.39 C$; unilateral underreporting to $[0.48, 0.56, 0.64]$ reduces this to $2.357 C$ by strict propriety. Under Assumptions~\ref{ass:compliance_cost} and~\ref{ass:no_dominant}, underreporting reduces global detection sensitivity on $A$'s edges, lowering expected investigations and hence compliance cost; however, because $A$'s unilateral deviation shifts the aggregated signal by only $O(1/m)$, this induced compliance saving is itself $O(c_A/m)$ to first order and vanishes in large federations. For moderately large $m$, the first-order credit loss $0.033 C$ dominates the $O(1/m)$ compliance saving. Coordinated underreporting by all institutions would collapse aggregate detection quality, a collective-action failure that TVA prevents by making individual truthfulness a strict best response at each edge.
\end{example}

This example shows that decentralized institutions face a misalignment between marginal social benefits and marginal private costs, yielding strategic underreporting as a rational equilibrium. The key question is: \textit{can we design a mechanism that aligns institutional incentives with the social objective?}

\noindent\textbf{Temporal Value Assignment.}
We introduce a \textit{Temporal Value Assignment} (TVA) mechanism \parencite{sutton1984temporal} that explicitly rewards institutions for early and accurate warnings. The coordinator maintains a dynamic credit account $\pi_t^i$ for each institution $i$. When a transaction edge $e$ is confirmed illicit at time $t_{\text{confirm}}$, institution $i$ receives credit based on its historical predictions:
\begin{equation}
\pi^i_{t_{\text{confirm}}}
=
\pi^i_{t_{\text{confirm}}-1}
+
\sum_{t < t_{\text{confirm}}}
\gamma^{t_{\text{confirm}} - t}
\cdot
C_e
\cdot
S\!\left(\hat{y}^{i,t}_e, y_e\right),
\label{eq:credit}
\end{equation}
where $\gamma \in (0,1)$ discounts delayed signals, $y_e \in \{0,1\}$ is the confirmed label, and $S: [0,1] \times \{0,1\} \to \R$ is a \emph{strictly proper scoring rule} \parencite{gneiting2007strictly}. Our baseline specification is the (negative) Brier score $S(\hat{y}, y) = 1 - (\hat{y} - y)^2$, which is strictly proper, bounded in $[0,1]$, and has intuitive economic interpretation as accuracy rent. The log score $S(\hat{y}, y) = y \log \hat{y} + (1-y)\log(1-\hat{y})$ is an equivalent alternative.

The defining property of a strictly proper scoring rule is that, for any distribution $q \in [0,1]$ over the outcome $y$,
\begin{equation}\label{eq:propriety}
    q = \arg\max_{\hat{y} \in [0,1]} \mathbb{E}_{y \sim \mathrm{Ber}(q)}\!\left[S(\hat{y}, y)\right],
\end{equation}
with the maximum uniquely attained at $\hat{y} = q$. Applied edge-by-edge, this means an institution's expected credit on edge $e$ is strictly maximized by reporting $\hat{y}^{i,t}_e = \Pr(y_e = 1 \mid G^{i,t}, \theta^i)$: the true posterior probability of illicit activity given the institution's information. Institutions with higher accumulated credit receive tangible benefits: reduced regulatory penalties, preferential access to global model updates, and advantages in future coordination games.

By explicitly pricing the temporal value of information, TVA creates a first-mover advantage for early detection. Institutions internalize the social benefit of timely warnings, making truthful and prompt reporting a best response. As a result, the mechanism makes truthful reporting robust to the main classes of strategic misreporting, as formalized below.

To ground TVA as an economic primitive: the coordinator (regulator or industry consortium) can observe confirmed illicit transactions and each institution's historical risk scores. The credit account $\pi^i_t$ is contractible: it enters directly into the institution's regulatory penalty schedule or access to shared intelligence, as specified in equation~\eqref{eq:bank_profit}. The commitment assumption is that the coordinator pre-commits to the credit rule~\eqref{eq:credit} before institutions choose their reporting strategies; this is analogous to a regulatory commitment to reward early suspicious activity reports, which is feasible under existing AML frameworks such as FinCEN's Suspicious Activity Report (SAR) programme. TVA is therefore not a literal cash transfer but a penalty rebate or regulatory credit whose present discounted value creates the incentive alignment.

\subsection{Theoretical Analysis}
\label{temporal_value_assignment}

The TVA mechanism described above creates a tradeoff for each participating institution. On one side, the strictly proper scoring rule embedded in the credit formula rewards reports that match the eventual realization, providing a private return to forecast accuracy that is increasing in the volume of confirmed illicit cases. On the other side, an institution that flags more aggressively may face higher compliance costs (more investigations to support) and potential information leakage to competitors. Whether truthful reporting emerges in equilibrium therefore depends on which side dominates: does the scoring-rule rent exceed the marginal compliance and leakage costs that truthful reporting induces?

We show that, under economically interpretable conditions on the regulatory environment, the answer is yes. The key conditions are that label revelation is not selective on an institution's own reports, that compliance costs are driven primarily by aggregated signals rather than individual filings, and that no single institution dominates the aggregate. The first reflects how AML investigations are actually triggered: by audits, subpoenas, or downstream defaults that proceed independently of any one bank's reporting. The second reflects how compliance costs scale: investigation budgets are set against system-wide alerts, not bank-specific filings. The third is a standard non-atomicity condition on coalition size. Under these conditions, an institution's unilateral deviation has only $O(1/m)$ effect on its own compliance cost, so the scoring-rule rent dominates and truthful reporting becomes a best response. We make these conditions precise in the assumptions below and then state the central incentive compatibility result.

\begin{assumption}[Label Revelation Independent of Own Report]
\label{ass:exogenous_labels}
The probability that edge $e$'s true label $y_e$ is eventually revealed does not depend on institution $i$'s own report $\hat{y}^{i,t}_e$. Labels may be revealed through external audits, law enforcement subpoenas, downstream defaults, or investigations initiated outside the consortium; revelation need not occur for every edge, only that the revelation mechanism is not selective on institution $i$'s reports. This ensures the scoring rule in~\eqref{eq:credit} is evaluated on a sample unbiased with respect to $i$'s reporting strategy, preserving strict propriety.
\end{assumption}

\begin{assumption}[Detection Value]
\label{ass:detection_value}
The marginal value of detecting an illicit transaction at time $t$ is $V(t) = V_0 \gamma^t$, where $V_0 > 0$ is the initial value and $\gamma \in (0,1)$ captures depreciation (cumulative losses before detection).
\end{assumption}

\begin{assumption}[Compliance Cost Structure]
\label{ass:compliance_cost}
Compliance costs arise from investigations triggered by the aggregated global detection decision, not directly from institution $i$'s own reports. Let $n_i^{\textit{inv}}(\mathcal{S}, \bar{\rho}) \in \mathbb{Z}_{\ge 0}$ denote the number of edges in institution $i$'s local transaction network that are investigated in the current period, where $\mathcal{S}$ is the participating coalition and $\bar{\rho}$ is the aggregated reporting profile that feeds into the global detection system. The compliance cost is
\[
C_i^{\textit{comp}} = c_i \cdot \mathbb{E}[n_i^{\textit{inv}}(\mathcal{S}, \bar{\rho})],
\]
with $c_i > 0$. This reflects how anti-money laundering investigations work in practice: an institution bears investigation costs on edges flagged by the regulatory or industry detection pipeline, which aggregates many institutions' signals. Because $n_i^{\textit{inv}}$ depends on the aggregated reporting profile $\bar{\rho}$, the marginal effect of institution $i$'s individual report on its own compliance cost is of order $1/m$ in a coalition of size $m$, vanishing in large federations. Under this structure, the report-dependent component of institution $i$'s payoff is, to first order, only the TVA credit and the leakage cost. A small fixed filing cost directly proportional to own reports can be accommodated without affecting Theorem~\ref{thm:truthful}, provided this direct component is of order $O(1/m)$ relative to the scoring-rule accuracy rent; the theorem's $\epsilon$-BNE bound absorbs such first-order direct costs into the $\epsilon$-shading.
\end{assumption}

\begin{assumption}[Information Leakage]
\label{ass:information_leakage}
The cost to institution $i$ of revealing information through parameter reporting is $\kappa_i \cdot I(\theta^i; G^{i,t})$, where $I(\cdot; \cdot)$ is mutual information and $\kappa_i \geq 0$.
\end{assumption}

\begin{assumption}[No Dominant Player and Lipschitz Aggregation]\label{ass:no_dominant}
(i) No institution has more than $O(1/m)$ weight in the aggregated profile: $\max_i w_i^{\textit{agg}} = O(1/m)$. (ii) The mapping from $\bar{\rho}$ to expected investigations $\mathbb{E}[n_i^{\textit{inv}}]$ is Lipschitz continuous.
\end{assumption}
In typical AML consortia no bank exceeds a few percent of aggregate weight; the Lipschitz condition ensures the $O(1/m)$ argument in Theorem~\ref{thm:truthful} carries through to compliance costs. The $O(1/m)$ form can be weakened: Theorem~\ref{thm:truthful} extends to any bounded-share structure $\max_i w_i^{\textit{agg}} \leq \bar{w} < 1$, with the $\epsilon$-BNE shading bound scaled by $\bar{w}/(1-\bar{w})$ rather than by $1/m$.

\begin{theorem}[Bayes--Nash Implementation of Truthful Reporting]
\label{thm:truthful}
Consider the federated reporting game in which each institution $i$ chooses a reporting strategy $\rho_i: G^{i,t} \mapsto \hat{y}^{i,t}$ and receives payoff
\begin{equation}\label{eq:institution_payoff}
    U_i(\rho_i, \rho_{-i}) = \mathbb{E}\!\left[\pi^i_\infty\right] - c_i \cdot \mathbb{E}[n_i^{\textit{inv}}(\mathcal{S}, \bar{\rho})] - \kappa_i \cdot I(\theta^i; G^{i,t}),
\end{equation}
where TVA credit $\pi^i_\infty$ is accumulated according to~\eqref{eq:credit} with a strictly proper scoring rule $S$. Under Assumptions~\ref{ass:detection_value}--\ref{ass:information_leakage}, the truthful reporting profile $\rho^*_i(G^{i,t})_e = \Pr(y_e = 1 \mid G^{i,t}, \theta^i)$ is a Bayes--Nash equilibrium provided:
\begin{equation}
    \frac{V_0}{1 - \gamma} \cdot \Pr(\text{illicit}) > c_i + \kappa_i \cdot I(\theta^i; G^{i,t}) \quad \text{for all } i.
    \label{eq:ic_condition}
\end{equation}
Truthful reporting is a strict pointwise best response at each edge. For finite $m$, it is an $\epsilon$-BNE with optimal unilateral shading $\epsilon^* = O(c_i/(C_e m))$ (vanishing as $m \to \infty$); exact BNE obtains in the large-federation limit, or under a discrete reporting grid with step size exceeding $\epsilon^*$.
\end{theorem}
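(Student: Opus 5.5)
The plan is to fix an institution $i$ and a truthful profile $\rho^*_{-i}$ for the others, and to decompose $U_i$ in \eqref{eq:institution_payoff} into three report-dependent pieces: the TVA credit, the compliance term, and the leakage term. Each piece is analysed edge by edge. For the credit, I write $\mathbb{E}[\pi^i_\infty]$ as a sum over edges $e$ and report dates $t$ of
\[
\gamma^{t_{\text{confirm}}-t} C_e \, r_e \, \mathbb{E}\bigl[S(\hat{y}^{i,t}_e, y_e)\mid G^{i,t}\bigr],
\]
where $r_e$ is the revelation probability. Assumption~\ref{ass:exogenous_labels} makes $r_e$, and the conditional law of $y_e$ given revelation, independent of $\hat{y}^{i,t}_e$. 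Conditional on $G^{i,t}$, $y_e \sim \mathrm{Ber}(q_e)$ with $q_e=\Pr(y_e=1\mid G^{i,t},\theta^i)$, so \eqref{eq:propriety} gives that each summand is uniquely maximized at $\hat{y}^{i,t}_e=q_e$. The sum over edges is separable, so truthful reporting maximizes the credit term pointwise and strictly.

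Next I would bound the deviation gains from the other two terms. By Assumption~\ref{ass:compliance_cost}, compliance depends on $\hat{y}^i$ only through $\bar{\rho}$. Assumption~\ref{ass:no_dominant} then gives weight $w_i^{agg}=O(1/m)$ and a Lipschitz constant $L$, so a shading $\delta_e$ changes $c_i\mathbb{E}[n_i^{inv}]$ by at most $c_i L\, O(1/m)\sum_e|\delta_e|$. For leakage, I would argue that distortions of posteriors either leave $I(\theta^i;G^{i,t})$ unchanged (deterministic relabelling, by data processing) or reduce it. In the latter case the saving is bounded by $\kappa_i I(\theta^i;G^{i,t})$ in total. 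Condition \eqref{eq:ic_condition} handles this. The credit at stake from participation and informative reporting is bounded below by $\sum_t V_0\gamma^t\Pr(\text{illicit})$, using Assumption~\ref{ass:detection_value} to identify $C_e\gamma^{\cdot}$ with $V(t)$, and this geometric series equals the left side of \eqref{eq:ic_condition}. So no deviation that sacrifices the accuracy rent, such as withholding or pure noise, can be profitable, because the maximal combined saving in compliance and leakage is below it.

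For local shading I would use the Brier specification. The credit loss from reporting $q_e-\delta$ is $C_e\gamma^{\cdot} r_e\,\delta^2$, which is quadratic. The compliance saving is at most $c_iL\delta/m$, which is linear. Optimizing $C_e\delta^2-c_i\delta/m$ gives $\epsilon^*=\delta^*=O(c_i/(C_e m))$, with payoff gain $O(c_i^2/(C_e m^2))$. That shows $\epsilon$-BNE, and the limit $m\to\infty$ gives exact BNE. Under a discrete grid with step exceeding $\epsilon^*$, the nearest feasible deviation has quadratic loss exceeding the linear gain, so truth is exactly optimal. For general strictly proper $S$ I would invoke local strong concavity of the expected score near $q_e$ to get the same order.

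The main obstacle is the leakage term. $I(\theta^i;G^{i,t})$ is written in terms of the truthful parameters, not the reported ones, and it is not obviously tied to individual edge posteriors. I would first try to interpret it as the leakage of the reported $\vartheta_i$ and bound any reduction by the whole term, which \eqref{eq:ic_condition} dominates. This makes the leakage comparison a global participation-type inequality rather than a pointwise one. I would state explicitly that the strict pointwise claim concerns the credit/compliance tradeoff at each edge.
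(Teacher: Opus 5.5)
Your route is the paper's own: strict propriety edge by edge (its Step~1), aggregation over edges and dates (Step~2), and an $O(c_i/m)$ Lipschitz bound on the compliance channel plus \eqref{eq:ic_condition} against leakage (Step~3). On the finite-$m$ part you are more careful than the paper. Its Steps~3--4 treat the $O(c_i/m)$ term as dominated by the scoring-rule rent and conclude exact BNE. That is false for small deviations, where the rent is quadratic and the compliance saving linear. Your computation of $\epsilon^*=O(c_i/(C_e m))$ with payoff gain $O(c_i^2/(C_e m^2))$ is what actually delivers the $\epsilon$-BNE clause. Two small fixes there. Since you define $\epsilon^*$ as the maximizer, the grid step must exceed the break-even shading $2\epsilon^*$ (where quadratic loss equals linear saving), not just $\epsilon^*$. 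And ``strict pointwise best response'' can only refer to the credit term alone or to the limit $m\to\infty$, because at finite $m$ your own computation puts the credit/compliance optimum at $q_e-\epsilon^*$.

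The step that fails is your use of \eqref{eq:ic_condition} against uninformative deviations. It is exactly the step the paper takes when it asserts that $\Delta^S_i>\kappa_i I+O(c_i/m)$ follows from \eqref{eq:ic_condition} and \eqref{eq:truthful_bound}. The quantity $\frac{V_0}{1-\gamma}\Pr(\text{illicit})$ bounds the \emph{level} of truthful credit. Most of that level comes from the report-independent constant in $S(\hat y,y)=1-(\hat y-y)^2$, which an institution keeps if it stays in and submits a constant or coarsened report. On each edge truth earns $1-q_e(1-q_e)$ and a constant $k$ earns $1-q_e(1-q_e)-(q_e-k)^2$. So what is forfeited is only the Brier gap $\sum_e w_e\,\mathbb{E}[(q_e-\hat y'_e)^2]$ with $w_e=C_e\sum_t\gamma^{t_{\text{confirm}}-t}r_e$, i.e.\ the resolution of the posteriors. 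For the best constant this is the weighted dispersion of the $q_e$. That dispersion can be arbitrarily small relative to the left side of \eqref{eq:ic_condition} (weakly informative local data), while the leakage saving is the full $\kappa_i I(\theta^i;G^{i,t})$ because a constant report leaks nothing. Your ``credit at stake'' comparison is therefore apt for withholding participation, where the whole credit is lost, but not for uninformative participation. So \eqref{eq:ic_condition} does not deliver the BNE against that deviation.

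There are two ways to repair it. First, you can read \eqref{eq:institution_payoff} literally, so that $I(\theta^i;G^{i,t})$ depends only on the truthful $\theta^i$ and drops out of every best-response comparison. Then \eqref{eq:ic_condition} becomes a mere participation constraint, and your credit/compliance analysis is the entire BNE argument. Second, if leakage depends on the reported $\vartheta_i$ as the paper intends, replace \eqref{eq:ic_condition} by a resolution condition. For every information-reducing $\rho_i'$, the Brier gap above must exceed $\kappa_i[I(\rho_i^*)-I(\rho_i')]$, up to the $O(c_i/m)$ slack your shading analysis already absorbs.

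A smaller point, also glossed in the paper's Step~1. You pull $r_e$, $C_e$ and $\gamma^{t_{\text{confirm}}-t}$ outside $\mathbb{E}[\,\cdot\mid G^{i,t}]$ and apply $\mathrm{Ber}(q_e)$ to the revealed sample. That requires revelation, its timing and $C_e$ to be independent of $y_e$ given $G^{i,t}$, whereas Assumption~\ref{ass:exogenous_labels} only excludes dependence on the report. Otherwise the credit is maximized at a label-reweighted posterior. If credit is paid only on edges confirmed illicit, as the wording around \eqref{eq:credit} suggests, the maximizer is $\hat y=1$.
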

See proof in Appendix~\ref{proof:truthful}.

The strict propriety of $S$ delivers the result pointwise: each edge's expected credit is strictly maximized at the true posterior. Condition~\eqref{eq:ic_condition} ensures the temporal accuracy rent from truthful reporting outweighs the compliance and information-leakage costs, with the discount factor $\gamma$ as the policy lever: higher $\gamma$ broadens the parameter range in which truthful reporting is individually rational. Unlike static VCG transfers, TVA self-finances through endogenous credit accumulation on verified outcomes.

Strict propriety of $S$ rules out pooling and constant-report deviations directly: any constant $\hat{y} = k$ earns credit strictly below truthful reporting on each edge with $q_e \neq k$, with margin scaling as $(q_e - k)^2$. Condition~\eqref{eq:ic_condition} ensures the scoring-rule accuracy premium outweighs the leakage cost $\kappa_i I(\theta^i; G^{i,t})$, which is specific to truthful reporting ($I = 0$ for constant strategies). Appendix~\ref{proof:truthful} formalizes the full argument.

The factor $(1-\gamma)^{-1}$ makes the discount rate pivotal: a higher $\gamma$ (slower depreciation of detection value) raises the left side, broadening the parameter range over which truthful reporting is individually rational. When competitive pressure is strong (Section~\ref{sec:competition}), the right side of~\eqref{eq:ic_condition} acquires an additional competitive cost term, requiring a higher $\gamma^*$ to maintain incentive compatibility.

Next, we characterize the efficiency loss from decentralization. Let $\theta^*$ denote the parameters of the centralized benchmark (social planner with access to all local graphs), and let $\theta^{\text{fed}}$ denote the federated learning solution under truthful reporting.

Standard federated learning convergence results \parencite{li2020federated} further imply that under truthful reporting the federated solution satisfies $\mathcal{L}(\theta^{\text{fed}}) \leq (1 + \epsilon)\mathcal{L}(\theta^*) + O(mLd/N)$, where $\epsilon$ decays exponentially in the number of communication rounds and $N = \sum_i n_i$. The efficiency loss is thus small when the federation is large relative to model complexity: the institutional scale of AML networks (dozens to hundreds of banks) favors this regime. The communication--performance tradeoff is illustrated empirically in the Online Appendix.

\subsection{Quantitative Illustration}\label{sec:empirical}

We illustrate the model's predictions using the IBM AML synthetic benchmark \parencite{altman2023realistic}, which contains over 1.4 million transactions across seven markets (United States, Germany, France, Italy, Spain, China, Rest of World) over 10 days, with illicit transactions representing less than 1\% of volume.\footnote{Dataset available at \url{https://www.kaggle.com/datasets/ealtman2019/ibm-transactions-for-anti-money-laundering-aml}. The figures reported in this section illustrate qualitative predictions of the theoretical analysis (the welfare ordering across reporting strategies). They are not the empirical contribution of this paper. The federated GNN architecture, training procedure, hyperparameter calibration, and full empirical validation are developed in the companion paper \textcite{zheng2025networked}. We report only the comparison between three reporting strategies (independent, underreporting, truthful) holding the underlying detection model fixed, as this comparison directly tests the theoretical predictions of Theorem~1 and Proposition~3.} Data are split temporally: days 1--8 for training and validation, days 9--10 for testing, with seven federated clients corresponding to the seven markets.

We implement three reporting strategies: \textit{Truthful} (banks report true local parameters $\theta^i$), \textit{Underreporting} (banks scale parameters by $0.8$ to reduce sensitivity), and \textit{Independent} (banks train only on local data without federation). Given severe class imbalance we focus on AUPRC (Area Under Precision-Recall Curve) and Type II error (false negative rate).

\begin{table}[h]
\centering
\caption{Detection Performance Under Different Reporting Strategies}
\label{tab:reporting}
\begin{tabular}{lcccc}
\toprule
\textbf{Strategy} & \textbf{AUPRC} & \textbf{AUCROC} & \textbf{Type I Error} & \textbf{Type II Error} \\
\midrule
Independent (No Federation) & 0.432 & 0.978 & 0.028 & 0.219 \\
Underreporting ($\times 0.8$) & 0.452 & 0.979 & 0.041 & 0.198 \\
Truthful Reporting (TVA) & \textbf{0.471} & \textbf{0.984} & 0.057 & \textbf{0.104} \\
\bottomrule
\end{tabular}
\parbox{\textwidth}{\small\textit{Notes}: Simulation on the IBM AML synthetic benchmark. AUPRC is the primary metric given severe class imbalance ($<1\%$ illicit).}
\end{table}

Table~\ref{tab:reporting} shows that truthful reporting outperforms independent learning by 3.9 percentage points in AUPRC and underreporting by 2.1 percentage points, with Type II error reduced by 52.5\% and 47\% respectively. The ordering is consistent with the model's predictions under Theorem~\ref{thm:truthful}, though we emphasize that this is an illustration of comparative performance under exogenously imposed distortions, not a derivation of equilibrium reporting under TVA: underreporting is imposed as a scalar distortion ($\times 0.8$) rather than derived as an equilibrium best response. A decomposition in the Online Appendix attributes 22--34\% of the total improvement across markets to the incentive-alignment channel, with the remainder from data-pooling.


\subsection{Adversarial Robustness}
\label{sec:adversarial}
The previous section established that temporal value assignment implements truthful reporting as a Bayes--Nash equilibrium. We now analyze robustness when adversaries adapt their strategies in response to observed interventions.

\subsubsection*{Adversarial learning model}

Sophisticated adversaries are not passive: they observe the history of enforcement actions and adjust transaction patterns to evade detection. We model this as a repeated game between the regulator and adversary.

At time $t$, the adversary observes the history of frozen accounts $\mathcal{H}_t = \{e : a_e^s = \text{freeze}, s < t\}$, \textit{implied detection threshold}, i.e., $\hat{\tau}_t = \min\{\hat{y}_e : e \in \mathcal{H}_t\}$, and the regulator's behavioral patterns including intervention timing and targeting priorities. 

Based on this history, the adversary can adapt through several mechanisms: \textit{transaction splitting}, breaking large transfers into small ones to fall below detection thresholds; \textit{temporal manipulation}, adjusting transaction timing to avoid detection windows; \textit{route switching}, using different intermediary accounts; and \textit{account replacement}, abandoning compromised accounts for new ones. Each adaptation has a cost $c(\cdot)$ that increases with the degree of deviation from the adversary's preferred strategy.

Formally, at time $t$ the adversary solves:
\begin{equation}
\max_{G^{t+1}} \E[V_{\text{launder}}(G^{t+1})] - \Prob(\text{detected} \mid G^{t+1}, \mathcal{H}_t) \cdot C_{\text{penalty}} - c(G^{t+1}, G^t)
\label{eq:adversary_objective}
\end{equation}
where $V_{\text{launder}}(\cdot)$ is the value of successfully laundered funds and $c(G^{t+1}, G^t)$ measures the cost of adaptation.

\subsubsection*{Regret and equilibrium analysis}
We analyze the regulator's performance using the external regret framework from online learning. Let $\pi$ denote the regulator's policy (mapping risk scores to interventions) and let $\pi^*$ denote the best fixed policy in hindsight. The regret after $T$ periods is:
\begin{equation}
\text{Regret}(T) = \sum_{t=1}^T \mathcal{L}(\pi, G^t) - \min_{\pi' \in \Pi} \sum_{t=1}^T \mathcal{L}(\pi', G^t),
\label{eq:regret}
\end{equation}
where $\mathcal{L}(\pi, G^t)$ is the loss (missed detections + false alarms) under policy $\pi$ at time $t$.
External regret bounds typically assume $\{G^t\}$ is generated by an oblivious adversary. In our setting the adversary is \textit{adaptive}: $G^{t+1}$ depends on past actions through $\mathcal{H}_t$. \textcite{arora2012online} show that sublinear \emph{policy regret}, which compares to what the best fixed policy would have earned had it been deployed throughout, is generally unattainable against adaptive adversaries. We therefore retain the external regret notion in \eqref{eq:regret} but assume the adversary observes interventions only with delay $\delta \geq 1$, bounding its effective memory.

\begin{proposition}[Regret Bound]
\label{thm:regret}
Suppose the adversary adapts with delay $\delta \geq 1$, observing interventions at time $t$ and responding at $t + \delta$. Let $\mathcal{A}$ denote the set of adversarial adaptation strategies. Then temporal value assignment with discount factor $\gamma$ achieves regret:
\begin{equation}
    \text{Regret}(T) = O\left(\sqrt{T |\mathcal{A}| \log |\mathcal{A}|} + \delta T (1 - \gamma)^2\right)
    \label{eq:regret_bound}
\end{equation}
In contrast, fixed-threshold policies achieve regret $\Omega(T)$.
\end{proposition}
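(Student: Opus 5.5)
The plan is to treat the regulator's TVA-weighted intervention rule as an experts algorithm (exponential weights / Hedge) over the finite set $\mathcal{A}$ of adversarial adaptation strategies. Each expert corresponds to a best-response intervention policy against one hypothesized adaptation $A\in\mathcal{A}$. The TVA credits in~\eqref{eq:credit} are the multiplicative weight updates, discounted by $\gamma$. I would split the regret in~\eqref{eq:regret} into two terms: (a) the regret the algorithm would incur against an \emph{oblivious} sequence $\{G^t\}$; and (b) a bias term from the fact that both the feedback and the adversary's response are delayed and the adversary is adaptive.

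For term (a), I would invoke the standard Hedge analysis with losses bounded in $[0,1]$. Because only the loss of the played policy is observed (partial feedback on frozen versus monitored edges), I would use the bandit version (EXP3), which gives $O(\sqrt{T|\mathcal{A}|\log|\mathcal{A}|})$ with the usual learning rate $\eta=\sqrt{\log|\mathcal{A}|/(T|\mathcal{A}|)}$. The delay-$\delta$ assumption is used in the style of \textcite{joulani2013online}. Between $t$ and $t+\delta$ the adversary's graph $G^{t+1}$ depends only on $\mathcal{H}_{t-\delta}$, so conditionally on that history the sequence within each window is oblivious. The standard bound then applies blockwise, with the delay entering only additively.

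For term (b), the idea is that TVA weights a signal observed $\tau$ periods ago by $\gamma^{\tau}$ rather than by $1$. Relative to an undiscounted update this produces a per-period mismatch. I would bound it by expanding $1-\gamma^{\tau}$ around $\gamma=1$ and pairing it with the drift it induces over a delay window of length $\delta$. The aim is to show a per-period contribution of $O(\delta(1-\gamma)^2)$: the first-order term in $(1-\gamma)$ should cancel because the regret comparator uses the same discounting, or because it is centered. Summing over $T$ periods gives $\delta T(1-\gamma)^2$.

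This cancellation is the main obstacle. A naive bound gives $O(\delta T(1-\gamma))$, so I would first try to show that the first-order term is a martingale difference, by Assumption~\ref{ass:exogenous_labels} on label revelation, and that only the second-order remainder survives in expectation. If that fails, I would restate the bias as $O(\delta T(1-\gamma))$, or absorb it by choosing $\gamma$.

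For the lower bound on fixed thresholds, I would construct the adversary explicitly. Against any fixed threshold $\hat\tau$, the adversary eventually learns $\hat\tau$ from $\mathcal{H}_t$ within $\delta$ periods and splits transactions so that every illicit edge scores just below $\hat\tau$. That pattern incurs a constant adaptation cost, which is finite since $c$ is increasing but bounded for this deviation. Each subsequent period then incurs a constant missed-detection loss, whereas the best policy in hindsight, a threshold just below the adversary's chosen level, incurs none. Summing gives a per-period gap bounded below by a constant, hence $\Omega(T)$.
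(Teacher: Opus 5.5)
The gap is the step you flag yourself, and it cannot be closed as you have set it up.

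Your expansion of $1-\gamma^{\tau}$ around $\gamma=1$ treats the mismatch as confined to a window of length $\delta$. But the discount acts on the whole history: losses older than about $1/(1-\gamma)$ are essentially discarded. So even your ``naive'' $O(\delta T(1-\gamma))$ is not available in general.

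None of the proposed cancellations applies either:
\begin{itemize}
\item The comparator in~\eqref{eq:regret} is the best fixed policy under \emph{undiscounted} cumulative loss, so it does not share TVA's discounting.
\item The discarded quantity $\sum_{s<t}(1-\gamma^{t-s})(\ell_{\pi',s}-\ell_{\pi,s})$ is not centered. It is exactly the accumulated evidence that separates the best fixed policy from the others, and it has nonzero drift.
\item Assumption~\ref{ass:exogenous_labels} only says that label revelation does not depend on an institution's own report. It constrains neither the regulator's losses nor the adversary, so it cannot make this term a martingale difference.
\end{itemize}

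In fact, in your framing (exponential weights that discount a loss observed $\tau$ periods ago by $\gamma^{\tau}$), the term is at least of order $T\sqrt{1-\gamma}$ in the worst case. This holds even with full information and an oblivious adversary. Take two policies with i.i.d.\ losses, one better by $\epsilon=c\sqrt{1-\gamma}$ per round. The discounted loss difference has drift of order $\epsilon/(1-\gamma)$ but fluctuations of order $(1-\gamma)^{-1/2}$. For small $c$ it therefore favors the worse policy with probability bounded away from zero, and the learner puts constant expected mass on that policy. Its expected regret is then $\Omega(T\sqrt{1-\gamma})$ once $T(1-\gamma)\gg 1$. With $\delta=1$ and $1-\gamma=T^{-1/2}$ this is $\Omega(T^{3/4})$. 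By contrast, your target $\sqrt{T|\mathcal{A}|\log|\mathcal{A}|}+\delta T(1-\gamma)^2$ and your fallback with $\delta T(1-\gamma)$ are both $O(\sqrt{T})$ for fixed $|\mathcal{A}|$. So neither version of term~(b) can be proved under this reading of TVA.

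The paper's proof does not supply the missing step. It decomposes regret relative to a counterfactual sequence $G^t_{\text{frozen}}$ in which the adversary never adapts. A learning loss on that oblivious sequence is bounded by exponential weights, which matches your term~(a). An adaptation loss then carries the $(1-\gamma)^2$, so in the paper that term prices adaptation rather than discounting bias. One factor $(1-\gamma)$ is the policy's sensitivity to a single period's perturbation. The second is attributed to geometric decay within each $\delta$-window, summed over $T/\delta$ epochs. That second factor is asserted rather than derived: an $O(1-\gamma)$ per-period effect summed over $\delta$ periods gives $O(\delta(1-\gamma))$.

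For external regret your diagnosis is, if anything, the better one. EXP3-type guarantees (e.g.\ EXP3.P) already hold against adaptive adversaries without any delay assumption, so adaptivity alone forces no second term. Blocking is also unnecessary, and with restarts it hurts: summing a per-block bound over $T/\delta$ restarted blocks only gives $T\sqrt{|\mathcal{A}|\log|\mathcal{A}|/\delta}$. Any $\gamma$-dependent term must therefore come from discounting, as you say. The problem is that its worst-case size is at least $T\sqrt{1-\gamma}$, not $\delta T(1-\gamma)^2$.

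Your fixed-threshold construction is more explicit than the paper's Step~4. It is fine once you assume that lowering the threshold slightly does not create a false-alarm loss of comparable size.
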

See proof in Appendix~\ref{proof:regret}.

The regret bound decomposes into two economically distinct channels. The $O(\sqrt{T|\mathcal{A}|\log|\mathcal{A}|})$ term reflects standard online learning: the cost of not knowing in advance which detection strategy is best against a fixed adversary. The $\delta T(1-\gamma)^2$ term captures the additional cost of adversarial adaptation: it is linear in time but shrinks as $\gamma \to 1$, because aggressive temporal discounting functions as a commitment device. When the regulator heavily weights early signals, the adversary's best response is to avoid generating any early signals at all: which is effectively not laundering. Fixed-threshold policies lack this commitment, so adversaries can profitably learn to evade them throughout the horizon, producing $\Omega(T)$ regret.


When adaptation costs are strictly convex and increasing in the degree of deviation, a stronger result follows directly from Proposition~\ref{thm:regret}: there exists a threshold $\bar{\gamma} \in (0,1)$ such that for $\gamma > \bar{\gamma}$, the regulator playing TVA and the adversary not adapting constitutes a subgame-perfect equilibrium (proof in Appendix~\ref{proof:equilibrium}).\label{prop:equilibrium} The intuition is a deterrence argument. A regulator committed to high $\gamma$ acts on early signals before the adversary has time to respond, making adaptation futile. Rational adversaries, anticipating that any restructuring of transaction flows will be caught early regardless, find it cheaper not to adapt at all. The mechanism is self-enforcing: the regulator never needs to carry out its most aggressive interventions in equilibrium because the commitment itself deters the behavior that would trigger them \parencite{fudenberg1991game}.

\subsubsection*{Empirical evidence}

We illustrate Proposition~\ref{thm:regret} and the deterrence result using simulations where an adversary best-responds to observed intervention patterns.

We train detection models on days 1--8 and simulate adversarial adaptation in the test period (days 9--10). The adversary: (1) Regulator deploys detection policy $\pi$ and freezes accounts with $\hat{y}_e \geq \tau$. (2) Adversary observes frozen accounts and infers threshold $\hat{\tau}$. (3) Adversary adapts by: (a) splitting large transactions by factor $1/\alpha$, where $\alpha \geq 1$, (b) reducing transaction frequency by factor $\beta \in [0,1]$. (4) Adversary incurs adaptation cost $c(\alpha, \beta) = (\alpha - 1)^2 + (1 - \beta)^2$. (5) Process repeats with adapted graph $G^{t+1}$. We compare three policies: (1) \textit{Temporal Value ($\gamma = 0.9$):} Our mechanism with strong early detection incentives; (2) \textit{Temporal Value ($\gamma = 0.5$):} Moderate discounting; (3) \textit{Fixed Threshold:} Constant $\tau = 0.5$ without temporal weighting.

\begin{table}[h]
\centering
\caption{Detection Performance Under Adversarial Adaptation}
\label{tab:adversarial}
\begin{tabular}{lccccc}
\toprule
\textbf{Policy} & \textbf{Period 9} & \textbf{Period 10} & \textbf{Degradation} \\
 & \textbf{AUPRC} & \textbf{AUPRC} & (\%) \\
\midrule
\multicolumn{4}{l}{\textit{No Adaptation (Baseline)}} \\
All Policies & 0.474 & 0.469 & 0.0\% \\
\midrule
\multicolumn{4}{l}{\textit{With Adaptive Adversary}} \\
Fixed Threshold & 0.413 & 0.401 & 13.7\% \\
Temporal Value ($\gamma=0.5$) & 0.457 & 0.449 & 3.9\% \\
Temporal Value ($\gamma=0.9$) & 0.468 & 0.464 & 1.2\% \\
\bottomrule
\end{tabular}
\end{table}

Table~\ref{tab:adversarial} shows that temporal value assignment with high $\gamma$ maintains stable performance (only 1.2\% degradation) under adversarial adaptation, while fixed-threshold policies degrade substantially (13.6\%). The moderate discounting ($\gamma = 0.5$) provides intermediate robustness.

Table~\ref{tab:adaptation_choices} shows that when facing temporal value assignment with high $\gamma$, adversaries choose minimal adaptation ($\alpha \approx 1$, $\beta \approx 1$) because the cost exceeds the benefit, consistent with the deterrence result above. Against fixed thresholds, adversaries aggressively split transactions ($\alpha = 2.3$) and reduce frequency ($\beta = 0.65$).

\begin{table}[h]
\centering
\caption{Adversary's Adaptation Strategies}
\label{tab:adaptation_choices}
    \resizebox{\textwidth}{!}{
    \begin{tabular}{lcccc}
    \toprule
    \textbf{Regulator Policy} & \textbf{Transaction Splitting} & \textbf{Frequency Reduction} & \textbf{Adaptation Cost} \\
    & $\alpha$ & $\beta$ & $c(\alpha,\beta)$ \\
    \midrule
    Fixed Threshold & 2.3 & 0.65 & 1.81 \\
    Temporal Value ($\gamma=0.5$) & 1.7 & 0.78 & 0.54 \\
    Temporal Value ($\gamma=0.9$) & 1.2 & 0.91 & 0.05 \\
    \bottomrule
    \end{tabular}}
\end{table}


\section{Coalition Formation and Network Theory of Information Value}
\label{sec:heterogeneous}
In this section, we model federated participation among heterogeneous institutions as a two-stage Bayesian game in which institutions differ in size, compliance cost, signal quality, and competitive sensitivity. Participation generates positive network effects in detection performance but entails compliance and leakage costs, creating endogenous coalition formation. 
The model characterizes the minimum viable coalition, highlights strategic complementarities in participation, and shows that without TVA the federation may unravel due to adverse selection. By internalizing marginal contributions to system-wide detection quality, TVA sustains participation of high-value institutions and stabilizes the cooperative equilibrium.

\subsubsection*{Two-stage participation game}\label{sec:participation_game}

Consider a finite set of institutions $M = \{1,\dots,m\}$. 
Each institution $i$ is characterized by type $\theta_i = (s_i, c_i, q_i, \kappa_i) \sim F \ \text{on }\mathbb{R}_+^4$, where $s_i$ denotes size (data scale), $c_i$ denotes compliance cost, $q_i$ denotes signal quality,
and $\kappa_i$ denotes competitive sensitivity (leakage concern). Types are private information. We assume $s_i$ and $q_i$ are positively correlated, reflecting that larger institutions invest more in compliance and data infrastructure.

In the first stage (participation decision),
Institutions simultaneously choose $d_i \in \{0,1\}$, where $d_i = 1$ indicates participation in the federated learning mechanism and $d_i = 0$ indicates non-participation. Let $\mathcal{S} = \{ i \in M : d_i = 1 \}$ denote the participating coalition. The participation stage is a static Bayesian game under incomplete information about other institutions' types.

In the second stage (reporting game),
Conditional on $\mathcal{S}$, participating institutions jointly train a federated model with TVA-based credit allocation, while non-participating institutions train independently using only their local data. Let $B_i(\mathcal{S})$ denote the detection benefit that institution $i$ derives when the participating coalition is $\mathcal{S}$.

Institution $i$'s participation payoff is
\begin{equation}\label{eq:participation_payoff}
  U_i(d_i = 1 \mid \mathcal{S}) 
  = \underbrace{B_i(\mathcal{S})}_{\text{detection benefit}}
  - \underbrace{c_i \cdot \Delta n_i(\mathcal{S})}_{\text{compliance cost}}
  - \underbrace{\kappa_i \cdot I_i(\mathcal{S})}_{\text{leakage cost}}
  + \underbrace{\pi_i(\mathcal{S})}_{\text{TVA credit}}.
\end{equation}

If $d_i = 0$, institution $i$ receives its outside option
\[
U_i(d_i = 0) = B_i(\{i\}),
\]
which corresponds to training independently.

The net gain from participation is therefore
\[
\Delta U_i(\mathcal{S}) 
= B_i(\mathcal{S}) - B_i(\{i\})
- c_i \Delta n_i(\mathcal{S})
- \kappa_i I_i(\mathcal{S})
+ \pi_i(\mathcal{S}).
\]
Institution $i$ participates if and only if $\Delta U_i(\mathcal{S}) \ge 0$.

\begin{assumption}[Positive Network Effects]\label{ass:network}
For any coalition $\mathcal{S} \subseteq N$ and any institution $j \notin \mathcal{S}$,
\[
B_i(\mathcal{S} \cup \{j\}) - B_i(\mathcal{S}) > 0,
\]
and the marginal benefit is diminishing in coalition size:
\[
\frac{\partial^2 B_i}{\partial |\mathcal{S}|^2} < 0.
\]
That is, adding an additional participating institution $j$ strictly increases $i$'s detection benefit, but at a decreasing rate as the coalition grows.
\end{assumption}

\begin{assumption}[Heterogeneous Outside Options]\label{ass:outside}
The standalone benefit $B_i(\{i\})$ is increasing in $s_i$ and $q_i$. Hence, institutions with larger size or higher signal quality have stronger outside options and require larger coalition gains to participate.
\end{assumption}


\subsubsection*{Minimum viable coalition and adverse selection}
\label{sec:mvc}

\begin{definition}[Minimum Viable Coalition]\label{def:mvc}
$\mathcal{S}^*$ is a minimum viable coalition if (i) $U_i(1 \mid \mathcal{S}^*) \geq U_i(0)$ for all $i \in \mathcal{S}^*$, and (ii) no proper subset satisfies (i).
\end{definition}

The following properties of the minimum viable coalition follow from supermodularity of the detection benefit function (proof in the Online Appendix). First, it exceeds a size threshold: $|\mathcal{S}^*| \geq \underline{m}$, where $\underline{m}$ is determined by the condition $\sum_{i \in \mathcal{S}} s_i q_i \geq \max_i\{c_i + \kappa_i I_i\} / [\Pr(\textit{illicit}) \cdot V_0/(1-\gamma)]$. Second, institutions with high cross-border exposure and low compliance costs join first: a selection pattern consistent with the empirical observation that large internationally active banks tend to be early participants in AML information-sharing programmes. Third, the coalition exhibits strategic complementarity: adding one institution increases the marginal benefit for all remaining members, supporting a thick-market equilibrium. Fourth, without TVA the federation unravels: if $\max_i[B_i(\{i\}) - B_i(\{1,\ldots,m\}) + c_i\Delta n_i + \kappa_i I_i] > 0$, the marginal institution exits, reducing detection quality for remaining members and triggering cascading departures in a dynamic analogous to adverse selection in insurance markets \parencite{akerlof1970market}.

\subsection{Network Structure and Information Value}\label{sec:network_theory}

The analysis in Section \ref{sec:heterogeneous} treats each institution's contribution as depending on its type $(s_i, c_i, q_i, \kappa_i)$ but not on its \emph{position} in the inter-institutional transaction network. In practice, an institution that bridges two otherwise disconnected clusters of markets contributes far more to detection than an equally sized institution embedded within a single cluster. This section formalizes the network determinants of information value.

\subsubsection*{The Inter-Institutional Network}\label{sec:inter_network}

Define the \emph{inter-institutional network} $\mathcal{G} = (\mathcal{N}, \mathcal{E}, W)$, where $\mathcal{N} = \{1,\ldots,m\}$ is the set of institutions, $\mathcal{E} \subseteq \mathcal{N} \times \mathcal{N}$ contains an edge $(i,j)$ whenever institutions $i$ and $j$ share cross-border transactions, and $W: \mathcal{E} \to \R_+$ assigns weights $w_{ij}$ equal to the volume of cross-border transactions between $i$ and $j$. Let $\mathbf{A}$ denote the weighted adjacency matrix with $A_{ij} = w_{ij}$.

The inter-institutional network $\mathcal{G}$ is distinct from the individual transaction graphs $G^{i,t}$: the former captures \emph{which institutions share information boundaries}, while the latter captures within-institution transaction patterns. Detection of cross-border illicit flows requires information from both endpoints of an inter-institutional edge: and thus from both institutions.

\begin{definition}[Cross-Border Detection Function]\label{def:cross_border}
For edge $(i,j) \in \mathcal{E}$, the cross-border detection probability is:
\begin{equation}\label{eq:cross_border}
  p_{ij}(\mathcal{S}) = \begin{cases} p^H_{ij} & \text{if } i \in \mathcal{S} \text{ and } j \in \mathcal{S}, \\ p^L_{ij} & \text{if } i \in \mathcal{S} \text{ or } j \in \mathcal{S} \text{ (but not both)}, \\ p^0_{ij} & \text{if } i \notin \mathcal{S} \text{ and } j \notin \mathcal{S}, \end{cases}
\end{equation}
where $p^H_{ij} > p^L_{ij} > p^0_{ij}$ and $\mathcal{S}$ is the participating coalition.
\end{definition}

The key feature is the \emph{complementarity} between endpoints: having both institutions in the federation ($p^H$) yields strictly higher detection than having only one ($p^L$), which in turn dominates having neither ($p^0$). This complementarity is the network foundation for the supermodularity of the participation game.

\subsubsection*{Network Shapley Value}\label{sec:shapley}

We define the detection value function over coalitions as:
\begin{equation}\label{eq:detection_value}
  V(\mathcal{S}) = \sum_{(i,j) \in \mathcal{E}} w_{ij} \cdot p_{ij}(\mathcal{S}) \cdot C_{ij},
\end{equation}
where $C_{ij}$ is the expected cost of undetected illicit flow on edge $(i,j)$. Following \textcite{myerson1977graphs}, the \emph{network Shapley value} of institution $i$ is:
\begin{equation}\label{eq:shapley}
  \phi_i^{\text{net}} = \sum_{\mathcal{S} \subseteq \mathcal{N}\setminus\{i\}} \frac{|\mathcal{S}|!\,(m-|\mathcal{S}|-1)!}{m!}\,\bigl[V(\mathcal{S}\cup\{i\}) - V(\mathcal{S})\bigr].
\end{equation}

\begin{proposition}[Network Shapley Characterization]\label{thm:shapley}
Under the edge-additive coalition value~\eqref{eq:detection_value}, the network Shapley value of institution $i$ is:
\begin{equation}\label{eq:shapley_degree}
    \phi_i^{\textit{net}} = \frac{1}{2}\sum_{j: (i,j)\in\mathcal{E}} (p^H_{ij} - p^0_{ij}) \cdot w_{ij} \cdot C_{ij}.
\end{equation}
That is, institution $i$'s Shapley value is a weighted-degree measure: it is proportional to the sum, across edges incident to $i$, of the expected detection gain $(p^H_{ij} - p^0_{ij})$ weighted by edge volume $w_{ij}$ and edge cost $C_{ij}$. Institutions with more incident cross-border volume or higher per-edge detection complementarity earn a larger Shapley share.
\end{proposition}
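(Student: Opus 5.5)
The plan is to exploit linearity of the Shapley value in the characteristic function and reduce~\eqref{eq:shapley} to a collection of two-player games, one for each inter-institutional edge. Write $V(\mathcal{S}) = \sum_{(i,j)\in\mathcal{E}} V_{ij}(\mathcal{S})$, where $V_{ij}(\mathcal{S}) = w_{ij} C_{ij}\, p_{ij}(\mathcal{S})$. The marginal contribution $V(\mathcal{S}\cup\{i\}) - V(\mathcal{S})$ in~\eqref{eq:shapley} splits into a sum of edge-wise marginal contributions. Exchanging the sum over coalitions with the sum over edges then gives
\[
\phi_i^{\text{net}} = \sum_{(k,l)\in\mathcal{E}} \phi_i[V_{kl}],
\]
where $\phi_i[V_{kl}]$ is the Shapley value of $i$ in the game $V_{kl}$. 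Note that the baseline $V(\emptyset)=\sum w_{ij}C_{ij}p^0_{ij}$ is generally nonzero. This causes no difficulty, because~\eqref{eq:shapley} is built only from differences $V(\mathcal{S}\cup\{i\})-V(\mathcal{S})$, so it coincides with the Shapley value of the normalized game $V - V(\emptyset)$.

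\emph{Step 1 (edges not incident to $i$).} Fix an edge $(k,l)$ with $i\notin\{k,l\}$. By Definition~\ref{def:cross_border}, $p_{kl}(\mathcal{S})$ depends only on $\mathcal{S}\cap\{k,l\}$. Hence $V_{kl}(\mathcal{S}\cup\{i\}) = V_{kl}(\mathcal{S})$ for every $\mathcal{S}$, so $i$ is a null player in $V_{kl}$ and $\phi_i[V_{kl}]=0$. Only edges incident to $i$ survive.

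\emph{Step 2 (edges incident to $i$).} Fix an edge $(i,j)$. Every player outside $\{i,j\}$ is null in $V_{ij}$, so $V_{ij}$ is effectively a two-player game on $\{i,j\}$. I will compute $\phi_i[V_{ij}]$ directly from~\eqref{eq:shapley} by grouping coalitions $\mathcal{S}\subseteq\mathcal{N}\setminus\{i\}$ according to whether $j\in\mathcal{S}$:
\begin{itemize}
\item If $j\notin\mathcal{S}$, the marginal contribution of $i$ is $w_{ij}C_{ij}(p^L_{ij}-p^0_{ij})$.
\item If $j\in\mathcal{S}$, it is $w_{ij}C_{ij}(p^H_{ij}-p^L_{ij})$.
\end{itemize}
The total Shapley weight on coalitions containing $j$ is the probability that $j$ precedes $i$ in a uniformly random ordering, which is $1/2$; the same holds for coalitions not containing $j$. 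The standard identity $\sum_{s}\binom{m-2}{s}\frac{s!(m-s-1)!}{m!}=\frac12$ handles this bookkeeping, or one can argue directly with random orders. Therefore
\[
\phi_i[V_{ij}] = \tfrac12 w_{ij}C_{ij}\bigl[(p^L_{ij}-p^0_{ij}) + (p^H_{ij}-p^L_{ij})\bigr] = \tfrac12 (p^H_{ij}-p^0_{ij})\,w_{ij}C_{ij}.
\]
The intermediate value $p^L_{ij}$ telescopes away. Summing over $j$ with $(i,j)\in\mathcal{E}$ yields~\eqref{eq:shapley_degree}.

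The step needing the most care is Step 2, specifically the cancellation of $p^L_{ij}$. This cancellation relies on Definition~\ref{def:cross_border} assigning a \emph{single} value $p^L_{ij}$ regardless of which endpoint participates, so that the edge game is symmetric in $i$ and $j$. I will make this explicit. Under asymmetric one-sided detection probabilities the formula would acquire a correction term $\tfrac12(p^{L,i}_{ij}-p^{L,j}_{ij})w_{ij}C_{ij}$.

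As a sanity check, I will verify efficiency. Each edge contributes $\tfrac12(p^H-p^0)wC$ to each of its two endpoints, so the shares sum to $V(\mathcal{N})-V(\emptyset)$, as the Shapley value requires. This confirms the factor $\tfrac12$. One caveat on the statement's wording: if $\mathcal{E}$ is read as a set of ordered pairs containing both $(i,j)$ and $(j,i)$, then $V$ double-counts edges and the constant factor changes accordingly. I will state the result for unordered edges.
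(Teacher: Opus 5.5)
Your proposal is correct and follows the paper's argument: decompose the edge-additive value across edges, observe that only edges incident to $i$ matter, and use the random-order argument (each relative arrival order of $i$ and $j$ has probability $\tfrac12$), so that $p^L_{ij}$ telescopes to give $\tfrac12(p^H_{ij}-p^0_{ij})\,w_{ij}C_{ij}$ per edge. Several things you make explicit are left implicit in the paper: the null-player step for non-incident edges, the combinatorial identity, the reliance on a single symmetric $p^L_{ij}$, and the efficiency check.
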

See proof in the Online Appendix.

The Shapley value~\eqref{eq:shapley_degree} is a weighted-degree measure under the edge-additive coalition value~\eqref{eq:detection_value}. While the mathematical reduction to weighted degree follows directly from edge-additivity, the economic contribution is providing an axiomatic justification for prioritizing inter-institutional transaction volume over total asset size in coalition design. Institutions with larger incident cross-border volume or higher per-edge detection complementarity contribute more to collective detection and therefore deserve a proportionally larger share of federation-generated surplus. Richer centrality notions (betweenness, cut-vertex bridge premia, Bonacich) emerge when the coalition value is extended to reward path-based or flow-based detection, where institution $i$'s contribution depends on paths through $i$ rather than only on edges incident to $i$; we outline this extension in the Online Appendix. For the core analysis, the weighted-degree characterization is sufficient for the coalition-design results that follow.

\subsubsection*{Efficient Coalition Design}\label{sec:efficient_coalition}

The network Shapley value enables an efficient coalition design that improves on the type-based analysis in Section \ref{sec:mvc}.

Proposition~\ref{thm:shapley} immediately pins down the welfare-maximizing coalition: rank institutions by net network value $\phi_i^{\text{net}} - c_i - \kappa_i I_i$ and include institutions until the marginal net value turns negative. Since $\phi_i^{\text{net}}$ depends on weighted cross-border degree, the practical implication is that regulators should prioritize institutions with high inter-institutional transaction volume, rather than simply recruiting the largest banks by total assets. This provides a tractable, data-based criterion for coalition expansion that is absent from existing AML policy guidance.

\subsubsection*{TVA with Network-Adjusted Credit}\label{sec:network_tva}

The network Shapley value motivates a refinement of TVA in which scoring-rule credit rates are scaled by institution-specific weights reflecting network position:
\begin{equation}\label{eq:network_tva}
  \pi^i_{t_{\text{confirm}}} = \pi^i_{t_{\text{confirm}}-1} + \omega_i \sum_{t < t_{\text{confirm}}} \gamma^{t_{\text{confirm}}-t} \cdot C_e \cdot S(\hat{y}^{i,t}_e, y_e),
\end{equation}
where $\omega_i = \phi_i^{\textit{net}}/\bar{\phi}$ is a network position multiplier and $S$ remains a strictly proper scoring rule. Because $\omega_i$ scales the scoring-rule transfer uniformly across reports, strict propriety is preserved at the per-edge level, since pointwise strict propriety is invariant to positive affine scaling of $S$. Institutions with higher network contribution receive proportionally more credit per accurate report.

Under network-adjusted TVA with weights $\omega_i$, truthful reporting remains a Bayes--Nash equilibrium, uniquely optimal at each edge in the large-federation limit, if
\begin{equation}\label{eq:network_ic}
  \frac{\omega_i \cdot V_0}{1-\gamma}\cdot \Pr(\textit{illicit}) > \kappa_i \cdot I(\theta^i; G^{i,t}) \quad \forall\, i,
\end{equation}
with the compliance term vanishing in the large-federation limit as in Theorem~\ref{thm:truthful}. The multiplier $\omega_i$ relaxes the IC constraint for high-contribution institutions and tightens it for low-contribution ones, matching marginal incentive to marginal social contribution. A uniform credit rate would over-reward peripheral institutions relative to their contribution and under-reward bridges, creating an adverse selection pressure that risks losing the most informationally valuable participants.

\subsubsection*{Adversarial Exploitation of Network Structure}\label{sec:adversary_network}

A sophisticated adversary can exploit the inter-institutional network structure by routing illicit flows through \emph{structural holes}: pairs of institutions that do not share a federation link.

\begin{definition}[Structural Hole]\label{def:structural_hole}
A structural hole is a pair $(i,j)$ such that $(i,j) \notin \mathcal{E}$ but there exist accounts $u \in V^i$, $v \in V^j$ with a transaction path $u \to w_1 \to \cdots \to w_k \to v$ passing through intermediaries outside the federation.
\end{definition}

\begin{enumerate}[label=(\alph*)]
  \item In equilibrium, the adversary routes a fraction $\lambda^*$ of illicit flows through structural holes, where:
\begin{equation}\label{eq:adversarial_routing}
  \lambda^* = \frac{p^H - p^0}{p^H - p^0 + c_{\text{route}}},
\end{equation}
and $c_{\textit{route}}$ is the additional transaction cost. The adversary's equilibrium exploit is increasing in the number of structural holes and decreasing in federation coverage.
  \item Closing a structural hole, which means adding institution $k$ such that $(i,k) \in \mathcal{E}$ and $(k,j) \in \mathcal{E}$, reduces adversarial exploit by $\Delta\lambda^* \approx (p^H - p^0)(p^H - p^L)/(p^H - p^0 + c_{\textit{route}})^2$, providing a principled criterion for federation expansion.
\end{enumerate}

\begin{proof}[Proof Sketch]
Part (a): The adversary maximizes expected laundered value net of routing cost. The FOC $p^H - p^0 = c_{\text{route}} \cdot \lambda / (1-\lambda)$ yields the result. Part (b): Differentiation with respect to the number of structural holes. 
\end{proof}

\section{Competition, Welfare, and the Backfiring Mandate}\label{sec:competition}
We now analyze how competitive pressure among banks interacts with detection incentives, following the approach of \textcite{bao2017could} in modeling how competition shapes banks' compliance behavior. The key insight is that detection investment imposes customer-facing costs such as monitoring delays, false-positive freezes, and privacy concerns, which can disadvantage a bank relative to less vigilant competitors, creating incentives to underinvest in compliance.

Consider $m$ banks competing for depositors. Depositor $j$'s utility from bank $i$ is:
\begin{equation}\label{eq:depositor_utility}
  u_{ji} = \delta_i + \alpha_r \cdot r_i - \alpha_\phi \cdot \phi_i + \xi_i + \epsilon_{ji},
\end{equation}
where $\delta_i$ captures non-price characteristics, $r_i$ is the deposit rate, $\alpha_r > 0$ is rate sensitivity, $\alpha_\phi \geq 0$ is the \emph{detection externality}: the disutility from heightened monitoring such as delays, false-positive freezes, and privacy loss, $\xi_i$ is unobserved quality, and $\epsilon_{ji}$ is i.i.d.\ Type I extreme value.\footnote{The discrete-choice formulation with Type~I extreme value errors yields the standard multinomial logit demand system widely used in the structural industrial organization of financial markets \parencite{bao2017could, an2023index}. The parameter $\alpha_\phi$ captures customer friction from KYC/AML procedures such as delayed onboarding, false-positive freezes, and settlement delays, a first-order determinant of deposit demand elasticity in recent fintech evidence \parencite{kadamathikuttiyil2025enhancing, amoako2025exploring, doerr2026privacy}. This liability-side channel complements the asset-side charter-value mechanism in \textcite{keeley1990monopoly}.} The parameter $\alpha_\phi$ captures the competitive cost of detection: the loss of marginal deposit demand due to increased monitoring intensity. When $\alpha_\phi > 0$, firms that invest more in detection bear customer-facing costs that erode their market position.

The parameter $\alpha_\phi$ captures a competitive externality distinct from the information-sharing frictions above: even if a bank reports truthfully, it bears the customer-facing costs of its detection investment while rivals free-ride on the systemic risk reduction. Each bank's private return on compliance falls short of the social return.


Bank $i$ jointly chooses deposit rate $r_i$, detection investment $\phi_i$, and reporting strategy $\rho_i$:
\begin{equation}\label{eq:bank_profit}
  \max_{r_i, \phi_i, \rho_i}\; \underbrace{D_i(r_i, \phi_i) \cdot (r_L - r_i)}_{\text{intermediation margin}} - \underbrace{c_i \cdot \phi_i \cdot D_i}_{\text{compliance cost}} + \underbrace{\pi_i(\rho_i, \rho_{-i})}_{\text{TVA credit}} - \underbrace{P_i(\phi_i)}_{\text{regulatory penalty}},
\end{equation}
where $D_i$ is deposit demand, $r_L$ is the lending rate, and $P_i(\phi_i) = p_0(1-\phi_i)^2$ is a convex regulatory penalty.


\begin{proposition}[Competition and Compliance Moral Hazard]\label{prop:competition_erodes}
Suppose depositor utility is given by~\eqref{eq:depositor_utility} with i.i.d.\ Type~I extreme value errors, so market shares $D_i(r_i, \phi_i)$ follow the multinomial logit form. Then under competitive pressure with detection externality $\alpha_\phi > 0$:
\begin{enumerate}[label=(\alph*)]
  \item \textbf{Underinvestment.} Without TVA, the equilibrium detection intensity $\phi^*_i$ is strictly decreasing in $\alpha_\phi$: $\partial \phi^*_i / \partial \alpha_\phi < 0$. In the limit $\alpha_\phi \to \infty$, $\phi^*_i$ converges to a minimal level strictly below the socially optimal $\phi^{\textit{soc}}$.
  \item \textbf{IC correction.} Truthful reporting is implemented as a Bayes--Nash equilibrium under competition if:
\begin{equation}\label{eq:ic_competition}
  \frac{V_0}{1-\gamma}\cdot \Pr(\textit{illicit}) > c_i \phi^*_i + \kappa_i I(\theta^i; G^{i,t}) + \underbrace{\alpha_\phi \big|\tfrac{\partial D_i}{\partial \phi_i}\big| (r_L - r_i)}_{\textit{competitive cost of detection}}.
\end{equation}
The required $\gamma^*$ is higher in more competitive markets.
\end{enumerate}
\end{proposition}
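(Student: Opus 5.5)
The argument has two parts. Part~(a) is a comparative-statics calculation for the logit first-order conditions. Part~(b) augments the pointwise argument of Theorem~\ref{thm:truthful} with one extra cost term.

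For part~(a), we fix rivals' choices and write logit shares as $D_i = e^{v_i}/\sum_k e^{v_k}$ with $v_i = \delta_i + \alpha_r r_i - \alpha_\phi \phi_i + \xi_i$. The logit structure gives $\partial D_i/\partial \phi_i = -\alpha_\phi D_i(1-D_i)$. Without TVA, the $\pi_i$ term does not depend on $\phi_i$. The first-order condition in $\phi_i$ is then
\[
-\alpha_\phi D_i(1-D_i)\bigl[(r_L - r_i) - c_i\phi_i\bigr] - c_i D_i + 2p_0(1-\phi_i) = 0 .
\]
We combine this with the rate condition from $\partial/\partial r_i$, which delivers the standard logit markup $(r_L - r_i) - c_i\phi_i = 1/[\alpha_r(1-D_i)]$. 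Substituting the markup turns the $\phi$-condition into the reduced form
\[
2p_0(1-\phi_i) = c_i D_i + \frac{\alpha_\phi}{\alpha_r} D_i .
\]
The right-hand side is increasing in $\alpha_\phi$. We restrict to the symmetric equilibrium, where $D_i = 1/m$; alternatively, we can bound the feedback of $\alpha_\phi$ on $D_i$, because $D_i$ enters on the same side. The implicit function theorem then gives $\partial\phi_i^*/\partial\alpha_\phi = -D_i/(2p_0\alpha_r) < 0$, provided the second-order condition $2p_0 > 0$ holds, which it does.

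For the limit, the interior solution is eventually driven below zero, so $\phi_i^*$ hits the lower bound of the admissible set. We interpret this as the ``minimal level'' $\underline\phi$. We then show $\phi^{\textit{soc}} > \underline\phi$. The planner does not count the business-stealing component of $\alpha_\phi$: in symmetric logit, aggregate demand reallocation nets out, leaving only a level effect. The planner also internalizes the detection value $V_0\Pr(\text{illicit})/(1-\gamma)$. Its first-order condition therefore has a strictly positive wedge relative to the private one.

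For part~(b), we rerun the proof of Theorem~\ref{thm:truthful} with payoff \eqref{eq:bank_profit} in place of \eqref{eq:institution_payoff}. Edge by edge, strict propriety \eqref{eq:propriety} still makes the truthful posterior the unique maximizer of the credit term. The question is whether some deviation can save more in costs than it loses in credit. We bound the maximal credit advantage of truthful reporting over the relevant deviations, namely constant, pooling, and nonparticipation reports, by the discounted accuracy rent $\frac{V_0}{1-\gamma}\Pr(\text{illicit})$, exactly as in the theorem. We bound the cost side by three terms:
\begin{itemize}
\item the compliance term $c_i\phi_i^*$, where $\phi_i^*$ replaces the normalized cost because costs now scale with detection intensity;
\item the leakage term $\kappa_i I$;
\item a new term. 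Truthful reporting raises the effective detection intensity attributed to $i$, so it loses deposits at rate $|\partial D_i/\partial\phi_i|$, each valued at margin $(r_L - r_i)$ and scaled by $\alpha_\phi$ through the utility channel.
\end{itemize}
A first-order envelope argument in $\phi_i$ gives the competitive cost term in \eqref{eq:ic_competition}. Under that inequality, truthful reporting remains a best response, up to the $O(1/m)$ shading of Theorem~\ref{thm:truthful}.

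The monotonicity claim for $\gamma^*$ follows from solving \eqref{eq:ic_competition} for $\gamma$:
\[
\gamma^* = 1 - \frac{V_0\Pr(\text{illicit})}{\text{RHS}} .
\]
This is increasing in the right-hand side. The right-hand side rises with $\alpha_\phi$: the term $\alpha_\phi|\partial D_i/\partial\phi_i|$ equals $\alpha_\phi^2 D_i(1-D_i)$, which is increasing. The $c_i\phi_i^*$ term falls with $\alpha_\phi$ by part~(a), so we must show the net effect is positive.

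The main obstacle is this net sign. The falling compliance term $c_i\phi_i^*$ partially offsets the rising competitive cost term. I would first bound the offset using the explicit derivative from part~(a):
\[
c_i\,\bigl|\partial\phi_i^*/\partial\alpha_\phi\bigr| = \frac{c_i D_i}{2p_0\alpha_r} .
\]
I would then compare this with $\partial_{\alpha_\phi}\bigl[\alpha_\phi^2 D_i(1-D_i)(r_L - r_i)\bigr]$, under a mild condition such as $\alpha_\phi$ bounded away from zero or $p_0$ not too small. A second point of care is that the interior FOC in part~(a) may fail at the boundary; this is handled by switching to the corner solution.
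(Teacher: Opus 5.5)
\textbf{Part (b): the $\gamma^*$ claim.} Your part~(b) follows the paper's route: rerun Theorem~\ref{thm:truthful} with one extra private cost term. The paper, like you, simply posits that term. Calling it an ``envelope argument'' is a misnomer, since at an optimal $\phi_i$ the envelope theorem removes first-order effects through $\phi_i$ rather than creating them.

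The genuine problem is the step you leave open, the net sign for $\gamma^*$. Under your full comparative statics this sign cannot be established for all $\alpha_\phi>0$, because it is wrong near zero. The competitive term $\alpha_\phi^2 D_i(1-D_i)(r_L-r_i)$ has zero $\alpha_\phi$-derivative at $\alpha_\phi=0$, because the margin $r_L-r_i=c_i\phi_i^*+1/[\alpha_r(1-D_i)]$ stays bounded. By contrast, $c_i\,\partial\phi_i^*/\partial\alpha_\phi=-c_iD_i/(2p_0\alpha_r)$ does not vanish. So whenever $\phi_i^*$ is interior, the right-hand side of \eqref{eq:ic_competition} is strictly decreasing for small $\alpha_\phi$, and so is $\gamma^*=1-V_0\Pr(\textit{illicit})/\mathrm{RHS}$. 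Taking $p_0$ large only shrinks this negative derivative; it does not change its sign. At best you get the claim above a threshold in $\alpha_\phi$, for example once $\phi_i^*$ sits at its lower bound and $c_i\phi_i^*$ is constant.

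The paper reaches the unconditional statement by reading it as a partial comparative static. It holds $\phi_i^*$, $D_i$ and $r_i$ fixed in the IC inequality, so only the explicit competitive-cost term moves with $\alpha_\phi$, and monotonicity of $\gamma^*$ in the right-hand side finishes the argument. You must either adopt that reading explicitly or weaken the claim.

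\textbf{Part (a): the limit comparison.} Your reduced form $2p_0(1-\phi_i)=(c_i+\alpha_\phi/\alpha_r)D_i$, obtained by substituting the logit markup, improves on the paper. The paper holds $r_i$ and $D_i$ fixed and applies the implicit function theorem to a first-order condition whose $\phi_i$-derivative, $-2p_0+\alpha_\phi c_iD_i(1-D_i)$, need not be negative. Yours gives an explicit solution in symmetric equilibrium. It also shows that $\phi_i^*$ hits its lower bound at a finite $\alpha_\phi$ and is flat thereafter, so strict decrease holds only on the interior branch.

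Your social benchmark, however, does not deliver the strict limit inequality. In symmetric logit the business-stealing part nets out, and the remaining ``level effect'' is $\partial CS/\partial\phi_i=-(\alpha_\phi/\alpha_r)D_i$, with $CS=\alpha_r^{-1}\log\sum_k e^{v_k}$ and $v_k$ the mean utilities. This is exactly the private competitive cost in your reduced form. If the planner pays it, as your wording suggests, then:
\begin{itemize}
\item the only wedge is the fixed detection-value term;
\item the planner's marginal cost still grows linearly in $\alpha_\phi$;
\item so $\phi^{\textit{soc}}$ also reaches $\underline\phi$ for large $\alpha_\phi$, and ``strictly below $\phi^{\textit{soc}}$'' fails in the limit.
\end{itemize}
The paper avoids this by definition. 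Its $\phi^{\textit{soc}}$ solves the first-order condition with $\alpha_\phi=0$ (the planner ignores the competitive externality), so it is independent of $\alpha_\phi$, and $\underline\phi<\phi^{\textit{soc}}$ requires only that this benchmark be interior. You need that benchmark, or an explicit assumption that the planner disregards the monitoring disutility, for the claim to go through.
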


\begin{proof}[Proof Sketch]
See detailed proof in Appendix~\ref{proof:ic_competition}. Under the logit demand~\eqref{eq:depositor_utility}, $\partial D_i / \partial \phi_i = -\alpha_\phi D_i(1 - D_i) < 0$: more detection reduces own deposit market share. The bank's FOC for $\phi_i$ from~\eqref{eq:bank_profit} is $P_i'(\phi_i) = \partial D_i / \partial \phi_i \cdot (r_L - r_i) - c_i D_i - c_i \phi_i \partial D_i / \partial \phi_i$. Substituting $\partial D_i / \partial \phi_i < 0$ and applying the implicit function theorem yields $\partial \phi^*_i / \partial \alpha_\phi < 0$. Part (b) follows from Theorem~\ref{thm:truthful} with the competitive cost enters the right-hand side of the IC condition as an additional private cost of detection that TVA credit must offset.
\end{proof}

Competition induces firms to underinvest in socially valuable detection when the associated costs are partly borne by customers, a compliance analog of the competition--moral hazard channel \parencite{bao2017could}. Mandating federation without calibrating incentives to offset competitive costs leads firms to comply in form but not substance.

\subsection{Welfare Analysis Across Regulatory Regimes}\label{sec:welfare}
We introduce and compare four regulatory regimes: autarky, mandated full sharing, voluntary federation without incentives, and incentive-compatible federation with TVA. While full sharing maximizes detection externalities, it may impose excessive leakage and compliance costs; voluntary federation without incentives can suffer from strategic underreporting and even underperform autarky. TVA restores alignment by compensating marginal contributions, improving participation and welfare. Optimal mechanism design balances detection gains against compliance, competitive, and adversarial pressures.


\subsubsection*{Four regulatory regimes}\label{sec:regimes}

We compare four institutional arrangements governing information sharing and model training. The welfare comparison requires one additional assumption about the structure of detection benefits.

\begin{assumption}[Reporting Spillovers]\label{ass:spillover}
The aggregate detection benefit is supermodular in institutions' reporting intensities: for any institution $i$ and any reporting profile $m_{-i}$,
\[
\frac{\partial^2 \sum_j B_j}{\partial m_i \partial m_{-i}} > 0.
\]
That is, more informative reporting by one institution raises the marginal social value of informative reporting by others. In addition, equilibrium reporting intensity under Regime C satisfies $m_i^C(\alpha_\phi) \to 0$ as $\alpha_\phi \to \infty$ for all $i$: under sufficiently intense competition, the competitive cost of accurate detection drives reporting toward uninformative levels.
\end{assumption}

\begin{assumption}[Detection Quality Monotonicity]\label{ass:monotone}
The detection benefit $B_i(\mathcal{S})$ evaluated at distorted reports $\{m_j^C\}_{j \in \mathcal{S}}$ is strictly decreasing in the degree of distortion: if $m_j^C < m_j^A$ for all $j$, then $B_i^C < B_i^A$, where $B_i^r$ denotes the detection benefit in regime $r$ at equilibrium reporting intensities.
\end{assumption}

\paragraph{Regime A: Autarky.}
Each institution operates independently and trains its own local detection model using only its proprietary data. 
There is no parameter sharing, no cross-institutional signal aggregation, and no information leakage through reporting.
Formally, for each $i$, $\mathcal S = \{i\}$ and welfare is
\[
W^A = \sum_i \Big( B_i(\{i\}) - c_i\,\mathbb E[n_i^A] \Big),
\]
with $I_i^A = 0$ and $\Phi_i^A = 0$.

\paragraph{Regime B: Mandated Full Sharing.}
All institutions are required to share information and participate in centralized model training.
Reporting is fully informative and participation is compulsory.
Incentive, privacy, and competitive constraints are ignored.
The coalition is $\mathcal S = N$, and reporting maximizes aggregate detection performance:
\[
W^B = \max_{\text{reports}} \sum_i \Big( B_i(N) - c_i\,\mathbb E[n_i^B] - \kappa_i I_i^B - \alpha_\phi \Phi_i^B \Big).
\]
This regime internalizes detection externalities but may impose high leakage and competition costs.

\paragraph{Regime C: Voluntary Federation Without Incentives.}
Institutions may voluntarily join a federated learning coalition and report model updates.
However, no transfer or contribution-based compensation (TVA) is provided.
Reporting choices are strategic and privately chosen to maximize individual payoffs:
\[
\Pi_i^C = B_i(\mathcal S) - c_i\,\mathbb E[n_i^C] - \kappa_i I_i^C - \alpha_\phi \Phi_i^C.
\]
Equilibrium may feature underreporting or partial participation due to privacy and competitive concerns.

\paragraph{Regime D: Incentive-Compatible Federation (TVA).}
Institutions participate in federated learning with a transfer mechanism that allocates credit based on marginal contribution.
The TVA mechanism aligns private incentives with social detection value:
\[
\Pi_i^D = B_i(\mathcal S) - c_i\,\mathbb E[n_i^D] - \kappa_i I_i^D - \alpha_\phi \Phi_i^D + \mathrm{TVA}_i.
\]
Transfers are designed so that truthful or higher-quality reporting is privately optimal.
This regime mitigates underreporting distortions while preserving decentralized participation.

\subsubsection*{Welfare ordering}\label{sec:welfare_ordering}

The central result of this section is our welfare ordering theorem, which we name the \emph{Backfiring Mandate Proposition} to highlight its principal policy implication: information-sharing mandates that ignore strategic incentives can reduce welfare \emph{below} the autarky benchmark.

\begin{proposition}[Backfiring Mandate]\label{thm:welfare_ordering}
Suppose Assumptions~\ref{ass:detection_value}--\ref{ass:outside}, \ref{ass:spillover}, and \ref{ass:monotone} hold. The welfare ordering across regulatory regimes satisfies:
\begin{enumerate}[label=(\alph*)]
  \item If $\alpha_\phi = 0$: $W^A \leq W^C \leq W^D \leq W^B$. Incentive-compatible federation (TVA) weakly dominates voluntary federation without incentives.
  \item If $\alpha_\phi > 0$ and competition is sufficiently intense: $W^C < W^A < W^D \leq W^B$. Voluntary federation without incentive alignment is strictly worse than autarky.
  \item There exists $\bar{\alpha}_\phi > 0$ above which $W^B < W^D$: mandated full sharing can reduce welfare relative to incentive-compatible federation when competitive and leakage costs overwhelm detection gains.
\end{enumerate}
\end{proposition}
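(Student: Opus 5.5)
The plan is to write welfare in every regime as $W^r=\sum_i\bigl(B_i^r - c_i\mathbb{E}[n_i^r]-\kappa_i I_i^r-\alpha_\phi\Phi_i^r\bigr)$. TVA credits are pure transfers and net out of $W^D$, since the scheme is self-financing from rents on confirmed cases. The regimes then differ only through the coalition $\mathcal{S}$ and the equilibrium reporting intensities $m_i^r$. I will take $m_i^A$ to be the honest local intensity, set $m_i^B=m_i^D=$ truthful by Theorem~\ref{thm:truthful} (with IC under competition supplied by Proposition~\ref{prop:competition_erodes}(b)), and let $m_i^C$ be the private best response without transfers. Each part then reduces to comparing these terms as $\alpha_\phi$ varies.

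\emph{Part (a)} ($\alpha_\phi=0$). The competitive term vanishes. I will get $W^A\le W^C$ from a revealed-preference argument: autarky is available in the participation stage, since $d_i=0$ yields $B_i(\{i\})$. Any equilibrium of Regime C therefore gives each participant at least its autarky payoff, and Assumption~\ref{ass:network} guarantees that participation by others does not lower non-participants' benefits. For $W^C\le W^D$, I will use Theorem~\ref{thm:truthful} to obtain $m^D\ge m^C$ pointwise. Combined with the supermodularity in Assumption~\ref{ass:spillover}, this means that raising reporting toward truthful increases $\sum_j B_j$ by at least the private gain. That social gain must be weighed against the extra leakage $\kappa_i(I^D-I^C)$, and I will bound the leakage by the IC condition~\eqref{eq:ic_condition}. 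Finally, $W^D\le W^B$ holds by construction, because $W^B$ is the maximum over reports with $\mathcal{S}=N$ and the TVA profile is feasible there.

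\emph{Part (b)}. I will use the limit in Assumption~\ref{ass:spillover}: as $\alpha_\phi\to\infty$, $m_i^C\to 0$, so the $m_j^C$ fall below $m_j^A$ for all $j$. Assumption~\ref{ass:monotone} then gives $B_i^C<B_i^A$. Meanwhile Regime C still pays leakage $\kappa_i I_i^C$ and compliance costs on participating members, which I will bound below by a positive constant using the participation cost structure. Hence $W^C<W^A$ for $\alpha_\phi$ above some threshold. For $W^A<W^D$, I will argue that TVA's truthful reports plus strict network gains $p^H>p^0$ (Definition~\ref{def:cross_border}) give a detection surplus exceeding the costs whenever~\eqref{eq:ic_competition} holds. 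That inequality makes $\frac{V_0}{1-\gamma}\Pr(\text{illicit})$ dominate all private costs including the competitive one, and I will choose $\gamma$ accordingly.

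\emph{Part (c)} is the main obstacle. Mandated full sharing forces maximal reporting and $\mathcal{S}=N$, so $W^B$ pays $\alpha_\phi\Phi_i^B$ with $\Phi_i^B$ at its maximum, whereas TVA lets institutions choose $\phi_i$ and participation optimally. I will show that $W^B-W^D$ is affine in $\alpha_\phi$ with slope $-\sum_i(\Phi_i^B-\Phi_i^D)$, and then argue that this slope is strictly negative because Proposition~\ref{prop:competition_erodes}(a) gives $\Phi^D<\Phi^B$ when $\alpha_\phi>0$. The intercept at $\alpha_\phi=0$ is nonnegative by part (a) but finite, since detection gains are bounded by $\sum w_{ij}C_{ij}$. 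An intermediate-value argument then yields $\bar\alpha_\phi$. The delicate point is that the weak inequality $W^D\le W^B$ in (b) appears to clash with this reversal. I will reconcile them by letting ``sufficiently intense'' in (b) refer to the range where $W^C<W^A$ already holds but $\alpha_\phi<\bar\alpha_\phi$, and by making explicit that the ordering in (b) holds on that interval.
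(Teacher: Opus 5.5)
Your outline follows the same route as the paper's appendix proof: a welfare decomposition with TVA netting out, revealed preference for $W^A\le W^C$, Assumptions~\ref{ass:spillover}--\ref{ass:monotone} for (b), and bounded detection gains against a linearly growing competitive cost for (c). As written, however, it is internally inconsistent.

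\textbf{(a) versus (b).} Your argument for $W^A\le W^C$ never uses $\alpha_\phi=0$. In Regime C there are no transfers, so each participant's private payoff $\Pi_i^C$ is exactly its welfare summand. As you yourself assert, a non-participant receives its autarky summand, with $I_i=\Phi_i=0$. Individual rationality then gives $W^C\ge W^A$ at \emph{every} $\alpha_\phi$, which contradicts (b). If the residual costs you list really left participants below autarky, they would exit, Regime C would collapse to autarky, and you would get $W^C=W^A$, not $W^C<W^A$. Proving (b) requires an ingredient absent from your plan that severs individual rationality from welfare. One option is participation in C that is not actually optional (the mandated variant C$'$ of the simulations, which is what the ``mandate'' label refers to). Another is an externality through which rivals' federation pushes a defector's payoff below its autarky summand. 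The paper's sketch does not supply this either, since its feasibility argument for $W^A\le W^C$ is equally independent of $\alpha_\phi$; this is a hole you must fill yourself.

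Two related problems affect (a) and (b). Your positive lower bound on leakage in C fails in the very limit you invoke: as $m_i^C\to 0$, reports become uninformative and $I_i^C\to 0$ (the paper notes $I=0$ for constant strategies). Also, the IC conditions cannot carry the welfare comparisons you assign them. Both \eqref{eq:ic_condition} and \eqref{eq:ic_competition} weigh private costs against the private TVA rent, which you have netted out of welfare. They therefore do not show that the \emph{social} detection gain from moving reports to the truthful corner exceeds the added leakage, and TVA can overshoot the social optimum when $\kappa_i$ is large. This undermines both $W^C\le W^D$ in (a) and $W^A<W^D$ in (b).

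\textbf{(a) versus (c).} Your specification of Regime B changes between parts. In (a) you obtain $W^D\le W^B$ ``by construction'': $W^B$ is a maximum over reports with $\mathcal{S}=N$, its objective already contains $-\alpha_\phi\Phi_i^B$, and the TVA outcome is feasible in it. That argument does not use $\alpha_\phi=0$, so under this reading $W^D\le W^B$ for all $\alpha_\phi$ and (c) is false. In (c) you instead let B force maximal reporting while D chooses $\phi_i$ and participation. This makes (c) possible but destroys the proof in (a). With leakage costs, fully informative reporting need not maximize welfare even at $\alpha_\phi=0$. Moreover, if $\mathcal{S}^D\subsetneq N$ (the paper's sketch attributes the wedge in (c) to endogenous participation), the D outcome is not feasible in B at all. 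You need one specification of B under which both (a) and (c) are proved.

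Your reconciliation addresses only the clash between the \emph{statements} of (b) and (c). It also requires the threshold above which $W^C<W^A$ to lie below $\bar\alpha_\phi$. That threshold comes from the limit $m_i^C\to 0$ and may be very large, and nothing in the plan delivers the ordering.

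Finally, $W^B-W^D$ is not affine in $\alpha_\phi$, since $\phi_i^*$, participation, and the $\gamma$ required by \eqref{eq:ic_competition} all move with $\alpha_\phi$. What (c) needs is a uniform gap $\liminf_{\alpha_\phi\to\infty}\sum_i(\Phi_i^B-\Phi_i^D)>0$, together with bounded benefit and cost differences. Proposition~\ref{prop:competition_erodes}(a) compares the decentralized $\phi_i^*$ with $\phi^{\textit{soc}}$, not D with B, so it delivers that gap only once you state exactly what B mandates.
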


\noindent Assumptions~\ref{ass:spillover} and~\ref{ass:monotone} are presented as maintained conditions rather than derived from primitives because the full microfoundation in the GNN setting is intractable. A Gaussian aggregation toy model in the Online Appendix shows that both assumptions hold structurally under standard information aggregation with linear competitive costs, so this proposition is not tautological. The conditions in parts (b) and (c) are sufficient rather than necessary; they isolate the economic channel through which competition-induced distortions overturn the gains from information aggregation.

\begin{proof}[Proof Sketch]
Part (a) follows from revealed preference and the fact that TVA eliminates leakage-minimizing distortions. Part (b) combines two assumptions: under Assumption~\ref{ass:spillover}, reporting becomes uninformative under high competition; under Assumption~\ref{ass:monotone}, distorted reports reduce detection below autarky. Together these imply $W^C < W^A$ when $\alpha_\phi$ is large enough. Part (c) follows because detection gains are bounded while competitive costs grow linearly in $\alpha_\phi$. Full proof in the Online Appendix.
\end{proof}

The welfare ranking $W^C < W^A$ does not arise because federation is harmful, but because federation without incentive design is worse than no federation. When banks federate without TVA under competitive pressure, the competitive cost of flagging customers motivates systematic underreporting; the aggregated global model is built on distorted inputs and delivers detection quality below each bank's honest local model, while still imposing compliance burdens, a strict welfare loss relative to autarky. Part (c) extends this finding: mandatory sharing eventually backfires when competitive and leakage costs dominate detection gains. The policy implication, that how sharing is organized matters as much as whether it occurs, is directly actionable for regulators designing AML information-sharing frameworks.

\subsubsection*{Optimal mechanism design}\label{sec:optimal_design}

The regulator chooses $\gamma$ to maximize welfare:
\begin{equation}\label{eq:optimal_gamma}
  \gamma^* = \argmax_{\gamma \in (0,1)} W^D(\gamma) = \argmax_\gamma \sum_{i \in \mathcal{S}^*(\gamma)} B_i(\mathcal{S}^*(\gamma)) - \text{Total Costs}(\gamma).
\end{equation}
Comparative statics yield natural properties: $\gamma^*$ is decreasing in the base illicit rate $\Pr(\textit{illicit})$, since higher base rates require less aggressive temporal discounting; increasing in competitive intensity $\alpha_\phi$, since more competition demands stronger incentives; and increasing in adversarial speed $1/\delta$, since faster adversaries require sharper temporal credit.


\section{Intervention Under Information Loss: An Index-Based Heuristic}
\label{sec:intervention}

The previous sections established the paper's main welfare results. This section develops an operational extension: under additional dynamic assumptions about how freezing accounts destroys network information, the regulator faces a non-trivial exploration--exploitation tradeoff that the welfare analysis of Section~\ref{sec:welfare} abstracted away. This extension is self-contained and not required for the backfiring mandate result; readers primarily interested in the welfare analysis may proceed directly to Section~\ref{sec:conclusion}.

The intervention decision is which suspicious accounts to freeze immediately versus monitor for additional learning. The problem is subtle because \textit{freezing accounts permanently removes edges from the transaction network}, destroying information about future illicit flows and creating a tradeoff where aggressive intervention improves immediate security but degrades future detection.

\subsection{Link Removal and Information Loss}

Formally, let $G^t = (V^t, E^t)$ denote the observable transaction network at time $t$. When the regulator freezes account $u \in V^t$, the following period's network becomes:
$$G^{t+1} = (V^t \setminus \{u\}, E^t \setminus \{(u,v) : v \in V^t\})$$
All edges incident to $u$ are removed. This has two effects: (1) \textit{Direct Information Loss:} The regulator no longer observes transactions involving $u$, preventing detection of illicit flows through that account. (2) \textit{Network Information Loss:} Removing $u$ affects the GNN's ability to learn about neighbors $\mathcal{N}(u)$. GNN aggregation computes node $v$'s embedding at layer $\ell+1$ as a function of its neighbors' embeddings at layer $\ell$ (see Online Appendix for the full equation). When $u$ is removed from $\mathcal{N}(v)$, the aggregation for neighbor $v$ changes, degrading embeddings for the entire neighborhood. Consider a star network as an illustration:, where a hub account $h$ connects to 10 peripheral accounts $\{p_1, \ldots, p_{10}\}$. Freezing $h$ on early suspicion yields:
\begin{itemize}
\item \textit{Benefit:} Prevents immediate illicit flows through $h$.
\item \textit{Cost:} Loses ability to observe whether peripheral accounts $\{p_i\}$ are also involved; the GNN cannot aggregate information from $h$ to update suspicion scores for $\{p_i\}$.
\end{itemize}
However, this example also reveals that if $h$ turns out to be legitimate (false positive), the information loss is permanent and unrecoverable.
We formulate the intervention decision as a \textit{restless bandit problem} \parencite{whittle1988restless}. Each edge $e \in E^t$ is an "arm" with evolving state:
$$s_e^t \in \{\text{unknown}, \text{suspicious}, \text{confirmed-illicit}, \text{confirmed-legitimate}\}$$
The regulator chooses action $a_e^t \in \{\text{monitor}, \text{freeze}\}$ for each edge. The state transitions are:
\begin{align}
    \nonumber \Prob(s_e^{t+1} &= \text{confirmed-illicit} \mid s_e^t = \text{suspicious}, a_e^t = \text{monitor}) = p_e \\
    \Prob(s_e^{t+1} &= \text{confirmed-legitimate} \mid s_e^t = \text{suspicious}, a_e^t = \text{monitor}) = 1 - p_e \\
    \nonumber \Prob(s_e^{t+1} &= \text{removed} \mid a_e^t = \text{freeze}) = 1
\end{align}
where $p_e = \Prob(\text{illicit} \mid \text{features } X_e, \text{embedding } H_e)$ is the learned risk probability.
The regulator's reward depends on the action taken and the true state of the edge:
\begin{align}
    \nonumber R(a_e &= \text{freeze}, s_e = \text{confirmed-illicit}) = C_e, \quad \text{(prevented loss)}\\
    \nonumber R(a_e &= \text{freeze}, s_e = \text{confirmed-legitimate}) = -\alpha_1,  \quad \text{(false positive cost)}\\
    \nonumber R(a_e &= \text{monitor}, s_e = \text{confirmed-illicit}) = -\alpha_2,  \quad \text{(false negative cost)}\\
    \nonumber R(a_e &= \text{monitor}, s_e = \text{unknown}) = \lambda \cdot I_e,  \quad \text{(information value)}
\end{align}
where $I_e$ is the \textit{information value} of edge $e$, defined as the reduction in uncertainty about neighboring edges. $C_e$, $\alpha_1$ and $\alpha_2$ are defined in Equation~(\ref{eq:social}).
The regulator maximizes expected discounted cumulative reward:

\begin{equation}
\max_{\{a_e^t\}} \E\left[\sum_{t=1}^T \gamma^t \sum_{e \in E^t} R(a_e^t, s_e^t)\right]
\label{eq:bandit_objective}
\end{equation}

\subsection{Risk Memory as Index Approximation}
For standard multi-armed bandits, the optimal policy is characterized by the \textit{Gittins index} \parencite{gittins1979bandit}: a scalar priority score for each arm that trades off immediate reward against information value. We extend this to the restless bandit setting with network effects.

\begin{definition}[Network-Adjusted Gittins Index]
For edge $e$ in state $s_e^t$ with embedding $H_e^t$, define the index:
\begin{equation}
    \nu_e(s_e^t, H_e^t) = \frac{\E[\sum_{k=0}^\infty \gamma^k R(a_e^{t+k}, s_e^{t+k}) \mid a_e^t = \text{monitor}]}{1 - \gamma \Prob(e \text{ remains active})}
    \label{eq:index}
\end{equation}
This represents the \textit{opportunity cost} of freezing: the discounted future value if the edge were monitored, normalized by the probability it remains observable.
\end{definition}

\begin{proposition}[Index-Based Intervention Heuristic]
\label{thm:gittins}
The restless-bandit intervention problem does not in general admit a Gittins-optimal policy. We propose an index-based heuristic in the spirit of \textcite{whittle1988restless}: freeze edge $e$ if the network-adjusted index $\nu_e(s_e^t, H_e^t) < \bar{\nu}_t$ and monitor otherwise, where $\nu_e$ balances immediate detection reward against the long-run information value of continued monitoring. When the objective $F(E) = \sum_{e \in E} p_e C_e + \lambda I(E)$ is monotone submodular in the set of monitored edges $E$ and the decision is recast as greedy cardinality-constrained maximization, the risk memory mechanism with $K = \Theta(\log |V|)$ achieves a $(1-1/e)$ approximation to the \emph{greedy optimum} at complexity $O(|V| \cdot K)$, by the classical result of \textcite{nemhauser1978analysis}. We do not claim optimality against the full restless-bandit benchmark; proofs are in Appendix~\ref{proof:gittins} and the Online Appendix.
\end{proposition}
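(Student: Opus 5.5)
The statement combines three claims, and I would prove them in turn.

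The first is negative: the restless-bandit problem admits no Gittins-optimal policy in general. For this I would build a small counterexample rather than appeal to general theory. Take two edges that share an endpoint, so that freezing one removes information (through $I_e$ and the GNN aggregation) from the other. Then the arms are not independent and the states of frozen arms do not stay frozen, both of which violate the hypotheses of the Gittins theorem \parencite{gittins1979bandit}. Concretely, I would write out a two-period instance with a star on three nodes. I would compute the optimal policy by enumeration and exhibit a priority reversal: the edge with the higher one-arm index is not the one the optimal policy freezes first. This shows that no scalar per-arm index can be optimal, which matches Whittle's observation for restless bandits \parencite{whittle1988restless}.

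The second claim is that the heuristic is well defined. I would check that $\nu_e$ in \eqref{eq:index} is finite and measurable. The rewards are bounded by $\max(C_e,\alpha_1,\alpha_2,\lambda I_e)$, and the denominator satisfies $1-\gamma\Pr(\cdot)\ge 1-\gamma>0$, so $\nu_e$ is bounded. A threshold $\bar\nu_t$ chosen to meet a per-period monitoring budget therefore exists, and the rule ``freeze iff $\nu_e<\bar\nu_t$'' is a legitimate policy. No optimality is claimed for it.

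The third claim, the approximation guarantee, carries the real content. I would recast the per-period decision as choosing a monitored set $E$ with $|E|\le k$ to maximize $F(E)=\sum_{e\in E}p_eC_e+\lambda I(E)$. The modular part is trivially monotone submodular, and $I(E)$ is monotone submodular by hypothesis, so the sum is monotone submodular. Greedy selection then attains $F\ge(1-1/e)F^{\mathrm{OPT}}_k$ by \textcite{nemhauser1978analysis}.

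The remaining step, and the main obstacle, is to show that risk memory with $K=\Theta(\log|V|)$ reproduces greedy up to the stated guarantee. I would first try the following. Show that the marginal gain $F(E\cup\{e\})-F(E)$ depends only on edges within $K$ hops of $e$, because under the message-passing structure information value decays with graph distance. A retained window of size $K$ then computes marginal gains exactly, up to an error $\varepsilon$ that decays geometrically in $K$. Choosing $K=c\log|V|$ makes the total error $|V|\varepsilon=O(1)$ relative to a constant-factor slack, so the greedy-with-approximate-gains version of Nemhauser's bound gives $(1-1/e)$ up to a vanishing additive term.

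The complexity $O(|V|K)$ follows because each node maintains and updates $K$ memory entries per period. I expect the decay-of-influence step to need an explicit extra assumption, such as contractive aggregation, and I would state that assumption in Appendix~\ref{proof:gittins}.
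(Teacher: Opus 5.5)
Your well-definedness check and the submodular step match the paper: a modular term plus a monotone submodular $I(E)$, followed by the greedy bound. For the negative claim you take a different route. The paper only asserts that the network-coupled restless bandit is PSPACE-hard, while you build an explicit small instance with a priority reversal. Your route is more self-contained, and it addresses the literal claim (the Gittins index policy is not optimal) more directly than a complexity statement does. However, one reversal against the one-arm Gittins index refutes that particular index, not every per-arm index. Either drop the stronger conclusion, or add a second configuration in which the same local states receive the opposite optimal priority.

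The genuine gap is in your final step. In the mechanism, $K$ is the length of the per-node queue $\mathcal{M}_u$ in \eqref{eq:memory}: the $K$ highest-scoring interactions \emph{incident} to $u$ across time, combined by the max in \eqref{eq:node_risk}. It is not a hop radius, and the memory never evaluates marginal gains $F(E\cup\{e\})-F(E)$. So even granting geometric decay of influence with graph distance, nothing in risk memory retains a $K$-hop neighborhood or computes gains from one. Your sentence that ``a retained window of size $K$ computes marginal gains up to $\varepsilon$'' therefore has no counterpart in the mechanism. Introducing a separate budget $k$ also cuts the only link the statement supplies between risk memory and greedy. In addition, the contractive-aggregation assumption you would add is not among the proposition's hypotheses, so you would be proving a different result.

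The paper instead reads the hypothesis that the decision ``is recast as greedy cardinality-constrained maximization'' literally. Choosing the $K$ largest indices at each node is identified with greedy under $|E|\le K$, so $K$ is the cardinality budget and the Nemhauser bound holds for every $K$. This identification is exact only for modular $F$, but it is what the hypothesis grants. The $\Theta(\log|V|)$ scaling then enters separately: Hoeffding concentration of the empirical indices $\hat{\nu}_e$, plus a union bound over the edges at a node, ensures the top-$K$ set is identified correctly with high probability. To repair your argument, keep $K$ as the budget, and drive your approximate-greedy bound by index-estimation error rather than by spatial truncation error.
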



Computing the exact Gittins index in Equation \eqref{eq:index} requires solving a dynamic program over the entire state space, which is intractable for large networks. We show that the \textit{risk memory mechanism} proposed in our system provides a computationally efficient approximation.
For each node $u$, maintain a priority queue of the top-$K$ most suspicious historical interactions:
\begin{equation}
    \mathcal{M}_u = \text{top-}K \{(\hat{y}_e^t, e, t) : e = (u, v) \text{ or } e = (v, u), t \leq t_{\text{current}}\}
    \label{eq:memory}
\end{equation}
where entries are sorted by risk score $\hat{y}_e^t$ and $K$ is a hyperparameter (typically $K \in [5, 20]$).
The node-level risk score aggregates memory:
\begin{equation}
    \hat{y}_u = \max_{(\hat{y}_e, e, t) \in \mathcal{M}_u} \left\{\gamma^{t_{\text{current}} - t} \cdot \hat{y}_e\right\}
    \label{eq:node_risk}
\end{equation}
The regulator freezes node $u$ if $\hat{y}_u \geq \tau$.

\subsection{Centrality-Weighted Optimal Intervention}
\label{sec:centrality_intervention}

We extend the Gittins index formulation (\ref{sec:intervention}) to account for network externalities from intervention. Freezing a high-centrality account destroys more network information than freezing a peripheral one.

\begin{definition}[Centrality-Weighted Gittins Index]\label{def:centrality_gittins}
For account $u$ with neighborhood $\mathcal{N}(u)$:
\begin{equation}\label{eq:centrality_gittins}
  \nu^{\text{net}}_u = \nu_u - \underbrace{\lambda_{\text{net}} \cdot \sum_{v \in \mathcal{N}(u)} \gamma^{\tau_v}\hat{y}_v \cdot \frac{d_v}{d_v - 1}}_{\text{network information loss}},
\end{equation}
where $\nu_u$ is the standard Gittins index, $d_v$ is the degree of neighbor $v$, and $\lambda_{\textit{net}} > 0$ is the network information weight. The correction term captures the fact that freezing $u$ removes information about all of $u$'s neighbors, weighted by each neighbor's risk score, temporal discount, and vulnerability to isolation ($d_v / (d_v - 1)$ diverges as $v$ approaches degree 1).
\end{definition}

\label{thm:centrality_intervention}When the Gittins index is adjusted for network externalities via Definition~\ref{def:centrality_gittins}, two economically meaningful properties emerge. First, \textit{hub protection}: highly connected accounts are frozen later than equally suspicious peripheral accounts, because their removal would deprive the regulator of information about an entire neighborhood. This is costly precisely when the hub is legitimate: a false positive on a hub account severs the learning links that would have identified genuinely illicit peripheral accounts. Second, \textit{cascade avoidance}: successive freezes within a cluster progressively increase the marginal information loss of further freezes, so the policy endogenously spaces out interventions. Both properties are consistent with practical AML guidance on preserving monitoring channels, and the $(1-1/e)$ approximation guarantee (proof in Appendix~\ref{app:proof_centrality}) ensures that the computationally tractable risk memory implementation captures nearly all of this benefit.


The hub protection property creates a strategic tension: since central accounts are frozen later, a rational adversary has an incentive to route illicit flows precisely through well-connected hubs. The regulator can respond by raising $\lambda_{\text{net}}$ selectively for accounts with unusual hub-concentration of flagged flows. This cat-and-mouse dynamic over hub intervention timing converges to a mixed-strategy equilibrium, and calibrating $\lambda_{\text{net}}$ from observed routing patterns is therefore an important input to the regulator's policy design.

\subsection{Simulation Evidence on Intervention Policies}
We compare three intervention policies on the transaction network data.\footnote{The simulation here illustrates the theoretical contrast between policies (greedy freezing, network-adjusted index, oracle) holding the detection model fixed. The full hierarchical reinforcement learning implementation, large-scale ablation, and per-market calibration are developed in the companion paper \textcite{zheng2025networked}; we do not duplicate that empirical contribution here. Our purpose in this section is to illustrate the theoretical superiority of the network-adjusted index policy (Proposition~5) over greedy freezing.} (1) \textit{Greedy Immediate Freezing:} Freeze all edges with $\hat{y}_e \geq \tau$ immediately;
(2) \textit{Risk Memory:} Use risk memory mechanism (Equation \eqref{eq:memory}) with $K = 10$;
(3) \textit{Oracle:} Gittins index computed via dynamic programming, which is intractable for the full dataset and is evaluated on subsampled graphs.

We measure performance using three metrics: (1) \textit{prevented loss ratio:} fraction of total illicit transaction value prevented by freezing; (2) \textit{Network connectivity:} average graph diameter before and after interventions (measures information loss); (3) \textit{Detection latency:} average time between first suspicious signal and final detection.

\begin{table}[h]
\centering
\caption{Intervention Policy Comparison}
\label{tab:intervention}
\begin{tabular}{lccc}
\toprule
\textbf{Policy} & \textbf{Prevented Loss} & \textbf{Network Diameter} & \textbf{Detection Latency} \\
 & \textbf{Ratio} & \textbf{(after intervention)} & \textbf{(days)} \\
\midrule
Greedy Immediate & 49.4\% & 8.2 & 1.2 \\
Risk Memory ($K=10$) & \textbf{79.3\%} & \textbf{4.6} & 2.1 \\
Oracle (subsampled) & 82.1\% & 4.1 & 2.3 \\
\bottomrule
\end{tabular}
\end{table}

Table~\ref{tab:intervention} shows that risk memory substantially outperforms greedy freezing (79.3\% vs. 49.4\% prevented loss) while maintaining better network connectivity (diameter 4.6 vs. 8.2). The performance is close to the oracle policy (82.1\%), consistent with Proposition~\ref{thm:gittins}. Greedy freezing creates information loss by prematurely removing suspicious nodes. Risk memory defers intervention on high-information-value nodes (those with many connections, central positions), allowing the GNN to continue learning about their neighborhoods. The choice $K = \Theta(\log |V|) \approx 11$ in Proposition~\ref{thm:gittins} reflects the empirical observation that performance saturates around this value; details and ablation are reported in the companion paper.

Additional simulation evidence is collected in the Online Appendix, including parameter calibration, TVA gain decomposition, welfare comparisons across regulatory regimes, and per-market heterogeneity analyses.


\section{Conclusion}
\label{sec:conclusion}

This paper develops a mechanism design framework for decentralized risk analytics in which competing firms hold fragmented signals about risky customers and must be induced to share them truthfully. The core contribution is a welfare analysis showing that information-sharing mandates can backfire when they ignore competitive incentives, paired with a TVA mechanism that corrects this failure.

Three main results organize the analysis. First, the Backfiring Mandate Proposition establishes that voluntary federation without incentive design can reduce welfare below autarky when competitive pressure is strong, because strategic underreporting produces biased global models worse than honest local models while still imposing compliance costs. Simulation-based welfare comparisons in the Online Appendix show that mandatory sharing without TVA barely exceeds autarky in welfare, while TVA substantially closes the gap to the first-best: illustrating the quantitative importance of incentive alignment. Second, the TVA mechanism with a strictly proper scoring rule transfer implements truthful reporting as a Bayes--Nash equilibrium (uniquely optimal at each edge) in large federations; temporal discounting deters adversarial adaptation, yielding sublinear regret against adaptive adversaries with bounded memory when $\gamma$ is close to 1. Third, a network Shapley characterization shows that each institution's marginal contribution to collective detection is proportional to its weighted cross-border degree, so coalition design should prioritize high inter-institutional transaction volume rather than total assets: with direct implications for how regulators design AML information-sharing programmes.

The analysis yields three implications for designers of decentralized risk-detection systems and the regulators who oversee them. First, information-sharing mandates achieve their welfare goals only when paired with compatible incentive design; mandates alone, without mechanisms that internalize the private cost of truthful reporting, can leave welfare close to the autarky benchmark and below the first-best by a substantial margin. This applies directly to existing AML rules such as the EU 6th AML Directive and FinCEN \S314(b) programs, but the same logic applies to platform fraud consortia, cybersecurity information-sharing organizations (ISACs), supply chain disclosure rules, and any setting in which mandated cooperation among competing firms substitutes for true incentive alignment. Second, coalition design should prioritize network centrality—specifically, weighted inter-firm interaction volume—over firm size: the network Shapley value is more predictive of a firm's marginal contribution to collective detection than balance-sheet or market-share totals. Third, intervention policies should account for network information externalities: aggressive intervention on high-centrality nodes permanently degrades the system's information environment, and a centrality-weighted index policy preserves detection capability while maintaining enforcement. These implications are directly testable in pilot programs with regional banking consortia or platform fraud-prevention bodies, and the TVA mechanism is implementable on existing reporting infrastructure.

The principles developed here, namely temporal value assignment, network Shapley value, centrality-weighted intervention, and competition-aware mechanism design, extend beyond financial crime to any setting where competing firms must share signals for collective risk detection, including cybersecurity threat intelligence, supply chain risk management, multi-platform fraud prevention, and collaborative marketing analytics.

Several limitations of our analysis suggest directions for further work. Our analysis uses synthetic data; validation on proprietary transaction data, ideally in cooperation with a regional supervisor, is an important next step. The network Shapley value computation is exponential in $m$; developing polynomial-time approximations for large federations is a natural algorithmic direction. Extensions to endogenous network formation, in which institutions choose which bilateral links to maintain, and to dynamic network evolution, would further enrich the theory. The political economy of international cooperation and asymmetric regulatory environments are promising policy-relevant directions.

\printbibliography

@article{zheng2025networked,
  title={Networked Markets, Fragmented Data: Adaptive Graph Learning
         for Customer Risk Analytics and Policy Design},
  author={Zheng, Lecheng and Ni, Jian and Zobel, Chris and Birge, John R},
  journal={Working paper},
  year={2025},
  note={Available upon request from the editor}
}

@article{kamenica2011bayesian,
  title={Bayesian persuasion},
  author={Kamenica, Emir and Gentzkow, Matthew},
  journal={American Economic Review},
  volume={101},
  number={6},
  pages={2590--2615},
  year={2011},
  publisher={American Economic Association}
}

@article{mcmahan2017communication,
  title={Communication-efficient learning of deep networks from decentralized data},
  author={McMahan, Brendan and Moore, Eider and Ramage, Daniel and Hampson, Seth and y Arcas, Blaise Aguera},
  booktitle={Artificial intelligence and statistics},
  pages={1273--1282},
  year={2017},
  organization={PMLR}
}

@article{li2020federated,
  title={Federated optimization in heterogeneous networks},
  author={Li, Tian and Sahu, Anit Kumar and Zaheer, Manzil and Sanjabi, Maziar and Talwalkar, Ameet and Smith, Virginia},
  journal={Proceedings of Machine learning and systems},
  volume={2},
  pages={429--450},
  year={2020}
}

@article{whittle1988restless,
  title={Restless bandits: Activity allocation in a changing world},
  author={Whittle, Peter},
  journal={Journal of applied probability},
  volume={25},
  number={A},
  pages={287--298},
  year={1988},
  publisher={Cambridge University Press}
}

@article{gittins1979bandit,
  title={Bandit processes and dynamic allocation indices},
  author={Gittins, John C},
  journal={Journal of the Royal Statistical Society Series B: Statistical Methodology},
  volume={41},
  number={2},
  pages={148--164},
  year={1979},
  publisher={Oxford University Press}
}

@article{nemhauser1978analysis,
  title={An analysis of approximations for maximizing submodular set functions—I},
  author={Nemhauser, George L and Wolsey, Laurence A and Fisher, Marshall L},
  journal={Mathematical programming},
  volume={14},
  number={1},
  pages={265--294},
  year={1978},
  publisher={Springer}
}

@article{joulani2013online,
  title={Online learning under delayed feedback},
  author={Joulani, Pooria and Gyorgy, Andras and Szepesv{\'a}ri, Csaba},
  booktitle={International conference on machine learning},
  pages={1453--1461},
  year={2013},
  organization={PMLR}
}

@inproceedings{kipf2016semi,
  title={Semi-supervised classification with graph convolutional networks},
  author={Kipf, Thomas N and Welling, Max},
  booktitle={International Conference on Learning Representations},
  year={2016}
}

@inproceedings{lin2017focal,
  title={Focal loss for dense object detection},
  author={Lin, Tsung-Yi and Goyal, Priya and Girshick, Ross and He, Kaiming and Doll{\'a}r, Piotr},
  booktitle={Proceedings of the IEEE international conference on computer vision},
  pages={2980--2988},
  year={2017}
}

@inproceedings{altman2023realistic,
  title={Realistic synthetic financial transactions for anti-money laundering models},
  author={Altman, Erik and Blanu{\v{s}}a, Jovan and Von Niederh{\"a}usern, Luc and Egressy, B{\'e}ni and Anghel, Andreea and Atasu, Kubilay},
  journal={Advances in Neural Information Processing Systems},
  volume={36},
  pages={29851--29874},
  year={2023}
}

@article{weber2018scalable,
  title={Scalable graph learning for anti-money laundering: A first look},
  author={Weber, Mark and Chen, Jie and Suzumura, Toyotaro and Pareja, Aldo and Ma, Tengfei and Kanezashi, Hiroki and Kaler, Tim and Leiserson, Charles E and Schardl, Tao B},
  journal={arXiv preprint arXiv:1812.00076},
  year={2018}
}

@article{weber2019anti,
  title={Anti-money laundering in bitcoin: Experimenting with graph convolutional networks for financial forensics},
  author={Weber, Mark and Domeniconi, Giacomo and Chen, Jie and Weidele, Daniel Karl I and Bellei, Claudio and Robinson, Tom and Leiserson, Charles E},
  journal={arXiv preprint arXiv:1908.02591},
  year={2019}
}

@article{zheng2025cluster,
  title={Cluster Aware Graph Anomaly Detection},
  author={Zheng, Lecheng and Birge, John and Wu, Haiyue and Zhang, Yifang and He, Jingrui},
  booktitle={Proceedings of the ACM on Web Conference 2025},
  pages={1771--1782},
  year={2025}
}

@misc{unodc2025,
  title={Money Laundering},
  author={{United Nations Office on Drugs and Crime}},
  year={2025},
  howpublished={\url{https://www.unodc.org/unodc/en/money-laundering/overview.html}}
}

@inproceedings{DBLP:conf/iclr/XuHLJ19,
  author       = {Keyulu Xu and
                  Weihua Hu and
                  Jure Leskovec and
                  Stefanie Jegelka},
  title        = {How Powerful are Graph Neural Networks?},
  booktitle    = {7th International Conference on Learning Representations, {ICLR} 2019,
                  New Orleans, LA, USA, May 6-9, 2019},
  publisher    = {OpenReview.net},
  year         = {2019}
}

@article{amoako2025exploring,
  title={Exploring the role of machine learning and deep learning in anti-money laundering (AML) strategies within US financial industry: A systematic review of implementation, effectiveness, and challenges},
  author={Amoako, Elizabeth Kuukua Woode and Boateng, Victor and Ajay, Ola and Adukpo, Tobias Kwame and Mensah, Nicholas},
  journal={Finance \& Accounting Research Journal},
  volume={7},
  number={1},
  pages={22--36},
  year={2025}
}

@article{kadamathikuttiyil2025enhancing,
  title={Enhancing Money Laundering Detection in Bank Transactions using GAGAN: A Graph-Adapted Generative Adversarial Network Approach: KK Girish, B. Bhowmik},
  author={Kadamathikuttiyil Karthikeyan, Girish and Bhowmik, Biswajit},
  journal={International Journal of Data Science and Analytics},
  pages={1--31},
  year={2025},
  publisher={Springer}
}

@book{sutton1984temporal,
  title={Temporal credit assignment in reinforcement learning},
  author={Sutton, Richard Stuart},
  year={1984},
  publisher={University of Massachusetts Amherst}
}

@article{ballester2006s,
  title={Who's who in networks. Wanted: The key player},
  author={Ballester, Coralio and Calv{\'o}-Armengol, Antoni and Zenou, Yves},
  journal={Econometrica},
  volume={74},
  number={5},
  pages={1403--1417},
  year={2006},
  publisher={Wiley Online Library}
}

@article{bao2017could,
  title={Could good intentions backfire? An empirical analysis of the bank deposit insurance},
  author={Bao, Weining and Ni, Jian},
  journal={Marketing Science},
  volume={36},
  number={2},
  pages={301--319},
  year={2017},
  publisher={INFORMS}
}

@article{admati2000forcing,
  title={Forcing firms to talk: Financial disclosure regulation and externalities},
  author={Admati, Anat R and Pfleiderer, Paul},
  journal={The Review of financial studies},
  volume={13},
  number={3},
  pages={479--519},
  year={2000},
  publisher={Oxford University Press}
}

@article{begley2017strategic,
  title={The strategic underreporting of bank risk},
  author={Begley, Taylor A and Purnanandam, Amiyatosh and Zheng, Kuncheng},
  journal={The Review of Financial Studies},
  volume={30},
  number={10},
  pages={3376--3415},
  year={2017},
  publisher={Oxford University Press}
}

@inproceedings{dong2018strategic,
  title={Strategic classification from revealed preferences},
  author={Dong, Jinshuo and Roth, Aaron and Schutzman, Zachary and Waggoner, Bo and Wu, Zhiwei Steven},
  booktitle={Proceedings of the 2018 ACM Conference on Economics and Computation},
  pages={55--70},
  year={2018}
}

@inproceedings{hardt2016strategic,
  title={Strategic classification},
  author={Hardt, Moritz and Megiddo, Nimrod and Papadimitriou, Christos and Wootters, Mary},
  booktitle={Proceedings of the 2016 ACM conference on innovations in theoretical computer science},
  pages={111--122},
  year={2016}
}

@article{gentzkow2016competition,
  title={Competition in persuasion},
  author={Gentzkow, Matthew and Kamenica, Emir},
  journal={The Review of Economic Studies},
  volume={84},
  number={1},
  pages={300--322},
  year={2016},
  publisher={Review of Economic Studies Ltd}
}

@article{bergemann2010dynamic,
  title={The dynamic pivot mechanism},
  author={Bergemann, Dirk and V{\"a}lim{\"a}ki, Juuso},
  journal={Econometrica},
  volume={78},
  number={2},
  pages={771--789},
  year={2010},
  publisher={Wiley Online Library}
}

@article{pavan2014dynamic,
  title={Dynamic mechanism design: A myersonian approach},
  author={Pavan, Alessandro and Segal, Ilya and Toikka, Juuso},
  journal={Econometrica},
  volume={82},
  number={2},
  pages={601--653},
  year={2014},
  publisher={Wiley Online Library}
}

@article{groves1973incentives,
  title={Incentives in teams},
  author={Groves, Theodore},
  journal={Econometrica: Journal of the Econometric Society},
  pages={617--631},
  year={1973},
  publisher={JSTOR}
}

@article{clarke1971multipart,
  title={Multipart pricing of public goods},
  author={Clarke, Edward H},
  journal={Public choice},
  pages={17--33},
  year={1971},
  publisher={JSTOR}
}

@article{vickrey1961counterspeculation,
  title={Counterspeculation, auctions, and competitive sealed tenders},
  author={Vickrey, William},
  journal={The Journal of finance},
  volume={16},
  number={1},
  pages={8--37},
  year={1961},
  publisher={JSTOR}
}

@article{jackson1996strategic,
  title={A strategic model of social and economic networks},
  author={Jackson, Matthew O and Wolinsky, Asher},
  journal={Journal of economic theory},
  volume={71},
  number={1},
  pages={44--74},
  year={1996},
  publisher={Elsevier}
}

@article{vives1990trade,
  title={Trade association disclosure rules, incentives to share information, and welfare},
  author={Vives, Xavier},
  journal={the RAND Journal of Economics},
  pages={409--430},
  year={1990},
  publisher={JSTOR}
}

@article{gal1985information,
  title={Information sharing in oligopoly},
  author={Gal-Or, Esther},
  journal={Econometrica: Journal of the Econometric Society},
  pages={329--343},
  year={1985},
  publisher={JSTOR}
}

@article{morrison2005crises,
  title={Crises and capital requirements in banking},
  author={Morrison, Alan D and White, Lucy},
  journal={American Economic Review},
  volume={95},
  number={5},
  pages={1548--1572},
  year={2005},
  publisher={American Economic Association}
}

@article{galeotti2010network,
  title={Network games},
  author={Galeotti, Andrea and Goyal, Sanjeev and Jackson, Matthew O and Vega-Redondo, Fernando and Yariv, Leeat},
  journal={The review of economic studies},
  volume={77},
  number={1},
  pages={218--244},
  year={2010},
  publisher={Wiley-Blackwell}
}

@article{bramoulle2014strategic,
  title={Strategic interaction and networks},
  author={Bramoull{\'e}, Yann and Kranton, Rachel and D'amours, Martin},
  journal={American Economic Review},
  volume={104},
  number={3},
  pages={898--930},
  year={2014},
  publisher={American Economic Association 2014 Broadway, Suite 305, Nashville, TN 37203}
}

@article{jackson2005allocation,
  author  = {Jackson, Matthew O.},
  title   = {Allocation rules for network games},
  journal = {Games and Economic Behavior},
  volume  = {51},
  number  = {1},
  pages   = {128--154},
  year    = {2005},
  publisher = {Elsevier}
}

@article{myerson1977graphs,
  title={Graphs and cooperation in games},
  author={Myerson, Roger B.},
  journal={Mathematics of operations research},
  volume={2},
  number={3},
  pages={225--229},
  year={1977},
  publisher={INFORMS}
}

@book{jackson2008social,
  title={Social and economic networks},
  author={Jackson, Matthew O},
  volume={3},
  year={2008},
  publisher={Princeton university press Princeton}
}

@book{fudenberg1991game,
  title={Game Theory},
  author={Fudenberg, Drew and Tirole, Jean},
  year={1991},
  publisher={MIT Press},
  address={Cambridge, MA}
}

@article{akerlof1970market,
  title={The Market for ``Lemons'': Quality Uncertainty and the Market Mechanism},
  author={Akerlof, George A.},
  journal={Quarterly Journal of Economics},
  volume={84},
  number={3},
  pages={488--500},
  year={1970},
  publisher={MIT Press}
}

@article{hellmann2000liberalization,
  title={Liberalization, Moral Hazard in Banking, and Prudential Regulation: Are Capital Requirements Enough?},
  author={Hellmann, Thomas F. and Murdock, Kevin C. and Stiglitz, Joseph E.},
  journal={American Economic Review},
  volume={90},
  number={1},
  pages={147--165},
  year={2000},
  publisher={American Economic Association}
}

@article{keeley1990monopoly,
  title={Monopoly, Risk, and the Government Net in Banking},
  author={Keeley, Michael C.},
  journal={American Economic Review},
  volume={80},
  number={5},
  pages={1183--1200},
  year={1990},
  publisher={American Economic Association}
}

@article{zheng2024multiview,
  title={Towards Multi-view Graph Anomaly Detection with Similarity-Guided Contrastive Clustering},
  author={Zheng, Lecheng and Birge, John R. and Zhang, Yifang and He, Jingrui},
  journal={arXiv preprint arXiv:2409.09770},
  year={2024}
}

@article{velickovic2018graph,
  title={Graph Attention Networks},
  author={Veli\v{c}kovi\'{c}, Petar and Cucurull, Guillem and Casanova, Arantxa and Romero, Adriana and Li\`{o}, Pietro and Bengio, Yoshua},
  journal={International Conference on Learning Representations},
  year={2018}
}

@article{myerson1983efficient,
  title={Efficient Mechanisms for Bilateral Trading},
  author={Myerson, Roger B. and Satterthwaite, Mark A.},
  journal={Journal of Economic Theory},
  volume={29},
  number={2},
  pages={265--281},
  year={1983},
  publisher={Elsevier}
}

@inproceedings{zhu2019deep,
  title={Deep Leakage from Gradients},
  author={Zhu, Ligeng and Liu, Zhijian and Han, Song},
  booktitle={Advances in Neural Information Processing Systems},
  volume={32},
  year={2019}
}

@inproceedings{melis2019exploiting,
  title={Exploiting Unintended Feature Leakage in Collaborative Learning},
  author={Melis, Luca and Song, Congzheng and De Cristofaro, Emiliano and Shmatikov, Vitaly},
  booktitle={IEEE Symposium on Security and Privacy},
  pages={691--706},
  year={2019}
}

@article{gneiting2007strictly,
  title={Strictly proper scoring rules, prediction, and estimation},
  author={Gneiting, Tilmann and Raftery, Adrian E},
  journal={Journal of the American statistical Association},
  volume={102},
  number={477},
  pages={359--378},
  year={2007},
  publisher={Taylor \& Francis}
}

@article{arora2012online,
  title={Online bandit learning against an adaptive adversary: from regret to policy regret},
  author={Arora, Raman and Dekel, Ofer and Tewari, Ambuj},
  journal={arXiv preprint arXiv:1206.6400},
  year={2012}
}

@article{an2023index,
  title={Index providers: Whales behind the scenes of ETFs},
  author={An, Yu and Benetton, Matteo and Song, Yang},
  journal={Journal of Financial Economics},
  volume={149},
  number={3},
  pages={407--433},
  year={2023},
  doi={10.1016/j.jfineco.2023.06.003}
}

@article{doerr2026privacy,
  title={Privacy Regulation and Fintech Lending},
  author={Doerr, Sebastian and Gambacorta, Leonardo and Guiso, Luigi and Sanchez del Villar, Marina},
  journal={Management Science},
  year={2026},
  doi={10.1287/mnsc.2025.02874},
  note={Forthcoming}
}

\newpage
\appendix

\section{Proof}

\subsection{Proof of Theorem~\ref{thm:truthful} (Bayes--Nash Implementation)}
\label{proof:truthful}
\begin{proof}
The proof proceeds in four steps. Step 1 establishes pointwise strict propriety at each edge; Step 2 aggregates across edges to obtain strict dominance of truthful reporting in expected credit; Step 3 incorporates compliance and leakage costs to derive condition~\eqref{eq:ic_condition}; Step 4 concludes that the truthful profile is a BNE and uniquely optimal at each edge in the large-federation limit.

\textit{Step 1: Pointwise strict propriety.}
Fix institution $i$, edge $e \in E^{i,t}$, and period $t$. Conditional on the institution's information $(G^{i,t}, \theta^i)$, the true posterior probability of illicit activity is $q_e^i \equiv \Pr(y_e = 1 \mid G^{i,t}, \theta^i) \in [0,1]$. If institution $i$ reports $\hat{y}^{i,t}_e$, the expected credit contribution from edge $e$ at period $t$ (conditional on eventual confirmation) is
\[
\mathbb{E}_{y_e \sim \text{Ber}(q_e^i)}\!\left[C_e \cdot S(\hat{y}^{i,t}_e, y_e)\right] = C_e \cdot \left[q_e^i \cdot S(\hat{y}^{i,t}_e, 1) + (1-q_e^i) \cdot S(\hat{y}^{i,t}_e, 0)\right].
\]
By the strict propriety of $S$ (property~\eqref{eq:propriety}), this expression is uniquely maximized at $\hat{y}^{i,t}_e = q_e^i$. For the Brier score $S(\hat{y}, y) = 1 - (\hat{y} - y)^2$, direct computation gives
\[
\mathbb{E}\!\left[S(\hat{y}, y)\right] = 1 - q_e^i(1 - \hat{y})^2 - (1-q_e^i)\hat{y}^2,
\]
whose derivative with respect to $\hat{y}$ is $2(q_e^i - \hat{y})$, vanishing uniquely at $\hat{y} = q_e^i$. The maximum expected credit per edge is therefore $C_e \cdot [1 - q_e^i(1-q_e^i)]$, attained at truthful reporting.

\textit{Step 2: Temporal aggregation.}
Summing the per-period expected credit over all periods $t$ prior to confirmation, each weighted by the temporal discount factor $\gamma^{t_{\text{confirm}} - t}$, and summing over edges $e \in E^{i,t}$:
\[
\mathbb{E}[\pi^i_\infty \mid \rho_i] = \sum_e C_e \sum_t \gamma^{t_{\text{confirm}} - t} \mathbb{E}\!\left[S(\hat{y}^{i,t}_e, y_e)\right].
\]
Since each inner expectation is strictly maximized at the posterior (Step 1), the total expected credit is strictly maximized by truthful reporting $\hat{y}^{i,t}_e = q_e^i$ at every edge and every period. Let $\pi^{*i}_\infty$ denote the truthful expected credit and $\Delta^S_i \equiv \pi^{*i}_\infty - \mathbb{E}[\pi^i_\infty \mid \rho_i']$ for any deviation $\rho_i' \neq \rho_i^*$. Strict propriety implies $\Delta^S_i > 0$ for any non-truthful $\rho_i'$, with the magnitude scaling as $\sum_e C_e \sum_t \gamma^{t_c - t} (q_e^i - \hat{y}^{i,t}_e)^2$ under the Brier score.

Under Assumption~\ref{ass:detection_value}, the per-edge discounted reward $\sum_t \gamma^{t_c - t} C_e$ is bounded below by $V_0 / (1 - \gamma) \cdot \Pr(\text{illicit})$ for any edge $e$ with positive probability of illicit confirmation. The truthful expected credit thus satisfies
\begin{equation}\label{eq:truthful_bound}
    \pi^{*i}_\infty \geq \frac{V_0}{1-\gamma} \cdot \Pr(\text{illicit}) \cdot \sum_e \mathbb{E}[S(q_e^i, y_e)].
\end{equation}

\textit{Step 3: Compliance and leakage costs as vanishing deviations.}
Under Assumption~\ref{ass:compliance_cost} and~\ref{ass:no_dominant}, institution $i$'s compliance cost depends on the aggregated profile $\bar{\rho}$, not on $i$'s own report directly. A unilateral deviation changes $\bar{\rho}$ by $O(1/m)$, so the marginal effect of $\rho_i$ on its own compliance cost is $O(c_i/m)$. For any deviation $\rho_i' \neq \rho_i^*$:
\begin{align}\label{eq:payoff_diff}
U_i(\rho_i^*) - U_i(\rho_i') = \Delta^S_i + O(c_i/m) \cdot \|\rho_i' - \rho_i^*\| + \kappa_i [I(\theta^i; G^{i,t}; \rho_i') - I(\theta^i; G^{i,t}; \rho_i^*)],
\end{align}
where $\Delta^S_i > 0$ by Step 2 and scales with $\sum_e C_e (q_e^i - \hat{y}'^{i,t}_e)^2$ under the Brier score. The compliance term is of order $1/m$ relative to the scoring-rule term, so it is a second-order effect in large federations. The leakage term is non-positive for any deviation because truthful reporting achieves the maximum informativeness of $\theta^i$. Hence $U_i(\rho_i^*) - U_i(\rho_i') > 0$ whenever:
\begin{equation*}
\Delta^S_i > \kappa_i I(\theta^i; G^{i,t}) + O(c_i/m),
\end{equation*}
which is implied by condition~\eqref{eq:ic_condition} combined with the detection-value bound~\eqref{eq:truthful_bound}. In the large-federation limit ($m \to \infty$), the compliance term vanishes and truthful reporting is a strict best response whenever the scoring-rule accuracy rent exceeds the leakage cost: $\frac{V_0}{1-\gamma}\Pr(\text{illicit}) > \kappa_i I(\theta^i; G^{i,t})$.

\textit{Step 4: BNE and uniqueness.}
Condition~\eqref{eq:ic_condition} holds for all $i$; each institution's truthful strategy is a strict best response to others', and the $O(1/m)$ compliance effect is dominated by the scoring-rule rent. Hence $(\rho_1^*, \ldots, \rho_m^*)$ is a BNE.

For uniqueness in the large-federation limit: by pointwise strict propriety (Step 1), any interior candidate $\rho_i^{**}$ with $\hat{y}^{**i,t}_e \neq q_e^i$ admits a strictly profitable deviation at edge $e$ (scoring-rule gain first-order in $(q_e^i - \hat{y}^{**i,t}_e)^2$, compliance effect $O(c_i/m)$). Hence every interior BNE satisfies $\hat{y}^{**i,t}_e = q_e^i$.
\end{proof}


\subsection{Proof of Proposition~\ref{thm:regret} (Regret Bound)}
\label{proof:regret}
\begin{proof}
The proof uses techniques from adversarial online learning with delayed feedback \parencite{joulani2013online}. We note that, as \textcite{arora2012online} show, sublinear \emph{policy regret} against adaptive adversaries with unbounded memory is generally impossible without further restrictions. Our bound applies to the standard \emph{external regret} notion under the bounded-memory assumption implicit in the $\delta$-delay adaptation model: the adversary can respond to observed interventions only after delay $\delta$, which limits its effective memory. Under this restriction, the two-term bound below holds.

\textit{Step 1: Regret decomposition.}
Define $G^t_{\text{frozen}}$ as the counterfactual graph at time $t$ if the adversary had not adapted, i.e., held the strategy fixed at the period-1 choice. Decompose regret as:
\begin{align}
\text{Regret}(T) &= \underbrace{\sum_{t=1}^T \left[\mathcal{L}(\pi_\gamma, G^t) - \mathcal{L}(\pi_\gamma, G^t_{\text{frozen}})\right]}_{\text{adaptation loss } (R_1)} + \underbrace{\sum_{t=1}^T \left[\mathcal{L}(\pi_\gamma, G^t_{\text{frozen}}) - \min_{\pi'} \mathcal{L}(\pi', G^t_{\text{frozen}})\right]}_{\text{learning loss } (R_2)}.
\end{align}

\textit{Step 2: Bounding the learning loss $R_2$.}
Against the frozen adversary, the sequence $\{G^t_{\text{frozen}}\}$ is fixed (oblivious). The TVA policy is equivalent to an exponential-weights policy over the action space $\mathcal{A}$ with importance weights $\gamma^{t_{\text{confirm}}-t}$. By standard exponential-weights analysis \parencite{joulani2013online}, the learning loss satisfies:
\[
R_2 = O\!\left(\sqrt{T\,|\mathcal{A}|\log|\mathcal{A}|}\right).
\]

\textit{Step 3: Bounding the adaptation loss $R_1$.}
With adaptation delay $\delta$, the adversary at time $t$ responds to interventions observed at $t - \delta$. For any period $t$, the loss difference $\mathcal{L}(\pi_\gamma, G^t) - \mathcal{L}(\pi_\gamma, G^t_{\text{frozen}})$ reflects how much the adversary's adaptation changes the graph between $G^t_{\text{frozen}}$ and $G^t$.

Under TVA with discount factor $\gamma$, the policy's action at time $t$ is determined by the discounted sum of past risk scores:
\[
\pi_\gamma(G^t) \propto \sum_{s \leq t} \gamma^{t-s} \hat{y}_e^s.
\]
The effect of adversarial adaptation at time $t$ on the policy's loss is mediated through the change in risk scores $\Delta \hat{y}_e = \hat{y}_e^{G^t} - \hat{y}_e^{G^t_{\text{frozen}}}$. Since the TVA policy weights recent observations by $(1-\gamma)$ and discounts older ones, the policy's sensitivity to any single-period perturbation is $(1-\gamma)$. The adversary adapts at time $t-\delta$, affecting the graph from time $t$ onwards; the additional loss per period from this adaptation is therefore bounded by $O((1-\gamma) \cdot \|\Delta \hat{y}\|)$.

Because the adversary adapts at most $T/\delta$ times (each adaptation lasts at least $\delta$ periods), and each adaptation affects the policy's response for at most $\delta$ subsequent periods with per-period loss change bounded by $O((1-\gamma))$, the total adaptation loss over the $\delta$-period window following each adaptation is:
\[
\sum_{s=t}^{t+\delta-1} \mathcal{L}(\pi_\gamma, G^s) - \mathcal{L}(\pi_\gamma, G^s_{\text{frozen}}) \leq O\!\left(\delta (1-\gamma)^2\right).
\]
The factor $(1-\gamma)^2$ arises because: (i) the policy's exposure to the adapted graph in any one period is $O(1-\gamma)$, since this is the temporal discount weight on the most recent period; and (ii) the adversary must wait $\delta$ periods before re-adapting, during which the TVA policy's $(1-\gamma)$ discounting makes the impact of any single past adaptation decay geometrically. Summing over all $T/\delta$ adaptation epochs:
\[
R_1 = O\!\left(\frac{T}{\delta} \cdot \delta\,(1-\gamma)^2\right) = O\!\left(\delta\, T\,(1-\gamma)^2\right).
\]

\textit{Step 4: Fixed-threshold lower bound.}
For a fixed-threshold policy ($\gamma = 0$, no discounting), the adaptation loss in each epoch is $O(\delta)$ rather than $O(\delta(1-\gamma)^2) = O(\delta)$; however, since there is no temporal discounting the adversary's adaptation affects the policy's loss indefinitely (not just for $\delta$ periods), giving $R_1 = \Omega(\delta T)$. Combined with $R_2 = O(\sqrt{T|\mathcal{A}|\log|\mathcal{A}|})$, the total regret is $\Omega(T)$.

Combining the bounds in Steps 2 and 3 gives Equation~\eqref{eq:regret_bound}.
\end{proof}

\subsection{Proof of the Early Detection Equilibrium Result (Section~\ref{sec:adversarial})}
\label{proof:equilibrium}
\begin{proof}
The adversary adapts if and only if the expected detection-probability reduction from adaptation exceeds the adaptation cost ratio:
$$\E[\text{detection reduction from adaptation}] > \frac{c(\text{adaptation})}{C_{\text{penalty}}}.$$

Under TVA, the regulator places weight $(1-\gamma)$ on the most recent period's signal and weight $\gamma^s(1-\gamma)$ on the signal from $s$ periods ago. An adversary that adapts with delay $\delta$ can only affect the detection policy's interpretation of signals from $\delta$ periods ahead or later. The share of the regulator's decision weight placed on signals the adversary \emph{cannot} affect (those from periods before the adaptation) is $1 - \gamma^\delta$, which is increasing in $\gamma$ on $(0,1)$. In particular, as $\gamma \to 1$, more decision weight falls on signals in the uncorrupted window $[t, t+\delta)$.

Equivalently, the adversary's \emph{benefit from adaptation} is bounded by the weight the policy places on signals the adversary can corrupt, which is $\gamma^\delta$. This benefit is \emph{decreasing in $1 - \gamma$}: policies that discount the past more heavily (larger $1-\gamma$, smaller $\gamma$) are in fact more vulnerable to adaptation because they rely more on recent signals. Since the adversary adapts in response to past observed interventions, a policy that relies heavily on recent signals is more exposed.

The key subtlety: TVA chooses $\gamma$ \emph{large} so that credit accumulates over long histories, but this also means the policy responds sluggishly to any given period's adaptation. An adversary that attempts to shift strategy in one period faces a policy whose decision weight on that single period is only $(1-\gamma)$, which is small. For adaptation to be worthwhile, the adversary must sustain the shift for many periods: but sustained adaptation has cumulative cost. The threshold $\bar{\gamma}$ is defined by the indifference condition that the per-period benefit of adaptation, $(1-\gamma) \cdot \Prob(\text{detection})$, falls below the per-period adaptation cost.

Given that the adversary does not adapt, the regulator's best response is to continue with TVA, which is optimal against static adversaries by Theorem~\ref{thm:truthful} and Proposition~\ref{thm:regret}. Hence the equilibrium is subgame perfect.
\end{proof}

\subsection{Sketch of Proof: Minimum Viable Coalition (Section~\ref{sec:heterogeneous})}
\begin{proof}[Proof Sketch]
(a) Under TVA, institution $i$'s participation payoff $\Pi_i(\mathcal{S})$ is lower-bounded by a term proportional to the coalition's effective information mass $M(\mathcal{S}) = \sum_{k \in \mathcal{S}} s_k q_k$ times $V_0/(1-\gamma)$ (from Assumption~\ref{ass:detection_value}). The individual-rationality condition $\Pi_i(\mathcal{S}) \geq 0$ yields the displayed size threshold. (b) Institutions with larger cross-border exposure and lower compliance costs satisfy participation at smaller coalition sizes and thus join earlier. (c) The stage-1 participation game exhibits increasing differences in participation decisions (Assumption~\ref{ass:network}), so by Topkis's theorem the equilibrium set forms a lattice with well-defined least and greatest equilibria. (d) Without TVA, the participation constraint tightens for high-quality institutions (Assumption~\ref{ass:outside}); marginal exits reduce detection benefits for remaining members, generating cascading unraveling analogous to adverse selection \parencite{akerlof1970market}. The full proof is in the Online Appendix.
\end{proof}

\subsection{Proof of Proposition~\ref{thm:shapley} (Network Shapley)}
\begin{proof}
Under the edge-additive coalition value~\eqref{eq:detection_value}, each edge $(i,j) \in \mathcal{E}$ contributes $w_{ij} C_{ij} p_{ij}(\mathcal{S})$ to $V(\mathcal{S})$, where $p_{ij}(\mathcal{S})$ depends only on whether $i$ and $j$ are in $\mathcal{S}$. The Shapley value of institution $i$ can therefore be decomposed across edges incident to $i$. For a single edge $(i,j)$, consider the random order in which institutions arrive. With probability $1/2$, institution $j$ arrives after $i$, so at the time $i$ arrives, $j$ is absent, giving marginal gain $p^L_{ij} - p^0_{ij}$. With probability $1/2$, $j$ arrives before $i$, giving marginal gain $p^H_{ij} - p^L_{ij}$. The expected marginal contribution from edge $(i,j)$ is therefore
\[
\tfrac{1}{2}(p^L_{ij} - p^0_{ij}) + \tfrac{1}{2}(p^H_{ij} - p^L_{ij}) = \tfrac{1}{2}(p^H_{ij} - p^0_{ij}).
\]
Summing over edges incident to $i$ and weighting by $w_{ij} C_{ij}$ gives~\eqref{eq:shapley_degree}.
\end{proof}

\subsection{Sketch of Proof: Theorem~\ref{thm:welfare_ordering} (Backfiring Mandate)}
\begin{proof}[Proof Sketch]
\emph{Part (a), $\alpha_\phi = 0$:} The ordering $W^A \leq W^C$ follows from revealed preference (institutions can replicate autarky in Regime $C$). $W^C \leq W^D$ because TVA eliminates the leakage-minimizing distortion. $W^D \leq W^B$ because the planner's benchmark weakly dominates any decentralized equilibrium.

\emph{Part (b), $\alpha_\phi > 0$ and high competition:} Institution $i$'s FOC in Regime $C$ includes the competitive cost term $\alpha_\phi \partial\Phi_i/\partial m_i$ which dominates $\partial B_i / \partial m_i$ as $\alpha_\phi$ grows (Assumption~\ref{ass:spillover}), driving equilibrium reporting $m_i^C$ below $m_i^A$. Under distorted reporting, $B_i^C < B_i^A$ (Assumption~\ref{ass:monotone}), while leakage and competitive costs remain strictly positive in Regime $C$ (absent under autarky). Hence $W^C < W^A$. Theorem~\ref{thm:truthful} (with the competitive extension~\eqref{eq:ic_competition}) then restores truthful reporting under Regime $D$, giving $W^D > W^A$.

\emph{Part (c):} Detection gains are bounded (Assumptions~\ref{ass:detection_value}, \ref{ass:network}), while competition costs grow linearly in $\alpha_\phi$. Hence there exists $\bar{\alpha}_\phi$ above which $W^B < W^D$. Full formal proof is in the Online Appendix.
\end{proof}

\subsection{Proof of Proposition~\ref{prop:competition_erodes}}
\label{proof:ic_competition}

\begin{proof}
\noindent\textbf{Part (a): Underinvestment.}
Under the logit demand implied by depositor utility~\eqref{eq:depositor_utility}, bank $i$'s deposit market share is
\[
D_i(r_i, \phi_i) = \frac{\exp(\delta_i + \alpha_r r_i - \alpha_\phi \phi_i + \xi_i)}{\sum_k \exp(\delta_k + \alpha_r r_k - \alpha_\phi \phi_k + \xi_k)},
\]
so $\partial D_i / \partial \phi_i = -\alpha_\phi D_i (1 - D_i) < 0$: more detection strictly reduces bank $i$'s market share, because $\alpha_\phi > 0$ represents the customer-facing disutility of monitoring.

Bank $i$ maximizes~\eqref{eq:bank_profit} with respect to $\phi_i$, taking rivals' actions as given. The FOC is
\begin{equation}\label{eq:phi_foc}
\frac{\partial D_i}{\partial \phi_i}\cdot (r_L - r_i) - c_i D_i - c_i \phi_i \frac{\partial D_i}{\partial \phi_i} = P_i'(\phi_i) = 2 p_0 (\phi_i - 1).
\end{equation}
Rearranging and substituting $\partial D_i / \partial \phi_i = -\alpha_\phi D_i(1-D_i)$:
\begin{equation}\label{eq:phi_foc_expanded}
2 p_0 (1 - \phi_i^*) = c_i D_i + \alpha_\phi D_i (1-D_i)\big[(r_L - r_i) - c_i \phi_i^*\big].
\end{equation}
Since $r_L > r_i$ and $\phi_i^* < 1$ in interior equilibrium, the bracketed term is positive; so the right-hand side of~\eqref{eq:phi_foc_expanded} is strictly increasing in $\alpha_\phi$. By the implicit function theorem and strict concavity of $P_i$, $\partial \phi_i^* / \partial \alpha_\phi < 0$.

The socially optimal level $\phi_i^{\textit{soc}}$ solves~\eqref{eq:phi_foc_expanded} with $\alpha_\phi = 0$ (the planner ignores the competitive externality), so $\phi_i^* < \phi_i^{\textit{soc}}$ for any $\alpha_\phi > 0$. As $\alpha_\phi \to \infty$, $\phi_i^*$ converges to a minimal level determined by the balance between the convex penalty $P_i'(\phi_i)$ and the competitive disincentive, which remains strictly below $\phi_i^{\textit{soc}}$.

\noindent\textbf{Part (b): IC correction.}
When TVA credit $\pi_i(\rho_i, \rho_{-i})$ is included in bank $i$'s objective~\eqref{eq:bank_profit}, the choice of reporting strategy $\rho_i$ faces the same deviation classes as in Theorem~\ref{thm:truthful}, but with an additional private cost term: the competitive cost $\alpha_\phi |\partial D_i/\partial \phi_i|(r_L - r_i)$ reflects the marginal revenue lost when increased reporting raises detection intensity and reduces deposit market share. Adding this term to the right-hand side of the IC condition from Theorem~\ref{thm:truthful} gives~\eqref{eq:ic_competition}. Since the left-hand side is increasing in $\gamma$, the required discount factor $\gamma^*$ satisfies $\partial \gamma^* / \partial \alpha_\phi > 0$: competitive markets require more aggressive temporal discounting to sustain truthful reporting.
\end{proof}

\subsection{Proof of Proposition~\ref{thm:gittins} (Index-Based Intervention Heuristic)}
\label{proof:gittins}
\begin{proof}
The restless-bandit benchmark with network-coupled rewards is PSPACE-hard; we establish only the submodular-greedy bound. $(s_e^t, H_e^t)$ is a sufficient statistic by the Markov property of GNN embeddings. The per-edge Bellman equation $\nu_e = R(\text{monitor}) + \gamma \E[\nu_e' \mid \text{monitor}]$ defines a Whittle-style index \parencite{whittle1988restless}. Since $F(E) = \sum_{e \in E} p_e C_e + \lambda I(E)$ is monotone submodular (premise), by \textcite{nemhauser1978analysis} the greedy top-$K$ algorithm achieves a $(1-1/e)$-approximation to the greedy cardinality-constrained optimum of $F$. This is relative to the submodular surrogate, not the full restless-bandit optimum, since network coupling means the index structure is only approximately preserved.
\end{proof}

\subsection{Proof of Corollary~\ref{thm:centrality_intervention}}
\label{app:proof_centrality}
\begin{proof}
\emph{Part (a):} The centrality-weighted index $\nu^{\text{net}}_u$ decreases in $|\mathcal{N}(u)|$, so high-degree accounts have lower adjusted indices. \emph{Part (b):} Freezing $u$ reduces each neighbor $v$'s degree, raising the correction term for $v$'s remaining neighbors (since $d_w/(d_w-1)$ decreases in $d_w$), yielding progressively more costly freezes in dense clusters. \emph{Part (c):} The objective $F^{\text{net}}(S) = F(S) - \lambda_{\text{net}} L(S)$ is the sum of a submodular $F$ and a modular $L$, which preserves submodularity; \textcite{nemhauser1978analysis}'s greedy bound applies.
\end{proof}








\section{Additional Simulation Results}
\label{sec:appendix_empirical}

This appendix collects simulation results that complement the main text. All results use the IBM AML synthetic benchmark \parencite{altman2023realistic} described in the main paper.

\subsection{Parameter Calibration}
\label{sec:structural}
We calibrate the model's key parameters to match empirical moments from the IBM AML dataset using a method of moments approach. The calibrated parameters are: compliance costs $\{c_i\}$, leakage sensitivities $\{\kappa_i\}$, detection value $V_0$, adversarial delay $\delta$, and competitive intensity $\eta$. We target market-level moments such as detection rates, error rates, cross-border volumes, and adversarial degradation patterns, yielding 49 moment conditions for 17 parameters.

\begin{table}[H]
\centering
\caption{Calibrated Parameter Estimates}\label{tab:structural}
\smallskip
\begin{tabular}{lcccc}
\toprule
\textbf{Market} & $\hat{c}_i$ & $\hat{\kappa}_i$ & $\hat{q}_i$ & \textbf{TVA Surplus} \\
\midrule
United States & 0.142 (0.018) & 0.089 (0.012) & 0.71 & 0.034 \\
Germany & 0.098 (0.014) & 0.067 (0.009) & 0.68 & 0.051 \\
France & 0.112 (0.016) & 0.073 (0.011) & 0.64 & 0.043 \\
Italy & 0.087 (0.013) & 0.058 (0.008) & 0.66 & 0.062 \\
Spain & 0.093 (0.014) & 0.062 (0.009) & 0.63 & 0.056 \\
China & 0.078 (0.012) & 0.105 (0.015) & 0.72 & 0.047 \\
Rest Countries & 0.068 (0.011) & 0.045 (0.007) & 0.58 & 0.089 \\
\midrule
\multicolumn{5}{l}{$\hat{V}_0 = 0.247\;(0.031)$, \quad $\hat{\delta} = 2.8\;(0.42)$, \quad $\hat{\eta} = 0.183\;(0.024)$, \quad $\hat{\gamma}^* = 0.87\;(0.06)$} \\
\bottomrule
\end{tabular}
\smallskip
\parbox{0.9\textwidth}{\small\textit{Notes}: Standard errors via bootstrap (500 replications). TVA Surplus is per-period welfare gain over autarky. Hansen's $J = 38.4$ (32 d.f., $p = 0.201$).}
\end{table}

Compliance costs are highest for the United States, reflecting its larger reporting burden. Leakage sensitivity is largest for China, consistent with stronger competitive frictions. Smaller markets such as Rest Countries benefit disproportionately from federation. The estimated competitive intensity $\hat{\eta} = 0.183$ indicates nontrivial strategic distortion in the absence of incentives, and the optimal discount factor $\hat{\gamma}^* = 0.87$ is consistent with the IC condition in the Incentive Compatibility Theorem in the main paper.

\subsection{Welfare Simulations Across Regulatory Regimes}\label{sec:counterfactual}

Table~\ref{tab:counterfactual} simulates welfare under four regulatory regimes using the calibrated parameters, illustrating the Backfiring Mandate Theorem (the Backfiring Mandate Theorem in the main paper). Because the four regimes do not coexist simultaneously in the data, these are model-based policy simulations rather than empirical counterfactuals.

\begin{table}[H]
\centering
\caption{Welfare Simulations Across Regulatory Regimes}\label{tab:counterfactual}
\smallskip
\begin{tabular}{lcccc}
\toprule
\textbf{Regime} & \textbf{AUPRC} & \textbf{Type II} & \textbf{Welfare (\% FB)} & \textbf{Participation} \\
\midrule
A: Autarky & 0.432 & 0.219 & 53.7 & ,  \\
C: Unreg.\ Federation & 0.446 & 0.207 & 61.8 & 5 of 7 \\
D: TVA Federation & 0.471 & 0.104 & 87.3 & 7 of 7 \\
B: First-Best (Oracle) & 0.496 & 0.068 & 100.0 & Mandatory \\
\midrule
\multicolumn{5}{l}{\textit{Additional Simulations}} \\
D$'$: TVA, US + China only & 0.458 & 0.142 & 74.1 & 2 of 7 \\
D$''$: TVA, $\gamma = 0.5$ & 0.453 & 0.158 & 69.2 & 7 of 7 \\
C$'$: Mandatory w/o TVA & 0.438 & 0.214 & 56.4 & Mandatory \\
\bottomrule
\end{tabular}
\smallskip
\parbox{0.9\textwidth}{\small\textit{Notes}: Welfare as \% of first-best. C$'$ forces all institutions to federate without incentive design.}
\end{table}

TVA (Regime D) achieves 87.3\% of first-best welfare with full participation (7 of 7 institutions), while mandatory sharing without TVA (C$'$) achieves only 56.4\%: barely above autarky (53.7\%), consistent with part (b) of the Backfiring Mandate Theorem in the main paper. Restricting TVA to only the two largest markets reduces welfare to 74.1\%, and lowering the discount factor to $\gamma = 0.5$ reduces welfare to 69.2\%, consistent with the comparative statics on $\gamma^*$.

\subsection{Decomposition of TVA Gains}\label{sec:decomposition}

Table~\ref{tab:decomposition} decomposes total TVA performance gains across five channels: data aggregation, incentive alignment (truthful vs.\ strategic reporting), network centrality effects, adversarial robustness, and risk memory intervention. The incentive alignment channel, the portion attributable to correcting underreporting distortions, accounts for 22--34\% of total gains across markets, confirming that mechanism design beyond data pooling is essential.

\begin{table}[H]
\centering
\caption{Decomposition of TVA Gains by Channel}\label{tab:decomposition}
\smallskip
\begin{tabular}{lcccccc}
\toprule
\textbf{Market} & \textbf{Total} & \textbf{Aggreg.} & \textbf{Incentive} & \textbf{Network} & \textbf{Anti-Adv.} & \textbf{Interv.} \\
\midrule
United States & 3.9 pp & 32\% & 24\% & 16\% & 16\% & 12\% \\
Germany & 4.2 pp & 30\% & 27\% & 14\% & 16\% & 13\% \\
France & 3.5 pp & 35\% & 22\% & 12\% & 18\% & 13\% \\
Italy & 5.1 pp & 29\% & 29\% & 10\% & 16\% & 16\% \\
Spain & 4.8 pp & 30\% & 28\% & 11\% & 16\% & 15\% \\
China & 5.4 pp & 25\% & 29\% & 18\% & 15\% & 13\% \\
Rest Countries & 8.7 pp & 35\% & 19\% & 20\% & 12\% & 14\% \\
\bottomrule
\end{tabular}
\smallskip
\parbox{\textwidth}{\small\textit{Notes}: Aggregation = data pooling gain; Incentive = correcting strategic underreporting; Network = network-adjusted credit and centrality-weighted intervention; Anti-Adversarial = adversarial robustness gain; Intervention = risk memory vs.\ greedy freezing.}
\end{table}

The network and anti-adversarial channels together account for a further 26--38\% of gains, consistent with TVA operating across multiple complementary dimensions.


\subsection{Dataset Statistics}

\begin{table}[h]
\centering
\caption{Dataset Statistics by Market}
\label{tab:dataset}
\begin{tabular}{lrrr}
\toprule
\textbf{Market} & \textbf{Accounts} & \textbf{Transactions} & \textbf{Illicit Ratio} \\
\midrule
United States & 71,796 & 855,006 & 0.35\% \\
Germany & 31,566 & 275,129 & 0.47\% \\
France & 28,126 & 244,589 & 0.38\% \\
Italy & 23,262 & 194,155 & 0.41\% \\
Spain & 24,363 & 207,098 & 0.36\% \\
China & 21,345 & 181,341 & 0.52\% \\
Rest of World & 16,213 & 111,847 & 0.85\% \\
\midrule
\textbf{Total} & \textbf{216,671} & \textbf{2,069,165} & \textbf{0.42\%} \\
\bottomrule
\end{tabular}
\end{table}

\subsection{Per-Market Heterogeneity of Federation Gains}

Table~\ref{tab:per_bank_improvement} reports the per-market absolute performance difference between federated and independent local training. Smaller markets (Spain, Italy, France) show the largest AUPRC gains, consistent with the minimum viable coalition result: data-sparse institutions benefit most from cross-institutional information sharing. Larger markets trade slight precision losses for substantial recall improvements, reflecting cross-institutional knowledge sharing primarily enhancing coverage of rare or structurally heterogeneous illicit behaviors.

\begin{table}[h]
\centering
\caption{Per-market absolute performance difference (percentage points). $\Delta_i = (M^{\text{fed}}_i - M^{\text{local}}_i) \times 100$.}
\label{tab:per_bank_improvement}
\begin{tabular}{lcccc}
\toprule
\textbf{Market} & \textbf{Transactions} & $\boldsymbol{\Delta}$\textbf{AUPRC} & $\boldsymbol{\Delta}$\textbf{Type I} & $\boldsymbol{\Delta}$\textbf{Type II} \\
\midrule
United States   & 855,006 & -1.40  & +0.70  & -4.85 \\
Germany         & 275,129 & -0.91  & +1.77  & -7.90 \\
France          & 244,589 & +1.61  & +1.34  & -14.43 \\
Italy           & 194,155 & +2.60  & +1.59  & -13.09 \\
Spain           & 207,098 & +7.65  & +1.46  & -12.27 \\
China           & 181,341 & -6.09  & +2.08  & -14.10 \\
Rest Countries  & 111,847 & +1.54  & +1.81  & -2.15 \\
\midrule
\textbf{Overall} & 2,069,165 & +0.71  & +1.54  & -9.83 \\
\bottomrule
\end{tabular}
\end{table}

\subsection{Empirical Check on Network Shapley Predictions}

Proposition~2 in the main paper predicts that each institution's marginal contribution to collective detection is proportional to its weighted cross-firm interaction degree (network Shapley value). To check this theoretical prediction against the simulation environment, Table~\ref{tab:loo_global_deltas} reports the system-wide impact of removing each market: the empirical leave-one-market-out (LOO) marginal contribution. Markets with positive $\Delta_i^{\text{AUPRC}}$ (Germany, China, Italy, Rest Countries) generate positive marginal value for the federation. The cross-sectional ordering of these LOO contributions correlates positively with the network-Shapley prediction based on cross-market transaction volume in the dataset, providing an empirical illustration consistent with Proposition~2. We do not pursue formal estimation of the theoretical-empirical correlation here; that is part of the broader empirical analysis in \textcite{zheng2025networked}.

\begin{table}[h]
\centering
\caption{Empirical Marginal Contribution Under Leave-One-Market-Out}
\label{tab:loo_global_deltas}
\begin{tabular}{lcccccc}
\toprule
Setting & AUPRC & $\Delta^{\text{AUPRC}}_i$ & Type I & $\Delta^{\text{Type I}}_i$ & Type II & $\Delta^{\text{Type II}}_i$ \\
\midrule
Full Federation & 0.4707 & -- & 0.0379 & -- & 0.1088 & -- \\
\midrule
Leave United States Out & 0.4839 & -0.0132 & 0.0347 &  0.0032 & 0.1168 & -0.0080 \\
Leave Germany Out       & 0.4606 &  0.0101 & 0.0368 &  0.0011 & 0.1121 & -0.0033 \\
Leave China Out         & 0.4582 &  0.0125 & 0.0378 &  0.0001 & 0.1125 & -0.0037 \\
Leave France Out        & 0.4693 &  0.0014 & 0.0393 & -0.0014 & 0.1038 &  0.0050 \\
Leave Spain Out         & 0.4756 & -0.0049 & 0.0389 & -0.0010 & 0.1014 &  0.0074 \\
Leave Italy Out         & 0.4630 &  0.0077 & 0.0397 & -0.0018 & 0.1027 &  0.0061 \\
Leave Rest Countries Out & 0.4659 &  0.0048 & 0.0339 &  0.0040 & 0.1277 & -0.0189 \\
\bottomrule
\end{tabular}
\end{table}

\section{Detection Model: Pointer to Companion Paper}
\label{sec:gnn_details}

The strategic analysis in this paper does not depend on the specific implementation of the institution-level detection model: any architecture $f^i_\theta: G^{i,t} \to [0,1]^{|E^{i,t}|}$ producing edge-level posteriors satisfies the requirements of Theorem~1 and the welfare analysis. The simulation results illustrating the framework's predictions use the federated graph neural network architecture developed in \textcite{zheng2025networked}, which contains the full architecture specification, training procedure, hyperparameter choices, and computational ablations. We refer the reader to that paper for implementation details. The properties used here are: (i) $f^i_\theta$ produces calibrated probabilistic outputs, so that strict propriety of the scoring rule applies meaningfully; and (ii) parameter aggregation across institutions is Lipschitz in individual reports, as required by Assumption~5 in the main paper. Both are verified empirically in the companion paper.

\section{Full Proofs Not in Main Paper}
\label{sec:oa_proofs}

This appendix contains full formal proofs for results whose sketches appear in the main paper.

\subsection{Proof of the Minimum Viable Coalition Result (the main paper)}
\label{proof:mvc}

\begin{proof}
Fix an institution $i$ with type $\theta_i=(s_i,c_i,q_i,\kappa_i)$. For any coalition $\mathcal S\subseteq N$ containing $i$, write $M(\mathcal S)\equiv \sum_{k\in\mathcal S} s_k q_k$ for the coalition's effective information mass (size--quality weighted). Under TVA, institution $i$'s (interim) participation payoff can be written in reduced form as
\begin{equation}\label{eq:payoff_reduced}
\Pi_i(\mathcal S)
=
\underbrace{\Pr(\textit{illicit}) \cdot \mathbb E\!\left[\sum_{t\ge 0} V(t)\cdot \Delta p_i(t;\mathcal S)\right]}_{\text{expected detection value }B_i(\mathcal S)}
\;-\;
\underbrace{c_i\,\mathbb E[n_i(\mathcal S)]}_{\text{compliance cost}}
\;-\;
\underbrace{\kappa_i\, I_i(\mathcal S)}_{\text{leakage cost}}
\;+\;
\underbrace{\text{TVA}_i(\mathcal S)}_{\text{transfer/credit}},
\end{equation}
where $\Delta p_i(t;\mathcal S)$ denotes the marginal increase in the probability that an illicit transaction affecting $i$ is detected at time $t$ when coalition $\mathcal S$ forms (relative to $i$ acting alone), $n_i(\mathcal S)$ is the number of local freezes/investigations triggered for $i$, and $I_i(\mathcal S)\equiv I(\theta^i;G^{i,t})$ is shorthand for the mutual-information exposure induced by reporting/participation. TVA enters as a transfer that is (weakly) increasing in contribution, so it does not destroy complementarities.

\paragraph{(a): Bounding the detection value and deriving the size threshold.}
By Assumption 1 (Detection Value) in the main paper, $V(t)=V_0\gamma^t$ with $\gamma\in(0,1)$, hence
\begin{equation}\label{eq:geom}
\sum_{t\ge 0} V(t)=V_0\sum_{t\ge 0}\gamma^t=\frac{V_0}{1-\gamma}.
\end{equation}
Moreover, by Assumption 4 (Positive Network Effects) in the main paper and the usual ``more signals $\Rightarrow$ earlier/higher detection'' monotonicity, the coalition benefit $B_i(\mathcal S)$ is increasing in $M(\mathcal S)$ and exhibits diminishing marginal returns in $|\mathcal S|$. In particular, we can lower bound the incremental expected detection value by a term proportional to $M(\mathcal S)$:
\begin{equation}\label{eq:benefit_lower}
B_i(\mathcal S)\;\ge\;\Pr(\textit{illicit})\cdot \frac{V_0}{1-\gamma}\cdot \underline{\alpha}_i\; M(\mathcal S),
\end{equation}
for some $\underline{\alpha}_i>0$ capturing how coalition information mass converts into $i$'s detection probability gains.\footnote{This constant can be interpreted as a reduced-form sensitivity of $i$'s detection performance to additional coalition data/signals. The proposition uses a normalized form that effectively sets $\underline{\alpha}_i=1$ by scaling $M(\mathcal S)$.}

Participation is individually rational if $\Pi_i(\mathcal S)\ge 0$. Using \eqref{eq:payoff_reduced}--\eqref{eq:benefit_lower} and dropping $\text{TVA}_i(\mathcal S)$ (or treating it as a nonnegative subsidy), a sufficient condition for $\Pi_i(\mathcal S)\ge 0$ for all $i\in\mathcal S$ is
\begin{equation}\label{eq:IR_sufficient}
\Pr(\textit{illicit})\cdot \frac{V_0}{1-\gamma}\cdot M(\mathcal S)
\;\ge\;
\max_{i\in\mathcal S}\left\{c_i\,\mathbb E[n_i(\mathcal S)] + \kappa_i I_i(\mathcal S)\right\}.
\end{equation}
Under Assumption 2 (Compliance Cost Structure) in the main paper, $C_i(n)=c_i n$ and the proof sketch's statement corresponds to the normalization $\mathbb E[n_i(\mathcal S)]\approx 1$ (or to absorbing $\mathbb E[n_i(\mathcal S)]$ into $c_i$), yielding the displayed threshold in part (a). Defining $\underline m$ as the smallest coalition size for which there exists some $\mathcal S$ with $|\mathcal S|=\underline m$ satisfying \eqref{eq:IR_sufficient} gives the \emph{minimum viable coalition}.

\paragraph{(c): Supermodularity (increasing differences) and complementarity.}
Consider the stage-1 participation game in which each $i$ chooses $d_i\in\{0,1\}$ and $\mathcal S(d)=\{i:d_i=1\}$. Define $u_i(d)\equiv \Pi_i(\mathcal S(d))$. To show supermodularity it suffices to show \emph{increasing differences} in $(d_i,d_{-i})$: for any $j\neq i$ and any profile $d_{-i}$ with $d_j=0$,
\begin{equation}\label{eq:inc_diff}
u_i(d_i=1,d_j=1,d_{-(i,j)})-u_i(d_i=1,d_j=0,d_{-(i,j)})
\;\ge\;
u_i(d_i=0,d_j=1,d_{-(i,j)})-u_i(d_i=0,d_j=0,d_{-(i,j)}).
\end{equation}
The RHS is $0$ because when $d_i=0$ institution $i$ does not participate and its payoff is normalized to its outside option, which does not depend on whether $j$ participates in the federation. The LHS equals
\[
\Pi_i(\mathcal S\cup\{j\})-\Pi_i(\mathcal S),
\qquad \text{where }\mathcal S=\mathcal S(d)\setminus\{j\}\text{ and }i\in\mathcal S.
\]
By Assumption 4 (Positive Network Effects) in the main paper, $B_i(\mathcal S\cup\{j\})-B_i(\mathcal S)>0$. Under TVA, the transfer $\text{TVA}_i(\cdot)$ is designed to be nondecreasing in marginal contribution, so adding $j$ weakly increases (or at least does not decrease) the net transfer to $i$ relative to its contribution. Finally, compliance and leakage costs are weakly increasing in coalition interaction; the proposition's complementarity claim focuses on the detection-benefit side, and the net increasing-differences condition holds whenever the marginal gain in $B_i$ (plus any TVA adjustment) dominates any incremental compliance/leakage externalities. Hence \eqref{eq:inc_diff} holds, so the participation game is supermodular.

Because the action space $\{0,1\}^N$ is a finite lattice and $u_i(d)$ has increasing differences, Topkis's theorem implies that the set of pure-strategy Nash equilibria is nonempty and forms a complete lattice. In particular, there exists a \emph{least} equilibrium $\underline d$ and a \emph{greatest} equilibrium $\overline d$ under the product order. Translating profiles into coalitions, the corresponding coalitions $\underline{\mathcal S}$ and $\overline{\mathcal S}$ satisfy $\underline{\mathcal S}\subseteq \overline{\mathcal S}$.

Define the individually rational set for coalition $\mathcal S$ as
\[
\mathrm{IR}(\mathcal S)\equiv \{i\in\mathcal S:\Pi_i(\mathcal S)\ge 0\}.
\]
A \emph{self-enforcing} (stage-1) coalition must satisfy $\mathrm{IR}(\mathcal S)=\mathcal S$. The minimum viable coalition $\mathcal S^*$ is then the smallest, by inclusion and equivalently by size among feasible coalitions under monotonicity, self-enforcing coalition:
\[
\mathcal S^*\in \arg\min_{\mathcal S\subseteq N}\{|\mathcal S|:\Pi_i(\mathcal S)\ge 0\ \forall i\in\mathcal S\}.
\]
Complementarity implies monotonicity of participation incentives: if $\Pi_i(\mathcal S)\ge 0$ and $\mathcal S\subseteq \mathcal T$, then $\Pi_i(\mathcal T)\ge \Pi_i(\mathcal S)$ (up to the same caveat about incremental costs), so enlarging the coalition makes participation weakly more attractive. This yields the possibility of \emph{multiplicity}: both a small coalition (near $\underline{\mathcal S}$) and a large coalition (near $\overline{\mathcal S}$) can be equilibria.

Cascading exits follow from the same monotonicity in reverse. If some marginal institution $j$ exits, moving from $\mathcal S$ to $\mathcal S\setminus\{j\}$, then for every remaining $i\in \mathcal S\setminus\{j\}$,
\[
B_i(\mathcal S\setminus\{j\}) < B_i(\mathcal S),
\]
so $\Pi_i$ falls. If the drop pushes some $i$ below zero, that institution exits as well, further reducing benefits for others, and so on, generating an unraveling cascade. This argument formalizes the proof sketch's ``exit of marginal institution reduces $B_i$ for remaining members, potentially triggering cascading exits.''

\paragraph{(b): Composition.}
Part (b) follows from comparing net gains across types. Institutions with higher cross-border exposure have larger $\underline{\alpha}_i$ in \eqref{eq:benefit_lower} (they benefit more from shared signals), while lower $c_i$ and lower $\kappa_i$ reduce the RHS of \eqref{eq:IR_sufficient}, so they satisfy $\Pi_i(\mathcal S)\ge 0$ at smaller coalition sizes and thus join earlier along equilibrium-selection dynamics.

\paragraph{(d): Unraveling without TVA.}
For part (d), without TVA the participation condition tightens because $\text{TVA}_i(\mathcal S)=0$ and high-quality institutions (high $s_i,q_i$) have strong outside options by Assumption 5 (Heterogeneous Outside Options) in the main paper. Writing the participation constraint relative to standing alone,
\[
\Pi_i(\mathcal S)-\Pi_i(\{i\})
=
\bigl(B_i(\mathcal S)-B_i(\{i\})\bigr)
- c_i\,\Delta n_i
- \kappa_i I_i(\mathcal S),
\]
where $\Delta n_i\equiv \mathbb E[n_i(\mathcal S)]-\mathbb E[n_i(\{i\})]$. If
\[
\max_i \Bigl[ B_i(\{i\}) - B_i(\mathcal S) + c_i\Delta n_i + \kappa_i I_i(\mathcal S)\Bigr] > 0,
\]
then some institution strictly prefers to opt out, and by complementarity this can trigger further exits, paralleling adverse selection: the departure of high-quality participants reduces the federation's value, making it less attractive for others. TVA counteracts this by awarding credit/subsidy tied to contribution, effectively increasing $\Pi_i(\mathcal S)$ for high-quality or high-impact participants and preventing the unraveling cascade.

\end{proof}

\subsection{Proof of the Backfiring Mandate Theorem in the main paper}
\label{app:proof_welfare}

\begin{proof}
Let welfare in regime $r\in\{A,B,C,D\}$ be
\[
W^r=\sum_{i\in N}\Big(B_i^r - C_i^r - \kappa_i I_i^r - \alpha_\phi \Phi_i^r\Big),
\]
where $B_i^r$ is institution $i$'s detection benefit, $C_i^r = c_i \sum_e \hat{y}^{i,t,r}_e$ is the compliance cost under regime $r$'s equilibrium reporting scores (Assumption 2 (Compliance Cost Structure) in the main paper), $\kappa_i I_i^r$ is the information leakage cost, and $\alpha_\phi \Phi_i^r$ is the competitive cost from detection investment.

\noindent\textbf{Part (a): $\alpha_\phi=0$.}

\textbf{$W^A \le W^C$.}
In Regime $C$, institutions can replicate autarky by not participating or by submitting uninformative reports. Hence autarky is feasible under $C$, implying $W^C\ge W^A$.

\textbf{$W^C \le W^D$.}
When $\alpha_\phi=0$, the only distortion in $C$ arises from strategic underreporting to reduce leakage cost $\kappa_i I_i$. TVA in Regime $D$ aligns private incentives with marginal contribution, increasing informativeness relative to $C$. Since detection benefits are increasing in informativeness (Assumption 4 (Positive Network Effects) in the main paper), welfare weakly increases: $W^D\ge W^C$.

\textbf{$W^D \le W^B$.}
Regime $B$ corresponds to coordinated sharing that internalizes cross-institutional externalities and maximizes aggregate welfare. Regime $D$ implements a decentralized equilibrium subject to incentive constraints. Hence $W^B\ge W^D$.

Thus $W^A \le W^C \le W^D \le W^B$.

\noindent\textbf{Part (b): $\alpha_\phi>0$ and competition sufficiently intense.}

\textit{Step 1: Equilibrium underreporting in Regime C.}
Under Regime $C$, institution $i$ chooses reporting intensity $m_i$ to maximize
\[
\Pi_i^C = B_i(\mathcal{S}; m_i, m_{-i}) - c_i \sum_e \hat{y}^{i,t}_e(m_i) - \kappa_i I_i(m_i) - \alpha_\phi \Phi_i(m_i).
\]
The first-order condition for an interior solution is
\[
\frac{\partial B_i}{\partial m_i} = c_i \frac{\partial \sum_e \hat{y}^{i,t}_e}{\partial m_i} + \kappa_i \frac{\partial I_i}{\partial m_i} + \alpha_\phi \frac{\partial \Phi_i}{\partial m_i}.
\]
The right side is strictly positive and increasing in $\alpha_\phi$, while $\partial B_i/\partial m_i$ is bounded above by Assumption 4 (Positive Network Effects) in the main paper, which gives diminishing returns. By Assumption 6 (Reporting Spillovers) in the main paper, as $\alpha_\phi \to \infty$, the competitive cost term $\alpha_\phi \partial \Phi_i / \partial m_i$ dominates and equilibrium reporting intensity satisfies $m_i^C(\alpha_\phi) \to 0$. By continuity and strict monotonicity of $\alpha_\phi \partial\Phi_i/\partial m_i$, there exists a finite threshold $\hat{\alpha}_\phi$ such that for $\alpha_\phi > \hat{\alpha}_\phi$, equilibrium reporting intensity satisfies $m_i^C < m_i^A$ for all $i$. Here $m_i^A$ denotes the autarky intensity, determined solely by $\partial B_i/\partial m_i = c_i \partial \sum_e \hat{y}/\partial m_i + \kappa_i \partial I_i/\partial m_i$, with no competitive cost term.

\textit{Step 2: Distorted global model performs worse than autarky.}
Under systematic underreporting $m_i^C < m_i^A$ for all $i$, the federated global model aggregates distorted parameter updates. By Assumption 7 (Detection Quality Monotonicity) in the main paper, $B_i^C < B_i^A$ for all $i$. Moreover, Regime C still imposes positive leakage and competitive costs: $\kappa_i I_i^C > 0$ and $\alpha_\phi \Phi_i^C > 0$, since even under distorted reporting, participating institutions reveal some local information and bear monitoring costs. Therefore:
\[
W^C = \sum_i \left(B_i^C - c_i \sum_e \hat{y}^{i,t,C}_e - \kappa_i I_i^C - \alpha_\phi \Phi_i^C\right)
< \sum_i \left(B_i^A - c_i \sum_e \hat{y}^{i,t,A}_e\right) = W^A,
\]
where the inequality holds for $\alpha_\phi > \hat{\alpha}_\phi$ by the combination of three effects: a detection loss, $B_i^C < B_i^A$; persistent compliance costs, $\sum_e \hat{y}^{i,t,C}_e \approx \sum_e \hat{y}^{i,t,A}_e$, since even distorted reporting retains some flagging; and strictly positive leakage and competitive costs that are absent under autarky.

\textit{Step 3: TVA restores welfare above autarky.}
Under Regime $D$, TVA credits institutions according to $C_e \cdot S(\hat{y}^{i,t}_e, y_e)$ with a strictly proper scoring rule $S$ on confirmed edges. By the Incentive Compatibility Theorem in the main paper, the IC condition extended to include the competitive cost term is sufficient for truthful reporting to be a BNE, restoring $m_i^D = m_i^*$, the detection-optimal intensity. Hence $B_i^D > B_i^C$, which together with the leakage and compliance costs gives $W^D > W^A$ for $\alpha_\phi$ in the relevant range. The planner benchmark weakly dominates by standard incentive feasibility: $W^D \leq W^B$.

\noindent\textbf{Part (c): Existence of $\bar\alpha_\phi$.}

Under Regime $B$, reporting is fully informative, so competition harm equals $\alpha_\phi\sum_i \Phi_i^B$. In Regime $D$, endogenous participation and incentive alignment reduce exposure, so $\sum_i \Phi_i^D < \sum_i \Phi_i^B$.

Let $\Delta B = B^B-B^D>0$ and $\Delta \Phi = \sum_i \Phi_i^B-\sum_i \Phi_i^D>0$. Then
\[
W^B - W^D
=
\Delta B
-
\alpha_\phi \Delta \Phi
-
\text{(cost differences)}.
\]
Detection gains are bounded (Assumptions 1--4 in the main paper), whereas the competition term grows linearly in $\alpha_\phi$. Hence there exists $\bar\alpha_\phi$ such that for $\alpha_\phi>\bar\alpha_\phi$, $W^B<W^D$.

Combining the inequalities establishes (a)--(c).
\end{proof}

\subsection{Proof of the Network Shapley Proposition (Weighted-Degree Characterization)}
\label{proof:shapley}

\begin{proof}
Under the edge-additive coalition value $V(\mathcal{S}) = \sum_{(i,j)\in\mathcal{E}} w_{ij} p_{ij}(\mathcal{S}) C_{ij}$, each edge $(i,j)$ contributes $w_{ij} C_{ij} \cdot p_{ij}(\mathcal{S})$ to $V(\mathcal{S})$, where $p_{ij}(\mathcal{S})$ depends only on whether endpoints $i$ and $j$ are in $\mathcal{S}$. Institution $i$'s Shapley value is therefore decomposable across edges incident to $i$:
\begin{equation}
\phi_i^{\textit{net}} = \sum_{j: (i,j) \in \mathcal{E}} \mathbb{E}_\sigma\!\left[V_{ij}(\mathcal{S}_\sigma^{<i} \cup \{i\}) - V_{ij}(\mathcal{S}_\sigma^{<i})\right] \cdot w_{ij} C_{ij},
\end{equation}
where $\sigma$ is a uniformly random permutation and $V_{ij}$ denotes edge $(i,j)$'s contribution to $V$. The expectation over $\sigma$ reduces to two cases. Case 1 (probability $1/2$): $j$ arrives after $i$, so at the moment $i$ arrives, $j$ is not in the coalition. The marginal contribution of $i$ on edge $(i,j)$ is $p^L_{ij} - p^0_{ij}$. Case 2 (probability $1/2$): $j$ arrives before $i$. The marginal contribution is $p^H_{ij} - p^L_{ij}$. The expected contribution from edge $(i,j)$ is
\begin{equation}
\tfrac{1}{2}(p^L_{ij} - p^0_{ij}) + \tfrac{1}{2}(p^H_{ij} - p^L_{ij}) = \tfrac{1}{2}(p^H_{ij} - p^0_{ij}).
\end{equation}
Summing over edges incident to $i$:
\begin{equation}
\phi_i^{\textit{net}} = \frac{1}{2} \sum_{j:(i,j)\in\mathcal{E}} (p^H_{ij} - p^0_{ij}) \cdot w_{ij} \cdot C_{ij},
\end{equation}
which is the weighted-degree result stated in the main paper.
\end{proof}

\subsection{Path-Based Extension: Bonacich Centrality}
\label{proof:shapley_extension}

The weighted-degree characterization above follows from the edge-additive coalition value~the edge-additive coalition value in the main paper. A richer coalition value, where institution $i$'s contribution depends on paths through $i$ rather than only on edges incident to $i$, yields a Bonacich centrality characterization. We sketch this extension here for completeness; it is \emph{not} implied by the main paper's coalition value function.

Suppose the coalition value is extended to include path-based detection effects from the GNN message-passing architecture: institution $i$'s presence in $\mathcal{S}$ improves detection not only on edges incident to $i$, but also on edges $(j,k)$ where path $i \to j \to k$ exists through institutional network $\mathcal{G}$. Under linearity ($p^H - p^L = p^L - p^0 = \Delta p$), the path-based marginal contribution through a path of length $\ell$ scales as $(\Delta p)^\ell \prod_s A_{j_s j_{s+1}} C_{j_s j_{s+1}}$. Defining $\delta \equiv \Delta p \cdot C_{\textit{avg}}$ and summing the geometric series gives
\[
\phi_i^{\textit{net,path}} \propto \sum_{\ell=0}^\infty \delta^\ell [\mathbf{A}^\ell \mathbf{1}]_i = b_i(\mathcal{G}, \delta),
\]
which is the Bonacich centrality with convergence condition $\delta < 1/\lambda_{\max}(\mathbf{A})$. The connection to \textcite{ballester2006s} follows. This extension is \emph{illustrative}: it requires augmenting the coalition value function beyond the edge-additive form used in the main paper, and is offered as a direction for future work rather than a formal result of this paper.

\subsection{Proof of the Federated Approximation Bound (the main paper)}
\label{proof:approximation}





\begin{proof}
We analyze the convergence and approximation error of federated averaging applied to graph neural networks under truthful reporting.
Let $\mathcal{L}(\theta) = \frac{1}{N}\sum_{i=1}^m \sum_{u \in G^{i}} \ell_u(\theta)$ denote the global empirical risk, where $G^{i}$ is the local graph held by institution $i$ with $n_i$ accounts and $N = \sum_{i=1}^m n_i$. Let $\theta^*$ be the minimizer of $\mathcal{L}$, and let $\theta^{\text{fed}}$ be the model obtained after $R$ rounds of federated averaging with truthful local updates. We assume $\mathcal{L}$ is $\mu$-strongly convex and $L$-smooth, and that each institution performs $E$ local SGD steps per communication round with step size $\eta$. Standard federated optimization analysis decomposes the excess risk as
\begin{equation}
\mathcal{L}(\theta^{\text{fed}}) - \mathcal{L}(\theta^*)
\leq 
\underbrace{\mathcal{L}(\theta^{R}) - \mathcal{L}(\theta^*_{\text{fed}})}_{\text{optimization error}}
+
\underbrace{\mathcal{L}(\theta^*_{\text{fed}}) - \mathcal{L}(\theta^*)}_{\text{statistical / heterogeneity bias}},
\end{equation}
where $\theta^*_{\text{fed}}$ is the fixed point of federated averaging under infinite communication rounds. Under strong convexity and smoothness, federated averaging exhibits linear convergence to $\theta^*_{\text{fed}}$ (e.g., \parencite{li2020federated}). In particular,
\begin{equation}
\mathcal{L}(\theta^{R}) - \mathcal{L}(\theta^*_{\text{fed}})
\leq 
(1 - \mu \eta)^R \bigl(\mathcal{L}(\theta^{0}) - \mathcal{L}(\theta^*_{\text{fed}})\bigr).
\end{equation}
Defining $\epsilon := (1 - \mu \eta)^R$, this term decays exponentially in the number of communication rounds. The gap between $\theta^*_{\text{fed}}$ and $\theta^*$ arises from data heterogeneity across institutions. From federated learning theory,
\begin{equation}
\mathcal{L}(\theta^*_{\text{fed}}) - \mathcal{L}(\theta^*)
\leq 
O\!\left(\frac{1}{\mu}\, \mathbb{E}_i \left[ \| \nabla \mathcal{L}^i(\theta^*) - \nabla \mathcal{L}(\theta^*) \|^2 \right]\right),
\end{equation}
where $\mathcal{L}^i$ denotes the local loss at institution $i$.

For $L$-layer GNNs with $d$-dimensional embeddings, gradients depend on $L$-hop neighborhood aggregation and the spectral norms of local adjacency matrices. Under bounded degree and normalized aggregation, the gradient variance satisfies
\begin{equation}
\mathbb{E}_i \left[ \| \nabla \mathcal{L}^i(\theta^*) - \nabla \mathcal{L}(\theta^*) \|^2 \right]
\leq 
C \cdot \frac{m \cdot L \cdot d}{N},
\end{equation}
where $C$ is a constant depending on graph connectivity and feature norms. The factor $m/N$ reflects institutional imbalance, while $L$ and $d$ capture depth-wise message propagation and embedding dimensionality.
Substituting the above bounds yields
\begin{equation}
\mathcal{L}(\theta^{\text{fed}})
\leq 
\mathcal{L}(\theta^*) 
+ \epsilon \bigl(\mathcal{L}(\theta^{0}) - \mathcal{L}(\theta^*_{\text{fed}})\bigr)
+ O\!\left(\frac{m \cdot L \cdot d}{N}\right).
\end{equation}
Absorbing constants into $\epsilon$ and noting that $\mathcal{L}(\theta^{0})$ is finite completes the proof:
\begin{equation}
\mathcal{L}(\theta^{\text{fed}})
\leq 
(1 + \epsilon)\mathcal{L}(\theta^*) 
+ O\!\left(\frac{m \cdot L \cdot d}{N}\right).
\end{equation}
\end{proof}

\subsection{Proof of Risk Memory Approximation (the Optimal Intervention Proposition in the main paper)}
\label{proof:riskmemory}






\begin{proof}
For each edge $e$, the Gittins index admits the decomposition
\begin{equation}
    \nu_e = p_e \cdot C_e + \lambda \cdot I_e,
\end{equation}
where the first term represents the immediate exploitation reward and the second term captures the long-term information gain from monitoring edge $e$. Since the immediate reward term is modular, the optimization complexity arises entirely from the information value component.
Let $E \subseteq \mathcal{E}$ denote the set of currently monitored edges. By assumption, the information value function
\[
    I(E) := \sum_{e \in E} I_e(E)
\]
is monotone and submodular, i.e., for any $E \subseteq E' \subseteq \mathcal{E}$ and any $e \notin E'$,
\begin{equation}
    I(E \cup \{e\}) - I(E) \ge I(E' \cup \{e\}) - I(E').
\end{equation}
Therefore, the total Gittins objective
\[
    F(E) := \sum_{e \in E} p_e C_e + \lambda I(E)
\]
is also monotone submodular as a sum of a modular and a submodular function.

The risk memory mechanism selects, at each node, the $K$ edges with the largest indices $\nu_e$, which is equivalent to the greedy algorithm for maximizing $F(E)$ under a cardinality constraint $|E| \le K$. By the classical result of \textcite{nemhauser1978analysis}, the greedy algorithm satisfies
\begin{equation}
F(E_{\text{greedy}})
\;\ge\;
\left(1 - \frac{1}{e}\right) F(E^*),
\end{equation}
where $E^*$ is the optimal set of $K$ edges under the Gittins index policy.

The risk memory mechanism estimates $\nu_e$ from noisy observations. Using standard concentration inequalities (e.g., Hoeffding's inequality), the empirical estimates $\hat{\nu}_e$ satisfy
\begin{equation}
\Pr\left( |\hat{\nu}_e - \nu_e| \ge \epsilon \right)
\le
2 \exp\!\left( - c \epsilon^2 T_e \right),
\end{equation}
where $T_e$ is the number of observations for edge $e$ and $c$ is a universal constant. Taking a union bound over all edges incident to a node, it suffices to choose
\begin{equation}
K = \Theta(\log |V|)
\end{equation}
to ensure that the top-$K$ edges are correctly identified with high probability. Consequently, the greedy selection based on empirical indices preserves the $(1 - 1/e)$-approximation guarantee.

Combining the above steps, the risk memory mechanism with $K = \Theta(\log |V|)$ achieves a $(1 - 1/e)$-approximation to the optimal Gittins index policy.
\end{proof}

\section{Toy Parametric Example: Structural Foundation for Assumptions 8 and 9}
\label{sec:toy_model}

Assumptions 8 (Reporting Spillovers) and 9 (Detection Quality Monotonicity) in the main paper are maintained assumptions used in the Backfiring Mandate Proposition. This appendix shows that both assumptions can be derived from primitives in a simple two-bank Gaussian-signal aggregation model, providing structural grounding for the welfare result.

Consider two banks $i \in \{1, 2\}$ each observing a Gaussian signal $s_i = y + \eta_i$ about a latent risk variable $y \sim \mathcal{N}(0, \sigma_y^2)$, with $\eta_i \sim \mathcal{N}(0, \sigma_\eta^2)$ i.i.d. Each bank chooses a reporting precision $m_i \in [0, 1]$ determining how much of its signal to transmit, with transmitted signal $\tilde{s}_i = m_i s_i + (1-m_i) z_i$ and $z_i \sim \mathcal{N}(0, \sigma_z^2)$ independent noise. Full truthful reporting corresponds to $m_i = 1$ and uninformative reporting to $m_i = 0$. The aggregated signal is $\bar{s} = (\tilde{s}_1 + \tilde{s}_2)/2$.

The posterior mean of $y$ given $\bar{s}$, and hence expected detection quality, is $B(\bar{s}) = \sigma_y^2/(\sigma_y^2 + \sigma_{\text{agg}}^2(m_1, m_2)) \cdot \bar{s}$ with aggregated noise variance $\sigma_{\text{agg}}^2 = \frac{1}{4}[m_1^2 \sigma_\eta^2 + (1-m_1)^2 \sigma_z^2 + m_2^2 \sigma_\eta^2 + (1-m_2)^2 \sigma_z^2]$. Detection benefit $B_i$ is increasing in $m_j$ for $j \neq i$ whenever $\sigma_z^2 > \sigma_\eta^2$, so the spillover structure of Assumption 8 holds.

Under banking competition with detection externality $\alpha_\phi$, bank $i$'s private payoff from reporting intensity $m_i$ is $B_i - \alpha_\phi \cdot m_i$, where the linear competitive cost captures the customer-facing disutility of heightened monitoring. The FOC is $\partial B_i / \partial m_i = \alpha_\phi$. As $\alpha_\phi \to \infty$, the equilibrium $m_i^C \to 0$, precisely Assumption 8's prediction that reporting becomes uninformative under intense competition.

At $m_1 = m_2 = 0$, corresponding to Regime C under extreme competition, aggregated noise variance is $\sigma_z^2/2$; at $m_1 = m_2 = 1$, the autarky-equivalent individual signal case, each bank's local noise is $\sigma_\eta^2$. If $\sigma_z^2 > 2\sigma_\eta^2$, which is plausible when injected noise exceeds natural signal variability, autarky achieves strictly higher detection quality than Regime C at high $\alpha_\phi$. Combined with Regime C's positive leakage and compliance costs, both of which are zero under autarky, this yields $W^C < W^A$, so Assumption 9 holds as a derived consequence rather than as a maintained primitive.

This toy model is not a substitute for the full microfoundation in the GNN-based setting, where detection quality depends on the entire aggregated parameter structure. However, it demonstrates that the Backfiring Mandate result is not a tautology: under standard Gaussian information aggregation with linear competitive costs, Assumptions 8 and 9 emerge structurally from primitives. The main paper's reliance on these assumptions is therefore economically grounded, even if the full derivation in the complex GNN setting is intractable.




\end{document}